\definecolor{DarkPurple}{HTML}{332288}
\definecolor{DarkBlue}{HTML}{6699CC}
\definecolor{LightBlue}{HTML}{88CCEE}
\definecolor{DarkGreen}{HTML}{117733}
\definecolor{DarkRed}{HTML}{661100}
\definecolor{LightRed}{HTML}{CC6677}
\definecolor{LightPink}{HTML}{AA4466}
\definecolor{DarkPink}{HTML}{882255}
\definecolor{LightPurple}{HTML}{AA4499}
\definecolor{DarkBrown}{HTML}{604c38}
\definecolor{DarkTeal}{HTML}{23373b}
\definecolor{LightBrown}{HTML}{EB811B}
\definecolor{LightGreen}{HTML}{14B03D}
\definecolor{DarkOrange}{HTML}{FFDD00}
\pgfplotsset{width=1.0\textwidth,
  height=0.4\textwidth,
  cycle list={%
    solid,LightGreen,thick\\%
    dotted,LightRed,very thick\\%
    dashed,DarkBlue,thick\\%
    dashdotted,DarkPink,thick\\%
    dashdotdotted,LightGreen,thick\\%
    loosely dotted,very thick\\%
    loosely dashed,DarkBlue,very thick\\%
    loosely dashdotted,DarkPink,very thick\\%
    \\%
    DarkBrown,thick\\%
  },
  legend pos=north west,
  legend style={fill=none},
  legend cell align={left}}
\def\isanonymous{0}
\newcommand{\anonymous}[2]{
\ifthenelse{\equal{\isanonymous}{1}}%
{{#1}}%
{{#2}}%
}%
\definecolor{oxygenorange}{HTML}{FFDD00}
\DeclarePairedDelimiter{\bracket}{[}{]}
\DeclareMathOperator{\cL}{\mathcal{L}}
\DeclareMathOperator{\LWE}{LWE}
  \newcommand{\mat}[1]{\ensuremath{\mathbf{#1}}\xspace}
\renewcommand{\vec}[1]{\ensuremath{\mathbf{#1}}\xspace}
\newcommand{\Exp}[2]{\mathbb{E}_{#1}\bracket*{#2}} 	%Expectation
\DeclareMathOperator{\polylog}{polylog}
\newcommand{\Prob}[2]{\Pr_{#1}\bracket*{#2}}
\newcommand{\round}[1]{\ensuremath{\left\lfloor{#1}\right\rceil}\xspace}
\newcommand{\DGauss}[2]{D_{#1\ifthenelse{\equal{#1}{}}{}{,#2}}}
\newcommand\numberthis{\addtocounter{equation}{1}\tag{\theequation}}
\renewcommand{\epsilon}{\varepsilon}
\newcommand{\rest}{\mathrm{lat}}
\newcommand{\fft}{\mathrm{fft}}
\newcommand{\enum}{\mathrm{enum}}
\lstdefinelanguage{Sage}[]{Python}{morekeywords={True,False,sage,cdef,cpdef,ctypedef,self},sensitive=true}
\newif\ifeprint
\title{Quantum Augmented Dual Attack}
\author{Martin R. Albrecht \and Yixin Shen}
\institute{Royal Holloway, University of London. \\ \texttt{\{martin.albrecht,yixin.shen\}@rhul.ac.uk}}
\author{}
\institute{}
\begin{document}

\maketitle

\begin{abstract}
  We present a quantum augmented variant of the dual lattice attack on the Learning with Errors (LWE) problem, using classical memory with quantum random access (QRACM). Applying our results to lattice parameters from the literature, we find that our algorithm outperforms previous algorithms, assuming unit cost access to a QRACM\@.
  On a technical level, we show how to obtain a quantum speedup on the search for Fast Fourier Transform (FFT) coefficients above a given threshold by leveraging the relative sparseness of the FFT and using quantum amplitude estimation. We also discuss the applicability of the Quantum Fourier Transform in this context. Furthermore, we give a more rigorous analysis of the classical and quantum expected complexity of guessing part of the secret vector where coefficients follow a discrete Gaussian (mod \(q\)).
\end{abstract}

\keywords{Learning with Errors, Dual attack, Fast Fourier Transform, Quantum algorithms, Amplitude Estimation}

\section{Introduction}

The Learning With Errors (LWE) problem was introduced by Regev~\cite{Regev05} and has since become a major ingredient for constructing basic and more advanced cryptographic primitives. It asks to find \(\vec{s}\) given \((\mat{A},\vec{b})\) with \(\vec{b} \equiv \mat{A}\cdot \vec{s} + \vec{e} \bmod q\) where both \(\vec{s}\) and \(\vec{e}\) have small entries. Its conjectured hardness against quantum computers further makes all these constructions supposedly post-quantum. In NIST's Post Quantum Standardization Process, two out of four selected algorithms rely on the conjectured hardness of algebraic variants of Learning With Errors~\cite{AC:SSTX09,EC:LyuPeiReg10} problem.

From the perspective of a cryptanalyst equipped with a quantum computer, lattice problems such as LWE are frustrating. Known quantum speedups to solve these problems~\cite{PQCRYPTO:LaaMosvan13,PhD:Laarhoven15,AC:ChaLoy21} are tenuous at best~\cite{AC:AGPS20}. That is, while Grover's search offers a near quadratic quantum speedup for breaking, say, AES~\cite{EC:JNRV20} the gains against lattice problems are significantly more modest. This is due to the rich structure of the search space in lattice reduction algorithms that has given rise to refined structured search algorithms for these problems, e.g.~\cite{SODA:BDGL16}. As a consequence of this, the current state-of-the-art is that quantum algorithms on lattice problems can effectively be ignored when setting parameters.

The most efficient cryptanalysis techniques against LWE(-like) problems are ``primal'' and ``dual'' lattice attacks, named after whether lattice reduction is performed on the ``primal'' lattice related to \(\mat{A}\) or the ``dual'' lattice related to \(\{\vec{x} \in \ZZ_q^{m} \mid \vec{x} \cdot \mat{A} \equiv \vec{0} \bmod q\}\). Up until recently, dual attacks were generally considered less efficient for secrets \(\vec{s}\) drawn from a sufficiently wide distribution. Recent developments~\cite{AC:GuoJoh21,Matzov22} of dual attacks, however, have shown their ability to surpass primal attacks. These performance improvements are derived from combining lattice reduction on the scaled dual of a target lattice with an exhaustive search on a space related to the underlying secret \(\vec{s}\).  Roughly speaking, spending more resources on the exhaustive search part allows us to spend fewer resources on the lattice reduction part of the overall algorithm and vice versa.

In~\cite{AC:GuoJoh21} the search over part of the secret vector is realised using a Fast Fourier Transform style algorithm and the search space is significantly reduced by roughly considering only the most significant bits of this part of the secret. In~\cite{Matzov22} this last step is replaced by ``modulus switching''~\cite{FOCS:BraVai11,PKC:AFFP14,C:KirFou15,C:GuoJohSta15} which further provides significant performance gains. Overall, these newer iterations of the dual attack relate the search space to the underlying secret in such a way that large dimensions can now be covered even when the norm of the secret vector is not very small (previous versions of the dual attack relied on, say, coefficients \(\mathbf{s}_{i} \in \{-1,0,1\}\)).

Thus, with this new generation of dual attacks, unstructured search starts again to play a bigger role in costing attacks on LWE\@. It is therefore natural to ask what performance gains can be obtained by tackling this unstructured search using a Grover-like quantum algorithm.
More precisely,~\cite{Matzov22} relies on two different kinds of unstructured search:
\begin{itemize}
\item Secret guessing: part of the secret is exhaustively searched until a match is found. Since the secret is generated according to a discrete Gaussian of small width, a significant speedup can be obtained by starting the search with the most likely values of the secret first. The expected complexity of this step is known as the guessing complexity.
\item FFT threshold: given a list of values in a $n$-dimensional array, and a threshold, the problem is to decide whether one of the coefficients of the Fourier transform of the array is above the threshold. This problem arises when trying to determine whether the secret guess was correct by distinguishing between a uniform distribution and a Gaussian one.
\end{itemize}

%In this paper we study the quantum

\subsubsection{Contributions.}

After some preliminaries in \cref{sec:preliminaries}, we provide a quantum version of the dual attack of~\cite{Matzov22}. Specifically, our improvements are twofold.

In \cref{sec:guessing} we give a more rigorous analysis of the (classical and quantum) expected complexity of guessing a vector (whose coefficients are) drawn from a modular discrete Gaussian. In~\cite{Matzov22}, the authors estimated this complexity as the exponential of the entropy which is known not to be correct in general~\cite{Massey94}. We show that this complexity is indeed related to the entropy in the case of a (modular) discrete Gaussian, albeit up to an exponential factor in the dimension.

In \cref{sec:quantum-mean-estimation} we show how to obtain a quantum speedup on the search for Fast Fourier Transform (FFT) coefficients above a given threshold. This was left as an open problem in~\cite{Matzov22}. Here, we leverage the relative sparseness of the FFT and use amplitude estimation to estimate the Fourier coefficients.

In \cref{sec:quantum_dual_attack} we provide and analyse a quantum augmented dual attack utilising our guessing and mean estimation results.

In \cref{sec:application}, we then estimate the impact of our algorithm on the cost of solving instances of lattice-based schemes with parameters taken from the literature. We will refer to such parameters as ``lattice parameters'' going forward. Following the literature, we evaluate the complexity of our algorithm under the assumption of unit-cost access to a classical memory with quantum random access (QRACM).\footnote{\label{fn:bug}A previous version of this work reported wrong estimates due to a bug in our estimation code, which did not match the theorems given in the written report.} Interestingly, our optimisation routine produces attack parameters that skip the FFT step (``\(k_{\fft} = 0\)'') in favour of a more costly secret guessing stage (``\(k_{\enum} > 0\)''). Looking ahead, we speculate that this is because we get a super quadratic speedup on the enumeration part and only a quadratic speedup on \(p^{k_{\fft}}\). If \(2^{H(\chi_s)}\) is sufficiently smaller than \(p\) then it is better to enumerate the entire first \(n-k_{\rest}\) coordinates;  a case we are in often for practical parameters.

In \cref{sec:open_problem}, we discuss the FFT threshold problem and its quantum complexity.
Any significant speedup on this problem would yield major improvements in the complexity
of the dual attack. We argue that the Quantum Fourier Transform (QFT) does not seem applicable
in this context, despite being the natural approach.

\subsubsection{Discussion.}\label{sec:interpreting} On the one hand, we analyse our algorithm in the same ``cost model'' as prior work such as~\cite{PQCRYPTO:LaaMosvan13,PhD:Laarhoven15,AC:ChaLoy21} and in this cost model we obtain mild but noticeable speed-ups over previous work. These are mostly derived from our new algorithm for mean estimation and its composition with other known quantum algorithms from the literature such' as Grovers search for a secret and quantum versions of lattice-reduction attacks.

On the other hand, the cost model adopted here and prior work assume unit cost for quantum operations including accessing classical RAM in superposition (QRACM). As discussed in e.g.~\cite{AC:AGPS20}, this is a very strong assumption.

Furthermore, even in this model our algorithm falls far short of the quadratic speed-up we would need to ``force'' a revision of lattice parameters. This is because the resistance of post-quantum algorithms to quantum computers is routinely, e.g.~in the NIST PQC Standardization Process, compared to that of the AES family of block ciphers (or other symmetric-key primitives). Here, the state of the art is that AES-\(\lambda\) resists classical attacks of cost \(\approx 2^{\lambda}\) and quantum attacks of cost \(\approx 2^{\lambda/2}\), the latter being due to Grover's algorithm, see~\cite{EC:JNRV20} for more detailed cost estimates. Thus, parameters for post-quantum schemes are chosen such that they resist known classical attacks of cost \(\approx 2^{\lambda}\) and known quantum attacks of cost \(\approx 2^{\lambda/2}\) and any quantum algorithm with complexity \(\gg 2^{\lambda/2}\) will not affect the claimed security level. Our algorithm has cost \(\gg 2^{\lambda/2}\), whatever the cost model.

\section{Preliminaries}\label{sec:preliminaries}

Recall that \(e^{ix} = \cos (x) + i\, \sin(x)\). For any $z\in \CC$, we write $\Re(z)$ for its real part.  We write \([x,y]\) for the interval \(\{x, x+1, \ldots, y\} \subset \ZZ\). We denote matrices by bold uppercase letters, e.g.~\(\mat{A}\), and vectors by bold lowercase letters, e.g.~\(\vec{v}\). We treat vectors as column matrices. We write \(\vec{v}^{T}\) for the transpose of \(\vec{v}\).

For any $x\in\ZZ_q$, denote by $\tilde{x}\in x+q\ZZ$ the unique integer such that $|\tilde{x}|\leqslant\tfrac{q-1}{2}$. We extend this notion to vectors in $\vec{x}\in\ZZ_q^n$ componentwise. In other words, $\tilde{\vec{x}}$ is the lift from $\ZZ_q$ to $\ZZ$ centered on $0$. We define \(\norm{\vec{x}}\) for \(\vec{x} \in \ZZ_{q}^{n}\) as \(\norm{\tilde{\vec{x}}}\).

\subsection{Lattices}\label{sec:lattices}

A lattice \(\cL\) is a discrete subgroup of \(\RR^{d}\). We can represent it as \(\{\sum x_{i} \cdot \vec{b}_{i} | x_{i} \in \ZZ\}\) where \(\vec{b}_{i}\) are the columns of a matrix \(\mat{B}\), we may write \(\cL(\mat{B})\). If \(\mat{B}\) has full column rank, we call \(\mat{B}\) a basis.

While the central object of this work, the dual attack, critically relies on lattice reduction, such as the BKZ algorithm, we mostly make blackbox use of these algorithms here. Thus, we refer the reader to e.g.~\cite{AC:GuoJoh21,Matzov22} for details. In particular, the blackbox use we make of lattice reduction algorithms and, critically, lattice sieving algorithms is captured in \cref{alg:short_vectors_sampling}.

\begin{algorithm}[h]
    \KwIn{A basis $\mat{B} =\begin{bmatrix}\vec{b}_0&\ldots&\vec{b}_{d-1}\end{bmatrix}$ for a lattice and $2 \leq \beta_0,\beta_1 \in \ZZ \leqslant d$ and $D$.}
    \KwOut{A list of $D$ vectors from the lattice.}
     \(L\)=\{\}.\;
    \For{\(i \in [0, \lceil D/N_{\mathrm{sieve}}(\beta_1)\rceil -1]\)}{
    Randomise the basis \(\mat{B}\).\;
    Run BKZ-\(\beta_{0}\) to obtain a reduced basis \(\vec{b}'_{0}, \ldots, \vec{b}'_{d-1}\).\;
    Run a sieve in dimension \(\beta_{1}\) on the sublattice spanned by \(\vec{b}'_{0}, \ldots, \vec{b}'_{\beta_1-1}\) to obtain a list of \(N_{\mathrm{sieve}}(\beta_1)\) vectors and add them to \(L\).\;
  }
  \Return{\(L\)}
  % Do the same as in the classical case in~\cite{Matzov22} by replacing all the sieving steps by their quantum versions such as.\;
  \caption{Short Vectors Sampling Procedure~\cite{AC:GuoJoh21}}\label{alg:short_vectors_sampling}
\end{algorithm}

In \cref{alg:short_vectors_sampling} the BKZ-\(\beta_{0}\) call performs lattice reduction with parameter \(\beta_{0}\) where the cost of the algorithm scales at least exponentially with \(\beta_{0}\). The BKZ algorithm proceeds by making polynomially many calls to an SVP oracle. In this work, this oracle is instantiated using a lattice sieving algorithm which is also called explicitly in \cref{alg:short_vectors_sampling} with parameter \(\beta_{1}\). Such a sieving algorithm outputs \(N_{\mathrm{sieve}}(\beta_1)\) short vectors in the lattice \(\cL(\mat{B})\) and has a cost exponential in \(\beta_{1}\). The magnitude \(N_{\mathrm{sieve}}(\beta_1)\) also grows exponentially with \(\beta_{1}\) but slower than the cost of sieving.
We will write \(T_{\mathrm{BKZ}}(d,\beta_0)\) for the cost of running BKZ-\(\beta_{0}\) in dimension \(d\) and \(T_{\mathrm{sieve}}(\beta_1)\) for the cost of sieving in dimension \(\beta_{1}\). We may instantiate the lattice sieve with a classical algorithm~\cite{SODA:BDGL16} which has a cost of \(2^{0.292\,\beta_1 + o(\beta_1)}\). We may also instantiate the lattice sieve with a quantum augmented variant of sieving~\cite{PQCRYPTO:LaaMosvan13,PhD:Laarhoven15,AC:AGPS20,AC:ChaLoy21} which has a cost of \(2^{0.257\, \beta_1 + o(\beta_1)}\). Thus, according to the best known algorithms we have  \(T_{\mathrm{BKZ}}(d,\beta_0) \in \poly[d] \cdot 2^{\Theta(\beta_0)}\) and \(T_{\mathrm{sieve}}(\beta_1) \in 2^{\Theta(\beta_1)}\).

% We follow the notations in \cite[Section~2.1]{Matzov22}. We use standard definitions for lattices.
% % We denote matrices by uppercase letters, e.g. $\mat{A}$, and vectors by lowercase letters, e.g. $\vec{v}$.
% % We treat vectors as column matrices.

\subsection{Learning with Errors}
The Learning with Errors problem (LWE) is defined as follows.

\begin{definition}[LWE]\label{def:lwe}
  Let $n,m,q \in \NN$, and let $\chi_s,\chi_e$ be distributions over $\ZZ_q$. Denote by $\LWE_{n,m,\chi_s,\chi_e}$ the probability distribution on $\ZZ_q^{m\times n}\times \ZZ_q^m$ obtained by sampling the coordinates of the matrix $\mat{A} \in \ZZ_q^{m\times n}$ independently and uniformly over $\ZZ_q$, sampling the coordinates of $\vec{s} \in \ZZ_q^n$, $\vec{e} \in \ZZ_q^m$ independently from $\chi_s$ and $\chi_e$ respectively, setting \(\vec{b} \coloneqq \mat{A}\cdot \vec{s}+\vec{e} \bmod q\) and outputting $(\mat{A},\vec{b})$.
\end{definition}

We define two problems:
\begin{itemize}
    \item Decision-$\LWE$. Distinguish the uniform distribution over $\ZZ_q^{m\times n}\times \ZZ_q^m$ from
$\LWE_{n,m,\chi_s,\chi_e}$.
    \item Search-$\LWE$. Given a sample from $\LWE_{n,m,\chi_s,\chi_e}$, recover $\vec{s}$.
\end{itemize}

% \begin{definition}[LWE]
%     The \emph{LWE problem} $\LWE(q,m,\mathcal{E})$ of parameter $m(\cdot)$, $q(\cdot)$ and $\mathcal{E}(\cdot)$
%     is defined as follows. Given $n\in\NN$, a random matrix $A\in \ZZ_{q(n)}^{m(n)\times n}$,
%     and $b=As+e\mod q(n)$ where $s\in\ZZ_{q(n)}^n$ is uniformly distributed and $e\in\RR^n$ where $e_i$ is a random error distributed according to $\mathcal{E}(n)$, the goal is to recover $s$.
% \end{definition}

\subsubsection{Dual Attack.}\label{sec:dual-attack}

Dual-lattice attacks, or simply ``dual attacks'', on LWE and related problems were introduced in~\cite{PQCBook:MicReg09}. In its simplest form it proceeds as follows. Given either \((\mat{A},\mat{A}\cdot \vec{s} + \vec{e})\) or \((\mat{A}, \vec{u})\) where \((\mat{A},\vec{u})\) are uniform and w.l.o.g \(\vec{s},\vec{e}\) are short~\cite{C:ACPS09}, the attack finds short \(\vec{x}_j\) s.t.~\(\vec{x}_j^{T} \cdot \mat{A} \equiv \vec{0}\). Then, we either obtain \(\vec{x}_j^{T} \cdot \mat{A} \cdot \vec{s} + \left\langle  \vec{x}_j, \vec{e} \right\rangle = \left\langle  \vec{x}_j, \vec{e} \right\rangle\) or \(\left\langle \vec{x}_j, \vec{u} \right\rangle\). The former follows a distribution with small entries, i.e.~the distribution of \(|e_{j}|\) for \(e_{j} \coloneqq \left\langle \vec{x}_j, \vec{e} \right\rangle\) is biased towards elements \(< q/2\), and the latter follows a uniform distribution mod \(q\).

In~\cite{USENIX:ADPS16}, the ``normal form'' of the dual attack was introduced which finds short \(\vec{x}_j\) such that \(\vec{x}_j^{T} \cdot \mat{A} \equiv \vec{y}_j \bmod q\) with \(\vec{y}_j\) short. We then obtain
\[
\vec{x}_j^{T} \cdot \mat{A} \cdot \vec{s} + \left\langle  \vec{x}_j, \vec{e} \right\rangle = \left\langle \vec{y}_j, \vec{s} \right\rangle  + \left\langle  \vec{x}_j, \vec{e} \right\rangle,\] which follows a distribution with small entries when \(\vec{y}_j,\vec{s},\vec{x}_j\) and \(\vec{e}\) are short.

In~\cite{EC:Albrecht17} a composition of the dual attack with a guessing stage (and some scaling) was introduced with a focus on vectors \(\vec{s}\) that are sparse and small compared to \(\vec{e}\). The idea is to split \(\mat{A} = [\mat{A}_{0} \ \mat{A}_{1}]\) such that \[\vec{b} \equiv \mat{A}_{0}\cdot \vec{s}_{0} + \mat{A}_{1} \cdot \vec{s}_{1} + \vec{e} \bmod q.\] Then the dual attack is run on \(\mat{A}_{0}\) s.t.
\[
\left\langle \vec{x}_j,\vec{b} \right\rangle \equiv \vec{x}_j^{T} \cdot \mat{A}_{0} \cdot \vec{s}_{0} + \vec{x}_j^{T} \cdot \mat{A}_{1} \cdot \vec{s}_{1} + \left\langle \vec{x}_j,\vec{e} \right\rangle = \left\langle \vec{y}_j, \vec{s}_{0} \right\rangle + \vec{x}_j^{T} \cdot \mat{A}_{1} \cdot \vec{s}_{1} + \left\langle \vec{x}_j,\vec{e} \right\rangle.
\]
Thus, guessing the correct \(\vec{s}_{1}\) and computing \(\left\langle \vec{x}_j, \mathbf{b}  \right\rangle - \vec{x}^{T} \cdot \mat{A}_{1} \cdot \vec{s}_{1}\) produces a value that follows a distribution with small entries. In~\cite{INDOCRYPT:EspJouKha20} this was generalised to more general secret distributions paired with additional improvements on the exhaustive search over \(\vec{s}_{1}\). In~\cite{AC:GuoJoh21} further improvements were presented. In particular, the search over \(\vec{s}_{1}\) is realised using a Fast Fourier Transform style algorithm and the search space is significantly reduced by roughly considering only the most significant bits of \(\vec{s}_{1}\). In~\cite{Matzov22} this last step is replaced by ``modulus switching''~\cite{FOCS:BraVai11,PKC:AFFP14,C:KirFou15,C:GuoJohSta15} which provides significant performance gains.\footnote{Another significant gain reported in~\cite{Matzov22} is due to an improvement to the lattice sieving algorithm from~\cite{SODA:BDGL16} but discussing this is out of scope of this work.}

\subsection{Discrete Gaussian Distribution}\label{sec:discrete_gaussian}

Let $\sigma>0$. For any $\vec{x}\in\RR^d$, we let $\rho_\sigma(\vec{x}) \coloneqq \exp(-\norm{\vec{x}}^2/2\sigma^2)$. Note that this is different from the other (also commonly used) definition, where $\tfrac{1}{2}$ is replaced by $\pi$ in the exponent. This change is inconsequential to our results. We extend the definition of \(\rho_{\sigma}(\cdot)\) to sets of vectors \(\mathcal{S}\) by letting \(\rho_{\sigma}(\mathcal{S}) \coloneqq \sum_{\vec{x} \in \mathcal{S}} \rho_{\sigma}(\vec{x})\). For any lattice $\cL\subset\RR^d$, we denote by $D_{\cL,\sigma}$ the discrete Gaussian distribution over $\cL$, defined by $D_{\cL,\sigma}(\vec{x})=\rho_\sigma(\vec{x})/\rho_\sigma(\cL)$ for all $\vec{x}\in\cL$. Observing that $D_{\ZZ^n,\sigma}(\vec{x})$ only depends on $\norm{\vec{x}}$, we abuse notation and for \({\ell} = \norm{\vec{x}}\) write $\rho_\sigma({\ell})=\exp(-\ell^2/2\sigma^2)$ and $D_{\ZZ^n,\sigma}({\ell})=\rho_\sigma({\ell})/\rho_\sigma(\ZZ^n)$.

We will also make use of the modular discrete Gaussian. For any $q\in\NN$, we denote
by $D_{\ZZ_q^d,\sigma}$ the modular discrete Gaussian distribution over $\ZZ_q^d$ defined by
\[
    D_{\ZZ_q^d,\sigma}(\vec{x})=\frac{\rho_\sigma(\vec{x}+q\ZZ^d)}{\rho_\sigma(\ZZ^d)}.
\]
Note that the distribution $D_{\ZZ_q^d,\sigma}$ is isomorphic to the distribution
$D_{\ZZ_q,\sigma}^d$, a fact that we will use often implicitly.

We let \(\Phi(x)=\frac{1}{\sqrt{2 \pi}} \int_{-\infty}^{x} \exp({-t^{2} / 2})~\mathrm{d}t\) be the cumulative distribution function (cdf) of the standard normal distribution and \(\Phi^{-1}(x) : [0,1) \rightarrow \mathbb{R}\) its inverse.

\subsection{Quantum Computing}

\subsubsection{Quantum Circuit Model.} In the quantum circuit model, the time complexity is the circuit size, which is the total number of elementary quantum gates. The space complexity is the number of qubits used. We will assume that the elementary quantum gates come from a fixed universal set. Up to constant factors, the complexity does not depend on the universal set that we have chosen.
Since all unitary transforms are invertible, any quantum circuit $\mathcal{A}$ is reversible
and we denote by $\mathcal{A}^\dagger$ its inverse, which is also equal to its conjugate transpose
when viewed as a matrix.

Given a function $f:\bin^n \rightarrow \bin^m$, we say that a quantum circuit, implementing a unitary $U$ that acts on $n+\ell+m$ qubits, \emph{computes $f$ with probability} $\alpha$ if for every $x$, a measurement on the last $m$ qubits of $U\ket{x}\ket{0^\ell}\ket{0^m}$ outputs $f(x)$ with probability at least $\alpha$. The exact location of the qubits that we measure for the output actually does not matter, since we can also apply SWAP gates (implementable by elementary gates) to swap them to the $m$ last positions. The extra $\ell$ qubits that are not part of the input/output are called \emph{ancilla qubits} (or work space).

\subsubsection{Quantum Query Model.} We use the standard form of the \emph{quantum query model}: given a unitary $\mathcal{O}$, we say that a circuit computes $f$ with \emph{oracle access} to $\mathcal{O}$ if by augmenting the model with the unitary $\mathcal{O}$, we can construct a circuit computing $f$. The number of \emph{queries} on $\mathcal{O}$ is the number of unitary $\mathcal{O}$ in the circuit. If we find an efficient algorithm for a problem in query complexity and we are given an explicit circuit realizing the black-box transformation of the oracle $\mathcal{O}$, we will have an efficient algorithm for an explicit computational problem.

\subsubsection{Quantum Algorithms.}
We say that a \emph{quantum algorithm computes a function} $F:\bin^*\to\bin^*$ \emph{with probability} $\alpha$ if there is a classical algorithm $\mathcal{A}$ \emph{with quantum evaluation} that outputs $F(w)$ with probability $\alpha$ on input $w$. By quantum evaluation we mean that the algorithm can, any number of times during the computation, build a quantum circuit and evaluate it, that is measure the state $U\ket{0}$ where $U$ is the unitary implemented by the circuit.

For a family of algorithms parameterised by \(n\), the \emph{time complexity} $T(n)$ is the classical time complexity of $\mathcal{A}$ plus the time complexity of the circuits, i.e.~the number of gates. The \emph{classical space complexity} $S(n)$ is the space complexity of $\mathcal{A}$ (ignoring quantum evaluations). The \emph{quantum space complexity} $Q(n)$ is the maximal space complexity of all circuits, i.e.~the maximum number of qubits used. In the natural way, we say that a quantum algorithm has oracle access to $\mathcal{O}$ if it produces circuits with oracle access to $\mathcal{O}$. The query complexity of the algorithm is the sum of the query complexity of the circuits.
% It is sometimes necessary to consider relations instead of functions, for instance in search problems
% where there are several possible answers. Given a relation $R\subseteq\{0,1\}^*\times\{0,1\}^*$,
% we say a quantum algorithm \emph{solves the relation problem $R$ with probability $\alpha$} if
% there is a quantum algorithm computing a function $F:\bin^*\to\bin^*$ with probability $\alpha$
% such that $(x,F(x))\in R$ for all inputs $x$.
Now that the quantum model of computation is properly defined, we can express the fact that every reversible classical computation can be implemented by a quantum computer, although at a non-negligible cost. Here reversible classical computation means that there is a one-to-one mapping between the inputs and the outputs and that a transform exists to return from the output to the input.

\begin{theorem}[\cite{Bennett89,LS90}]\label{th:irreversible_to_reversible}
    Given any $\epsilon>0$ and any classical computation with running time $T$ and space complexity $S$,
    there exists an equivalent reversible classical computation with running time $O(T^{1+\epsilon}/S^\epsilon)$
    and space complexity $O(S(1+\ln(T/S)))$.
\end{theorem}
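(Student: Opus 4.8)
The statement is the classical space‑efficient reversible‑simulation theorem of Bennett; the plan is to reconstruct its proof in two stages. The first stage makes a \emph{single} step reversible. Fixing, say, a multitape Turing machine realising the given computation, a configuration is a word of length $O(S)$, and one transition is simulated by a reversible map on $O(S)$ bits by appending to the configuration a constant‑size \emph{history} record (the rule applied, together with whatever is needed to recover the overwritten symbol and undo the head moves). Chaining $\tau$ such maps simulates $\tau$ consecutive steps in $O(\tau)$ time and $O(S+\tau)$ space while accumulating a history string of length $O(\tau)$; running the chain backwards consumes that history and restores the starting configuration with all history registers reset to zero. We take a \emph{base segment} to consist of $\tau_0 = S$ steps, so that a forward or backward base move costs $O(S)$ time and occupies $O(S)$ bits.

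The second stage — the heart of the argument — avoids storing a length‑$\Theta(T)$ history by a recursive \emph{reversible pebble game}. Split the $T$ steps into $m = \lceil T/S\rceil$ base segments and identify their endpoints with nodes $0,1,\dots,m$ of a path, node $0$ being the read‑only input configuration. A \emph{pebble} on node $i$ means ``we currently hold the configuration at checkpoint $i$ together with the history of the segment entering it''; a pebble may be placed on or removed from node $i$ only when node $i-1$ is pebbled, each such move costing one base move by the first stage. Fix an integer $k = k(\epsilon)\geq 2$ and let $\ell = \lceil \log_k m\rceil$, so $m \le k^\ell$ (padding the computation with dummy steps makes this an equality). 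The schedule to advance a pebble by $k^{j}$ nodes is: run the advance‑by‑$k^{j-1}$ routine $k$ times in succession, laying down $k$ equally spaced checkpoints, then undo the first $k-1$ of them, each undo being that same routine run in reverse. If $M_j$ and $P_j$ denote the number of base moves and the number of simultaneously held pebbles for an advance‑by‑$k^{j}$, this gives $M_j = (2k-1)\,M_{j-1}$ and $P_j = (k-1) + P_{j-1}$ with $M_0 = P_0 = 1$, hence $M_\ell = (2k-1)^\ell$ and $P_\ell = O(k\ell)$; running the top‑level advance‑by‑$k^\ell$ from node $0$ leaves only nodes $0$ and $m$ pebbled.

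For the accounting, each base move costs $O(S)$ time and each held pebble costs $O(S)$ space, so the simulation takes time $O\big(S\,(2k-1)^\ell\big) = O\big(S\,(T/S)^{\log_k(2k-1)}\big)$ and space $O\big(S\,k\ell\big) = O\big(S\,\tfrac{k}{\ln k}\,\ln(T/S)\big)$, using $\ell = O(\ln(T/S)/\ln k)$. Since $2k-1 < k^{1+\epsilon}$ as soon as $k \geq 2^{1/\epsilon}$, fixing $k$ to such a constant makes $\log_k(2k-1) \leq 1+\epsilon$, giving time $O\big(S\,(T/S)^{1+\epsilon}\big) = O(T^{1+\epsilon}/S^\epsilon)$; for that fixed $k$ the factor $k/\ln k$ is an absolute constant, so the space is $O\big(S(1+\ln(T/S))\big)$. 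To obtain a clean input‑to‑output map one finally runs the whole reversible simulation, copies the answer bits held at node $m$ into a fresh $O(S)$‑bit register by CNOTs, and runs the simulation backwards; this at most doubles the time and adds $O(S)$ to the space, preserving both bounds. (If $T \leq S$ the claim is trivial after padding $T$ up to $S$.)

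I expect the one genuinely delicate point to be the bookkeeping of the pebble game: verifying that the recursive schedule never tries to touch a node whose predecessor is unpebbled, that removing a pebble (running a sub‑advance in reverse) returns every history and ancilla register to zero, that nested sub‑advances do not collide in the memory they use, and that rounding $m$ and $T/S$ up to integral powers costs only the stated constant factors. Once that is in place, the optimisation over $k$ and the resulting complexity estimates are routine.
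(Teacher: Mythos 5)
The paper does not prove this theorem itself — it imports it by citation from Bennett (1989) and Levine–Sherman (1990) — and your proposal is a correct reconstruction of precisely that argument: the history-recording trick for a base segment of $S$ steps followed by the $k$-ary reversible pebble game with recurrences $M_j=(2k-1)M_{j-1}$, $P_j=(k-1)+P_{j-1}$, and the choice $k\geq 2^{1/\epsilon}$ giving time $O(T^{1+\epsilon}/S^{\epsilon})$ and space $O(S(1+\ln(T/S)))$. The one bookkeeping point you flag (placing a pebble on node $i$ while keeping node $i-1$ pebbled, which is done by a forward run, a CNOT copy of the new checkpoint, and an uncomputing backward run so the transient history is cleared) is handled exactly as in the cited works and does not affect the stated bounds.
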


\begin{corollary}\label{cor:irreversible_to_quantum}
  Given any $\epsilon>0$ and any classical computation with running time $T$ and space complexity $S$, there exists an equivalent quantum circuit of size $O(T^{1+\epsilon}/S^\epsilon)$ using $O(S(1+\ln(T/S)))$ qubits.
\end{corollary}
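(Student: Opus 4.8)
The plan is to obtain the corollary by feeding \cref{th:irreversible_to_reversible} into the standard fact that a reversible classical circuit \emph{is}, verbatim, a quantum circuit of the same size and width. First I would apply \cref{th:irreversible_to_reversible} to the given classical computation to get an equivalent reversible classical computation of running time $O(T^{1+\epsilon}/S^{\epsilon})$ and space $O(S(1+\ln(T/S)))$; writing this reversible computation as a circuit over a universal reversible gate set (e.g.\ NOT, CNOT, Toffoli), each gate is a permutation of the bit strings on a bounded number of wires, hence a permutation matrix, hence a unitary. Replacing each reversible gate by an $O(1)$-size implementation over our fixed universal quantum gate set, and each bit by a qubit, yields a quantum circuit $U$ of size $O(T^{1+\epsilon}/S^{\epsilon})$ on $O(S(1+\ln(T/S)))$ qubits. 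Running $U$ on $\ket{x}\ket{0^{\ell}}\ket{0^{m}}$ leaves the output register in the basis state $\ket{f(x)}$ (with the work register holding whatever garbage the reversible computation left there), so measuring the last $m$ qubits returns $f(x)$ with probability $1$, which is exactly what the quantum circuit model of \cref{sec:preliminaries} asks for. The qubit count already accounts for input, output, and ancilla registers, matching the $n+\ell+m$ bookkeeping in the definition of ``computes $f$''.

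The only step that is more than bookkeeping is the claim that one reversible gate translates to $O(1)$ quantum gates from the fixed universal set: here it is important that the target is an \emph{exact} basis permutation on a constant number of wires, so no Solovay--Kitaev approximation is invoked and the overhead is an absolute constant (depending only on the choice of reversible and quantum gate sets), consistent with the remark in \cref{sec:preliminaries} that complexity is independent of the universal set up to constant factors. I do not expect any genuine obstacle beyond this: the corollary is essentially a restatement of \cref{th:irreversible_to_reversible} once one observes that irreversibility was the sole reason a classical computation could not be mapped directly to a unitary, and that obstruction has now been removed.
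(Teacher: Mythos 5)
Your argument is correct and is exactly the (implicit) reasoning the paper intends: the corollary is stated without proof as an immediate consequence of \cref{th:irreversible_to_reversible}, obtained by viewing the reversible computation as a circuit of permutation gates and implementing each such gate exactly with $O(1)$ gates from the fixed universal quantum set. Your observation that no Solovay--Kitaev approximation is needed is a nice touch but does not change the route.
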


In principle, it is always possible to turn a classical computation into a quantum one (\cref{cor:irreversible_to_quantum}) and combine all quantum algorithms into one quantum circuit by postponing all measurements until the very end of the computation, using the so-called \emph{principle of deferred measurement}~\cite{NC11}. We will use this fact implicitly in the rest of this work and just assume that we can take any classical algorithm and turn it into a quantum one with the same complexity.

\subsubsection{Quantum Search.} One of the most well-known quantum algorithms is Grover's unstructured search algorithm~\cite{Grover96}. Suppose we have a set of objects named $\{0,1,\dots, N-1\}$, of which some are \emph{targets}. We say that an oracle $\mathcal{O}$ \emph{identifies the targets} if, in the classical (resp.~quantum) setting, $\mathcal{O}(i)=1$ (resp. $\mathcal{O}\ket{i} = -\ket{i}$) when $i$ is a target and $\mathcal{O}(i)=0$ (resp. $\mathcal{O}\ket{i}= \ket{i}$) otherwise. Given such an oracle $\mathcal{O}$, the goal is to find a target $j \in \{0,1,\dots, N-1\}$ by making queries to the oracle $\mathcal{O}$.

% It is very common in quantum search problems to assume that the oracle identifies targets by
% swapping the phase: $\mathcal{O}\ket{i}=(-1)^{b_i}\ket{i}$ where $b_i=1$ for targets and $b_i=0$ for other
% objects. At the same time, many quantum algorithms\footnote{This is in fact what we obtain when we
% transform a classical circuit into a quantum circuit.} (and QRACM in particular) provide a
% transformation $\mathcal{O}'$ such that $\mathcal{O}'\ket{i}\ket{0}=\ket{i}\ket{b_i}$. The latter can always be
% transformed into the former by applying $\mathcal{O}'$ to the state
% $\ket{-}=\tfrac{1}{\sqrt{2}}(\ket{0}-\ket{1})$ and discarding the second qubit.

In the search problem, we try to minimise the number of queries to the oracle. In the classical case, we need $O(N)$ queries to solve such a problem. Grover, on the other hand, provided a quantum algorithm that solves the search problem with only $O(\sqrt{N})$ queries~\cite{Grover96} when there is one target, and $O(\sqrt{N/t})$ when there are exactly $t$ targets. We here present a generalisation of Grover's algorithm called amplitude amplification~\cite{BHMT02}.

\begin{theorem}[Amplitude Amplification~\cite{BHMT02}]\label{lemma:grover}
Suppose we have a set of $N$ objects of which some are targets.
Let $\mathcal{O}$ be a quantum oracle that identifies the targets.
Let $\mathcal{A}$ be a quantum circuit using no intermediate measurements, ie $\mathcal{A}$ is reversible.
Let $a$ be the initial success probability of $\mathcal{A}$, that is the probability that a measurement of
$\mathcal{A}\ket0$ outputs a target.
There exists a quantum algorithm that calls $\bigO{\sqrt{1/a}}$ times $\mathcal{A}$, $\mathcal{A}^\dagger$
and $\mathcal{O}$, uses as many qubits as $\mathcal{A}$ and $\mathcal{O}$,
and outputs a target with probability greater than $1-a$.
\end{theorem}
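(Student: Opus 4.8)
The plan is to realize the desired algorithm as repeated application of the amplitude amplification operator $Q$, built as a product of two reflections, and to analyze its action on a two-dimensional invariant subspace via the standard geometric (rotation) picture.

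First I would set up the ``good/bad'' decomposition. Let $P_G$ denote the projector onto the span of the target basis states and $P_B = I - P_G$. Put $\ket{\Psi} = \mathcal{A}\ket{0}$ and write $\ket{\Psi} = \sin\theta\,\ket{\Psi_G} + \cos\theta\,\ket{\Psi_B}$, where $\ket{\Psi_G}$, $\ket{\Psi_B}$ are the normalizations of $P_G\ket{\Psi}$, $P_B\ket{\Psi}$ and $\sin^{2}\theta = a$, with $\theta \in (0,\pi/2]$ (the case $a=0$ is vacuous, and if $a \geq 1/2$ the trivial ``just run $\mathcal{A}$ once and measure'' algorithm already succeeds with probability $a \geq 1-a$). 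The plane $H = \mathrm{span}\{\ket{\Psi_G},\ket{\Psi_B}\}$ contains $\ket{\Psi}$ and will be preserved by $Q$.

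Next I would define $S_{0} = I - 2\ket{0}\bra{0}$ (a phase flip on the all-zero string, implementable with a multiply-controlled gate on the same qubits $\mathcal{A}$ uses and needing no oracle) and take the given phase oracle $\mathcal{O}$ in the role of $S_{\chi} = I - 2P_{G}$, i.e.\ the reflection that negates exactly the targets. Set $Q = -\mathcal{A}\,S_{0}\,\mathcal{A}^{\dagger}\,\mathcal{O}$. The two algebraic facts to verify are: (i) $\mathcal{A} S_{0}\mathcal{A}^{\dagger} = I - 2\ket{\Psi}\bra{\Psi}$, so $-\mathcal{A} S_{0}\mathcal{A}^{\dagger}$ is the reflection about the line through $\ket{\Psi}$ — here the hypothesis that $\mathcal{A}$ uses no intermediate measurements is exactly what lets us form $\mathcal{A}^{\dagger}$ and conjugate; and (ii) $\mathcal{O}$ restricted to $H$ is the reflection about $\ket{\Psi_B}$. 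A product of two planar reflections is a rotation, and a short computation shows $Q(H) = H$ with $Q$ acting on $H$ as rotation by angle $2\theta$; by induction this yields $Q^{m}\ket{\Psi} = \sin\big((2m+1)\theta\big)\ket{\Psi_G} + \cos\big((2m+1)\theta\big)\ket{\Psi_B}$.

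Then I would choose $m = \lfloor \pi/(4\theta)\rfloor$. From $2m\theta \leq \pi/2 < 2(m+1)\theta$ one gets $\lvert (2m+1)\theta - \pi/2\rvert \leq \theta$, so measuring $Q^{m}\ket{\Psi}$ fails (returns a non-target) with probability $\cos^{2}\big((2m+1)\theta\big) \leq \sin^{2}\theta = a$, i.e.\ succeeds with probability at least $1-a$ (the strict inequality is a minor technicality, obtainable by e.g.\ perturbing the iteration count in the measure-zero equality cases, or folded into the $a\geq 1/2$ fallback above). For the resource count: $Q$ contains one use each of $\mathcal{O}$, $\mathcal{A}^{\dagger}$ and $\mathcal{A}$ (plus the oracle-free $S_{0}$) and acts on the same qubits as $\mathcal{A}$ and $\mathcal{O}$, so $Q^{m}$ preceded by the initial $\mathcal{A}$ makes $2m+1 = O(1/\theta)$ calls to each; since $\theta \geq \sin\theta = \sqrt{a}$ we have $1/\theta \leq 1/\sqrt{a}$, giving the claimed $O(\sqrt{1/a})$ and the claimed qubit bound.

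The only step needing genuine care rather than bookkeeping is (i)--(ii): checking that $Q$ truly leaves the plane $H$ invariant and acts there as a clean rotation by $2\theta$ — in particular that $\mathcal{A} S_{0}\mathcal{A}^{\dagger}$ behaves as the reflection about $\ket{\Psi}$ even though $\mathcal{A}$ may act on ancilla/work registers, and that the oracle register is addressed consistently. The endpoint analysis of $m = \lfloor\pi/(4\theta)\rfloor$ and the large-$a$ degeneracy are elementary but must be stated for the ``$>1-a$'' bound to hold literally.
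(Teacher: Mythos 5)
Your proposal is correct and is essentially the standard proof of this theorem from~\cite{BHMT02}, which the paper cites without reproducing: the two-reflection operator $Q=-\mathcal{A}S_0\mathcal{A}^\dagger\mathcal{O}$ acting as a rotation by $2\theta$ on the good/bad plane, with $m=\lfloor\pi/(4\theta)\rfloor$ iterations giving failure probability at most $\sin^2\theta=a$ and $O(1/\sqrt{a})$ calls. The bookkeeping you flag (the $a\geq 1/2$ fallback, the strict-inequality edge case, and reversibility of $\mathcal{A}$ enabling the conjugation $\mathcal{A}S_0\mathcal{A}^\dagger$) matches how the cited source handles these points, so no gap remains.
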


Grover's algorithm is a particular case of this theorem where $\mathcal{A}$ produces a uniform superposition of all objects, in which case $a=\tfrac{1}{N}$. The theorem then states that we can find a target with probability $1-\tfrac{1}{N}$ using $O(\sqrt{N})$ calls to the oracle $\mathcal{O}_f$.

%It is also possible to find the minimum of a set of elements stored in the QRACM model using less queries than in the classical case.

% \begin{theorem}[\cite{DH96}, Theorem~1]\label{theorem_Qmin}
% Let $\mathcal{O}$ be a quantum oracle such that $\mathcal{O}\ket{i}\ket{0} = \ket{i}\ket{x_i}$ for some
% $N$ numbers $x_1,\ldots,x_N$. Then there exists a quantum algorithm that finds an index $j$ such that
% $x_j$ is the minimum among all $x_i$ with probability at least $1/$2 and with $O(\sqrt{N})$ queries to $\mathcal{O}$.
% \end{theorem}

% \begin{theorem}[Amplitude amplification, from \cite{brassard2002quantum}]\label{lemma:grover}
% Suppose we have a set of $N$ objects of which some are targets.
% Let $\mathcal{O}$ be a quantum oracle that identifies the targets.
% Let $\mathcal{A}$ be a quantum circuit using no intermediate measurements, ie $\mathcal{A}$ is reversible.
% Let $a$ be the initial success probability of $\mathcal{A}$, that is the probability that a measurement of
% $\mathcal{A}\ket0$ outputs a target.
% There exists a quantum algorithm that calls $\bigO{\sqrt{1/a}}$ times $\mathcal{A}$, $\mathcal{A}^\dagger$
% and $\mathcal{O}$, uses as many qubits as $\mathcal{A}$ and $\mathcal{O}$,
% and outputs a target with probability greater than $1-a$.
% \end{theorem}

\begin{theorem}[Amplitude Estimation~\cite{BHMT02}, Theorem 12]\label{thm:amplitude_estimation}
Given a natural number $M$ and access to an $(n + 1)$-qubit unitary $U$ satisfying
\[
U\ket{0^n}\ket{0}= \sqrt{a}\ket{\phi_1}\ket{1} +\sqrt{1-a}\ket{\phi_0}\ket{0},
\]
where $\ket{\phi_1}$ and $\ket{\phi_0}$ are arbitrary $n$-qubit states and $0 < a < 1$,
there exists a quantum algorithm that uses $M$ applications of $U$ and $U^\dagger$, and outputs an estimate $\tilde{a}$ that with probability $\geq 2/3$ satisfies
\[
|a-\tilde{a}| \leq \frac{6\pi\sqrt{a(1-a)}}{M}+\frac{9\pi^2}{M^2}\leq \frac{15 \pi^2}{M}.
\]
\end{theorem}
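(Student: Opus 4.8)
The plan is to follow Brassard, H\o yer, Mosca and Tapp~\cite{BHMT02} and reduce amplitude estimation to quantum phase estimation of a Grover-type rotation. Write $\ket{\Psi} := U\ket{0^n}\ket{0} = \sqrt{a}\,\ket{\phi_1}\ket{1}+\sqrt{1-a}\,\ket{\phi_0}\ket{0}$ and let $\theta\in(0,\pi/2)$ be the angle with $\sin^2\theta=a$. Let $S_\chi$ be the reflection that multiplies a computational-basis state by $-1$ exactly when its last qubit is $\ket{1}$, and let $S_0 := 2\outProd{0^{n+1}}{0^{n+1}}-\id$; both cost $O(n)$ elementary gates and no further use of $U$. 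Set $\mathcal{Q} := -\,U\,S_0\,U^\dagger\,S_\chi$. First I would record the standard facts, each verified by a short direct computation using $S_\chi\ket{\phi_1}\ket{1}=-\ket{\phi_1}\ket{1}$, $S_\chi\ket{\phi_0}\ket{0}=\ket{\phi_0}\ket{0}$ and $US_0U^\dagger = 2\outProd{\Psi}{\Psi}-\id$: (i) the plane $H$ spanned by $\ket{\phi_1}\ket{1}$ and $\ket{\phi_0}\ket{0}$ contains $\ket{\Psi}$ and is $\mathcal{Q}$-invariant; (ii) on $H$ the operator $\mathcal{Q}$ acts as a rotation by $2\theta$, hence has eigenvalues $e^{\pm 2i\theta}$ with eigenvectors $\ket{\psi_\pm}$; and (iii) $\ket{\Psi}$ is an equal-weight superposition of $\ket{\psi_+}$ and $\ket{\psi_-}$ up to phases.

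Next I would run textbook Fourier-based phase estimation on $\mathcal{Q}$, with $\ket{\Psi}$ (prepared by one application of $U$) as the input and a control register of $\lceil\log M\rceil$ qubits; the controlled powers of $\mathcal{Q}$ account for the stated $\Theta(M)$ applications of $U$ and $U^\dagger$. Measuring the control register returns an integer $y$; because the two eigenbranches $e^{\pm 2i\theta}$ of $\ket{\Psi}$ make $y/M$ concentrate around $\theta/\pi$ and around $1-\theta/\pi$, on which $\sin^2(\pi y/M)$ agrees, the algorithm outputs $\tilde a := \sin^2(\pi y/M)$. The quantitative heart is the usual Dirichlet-kernel tail estimate for phase estimation (see~\cite[Theorem~11]{BHMT02} or~\cite{NC11}): with probability at least $8/\pi^2$ the observed $y$ satisfies $|\pi y/M-\theta|\leq\pi/M$ for the nearest representative; running at slightly reduced effective precision instead guarantees probability at least $2/3$ at the cost of replacing $\pi/M$ by $3\pi/M$, which is the regime of the stated constants.

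It remains to convert the angular error into the amplitude error. Put $\delta:=\pi y/M-\theta$, so $\tilde a=\sin^2(\theta+\delta)$. From $\sin^2 A-\sin^2 B=\sin(A+B)\sin(A-B)$ with $A=\theta+\delta$, $B=\theta$, and then $|\sin\delta|\leq|\delta|$, $\sin 2\theta=2\sqrt{a(1-a)}$ and $|\sin(2\theta+\delta)|\leq\sin 2\theta+|\delta|$,
\[
|\tilde a-a| \;=\; |\sin(2\theta+\delta)|\,|\sin\delta| \;\leq\; \bigl(2\sqrt{a(1-a)}+|\delta|\bigr)\,|\delta| .
\]
Substituting $|\delta|\leq 3\pi/M$ yields the first displayed inequality of the theorem. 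The second, $\tfrac{6\pi\sqrt{a(1-a)}}{M}+\tfrac{9\pi^2}{M^2}\leq\tfrac{15\pi^2}{M}$, is elementary: $\sqrt{a(1-a)}\leq\tfrac12$ and $M\geq 1$ bound the left-hand side by $\tfrac{3\pi}{M}+\tfrac{9\pi^2}{M}\leq\tfrac{12\pi^2}{M}\leq\tfrac{15\pi^2}{M}$ (using $3\pi\leq 3\pi^2$).

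I expect the main obstacle to be the phase-estimation analysis itself: bounding the tails of the Fourier coefficients of the control-register state when $\theta$ is not an exact multiple of $\pi/M$, and confirming that spreading $\ket{\Psi}$ across the two eigenvalues $e^{\pm2i\theta}$ does not degrade the success probability---both branches being ``good'' since $\sin^2$ is symmetric about $\pi/2$. The passage from angular to amplitude error, and the closing inequality, are routine trigonometry and arithmetic by comparison.
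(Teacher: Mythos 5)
Your proposal is correct and is essentially the proof the paper relies on: the statement is imported verbatim from~\cite{BHMT02} (Theorem~12, instantiated with $k=3$ so that the success probability $1-\tfrac{1}{2(k-1)}=3/4\geq 2/3$ and the error constants $6\pi$, $9\pi^2$ appear), and your reconstruction via phase estimation of the Grover operator $\mathcal{Q}=-US_0U^\dagger S_\chi$, the eigenvalues $e^{\pm 2i\theta}$ with $\sin^2\theta=a$, the tail bound on the control-register measurement, and the conversion $|\sin^2(\theta+\delta)-\sin^2\theta|\leq 2|\delta|\sqrt{a(1-a)}+\delta^2$ is exactly the argument in that source. The closing arithmetic for the second inequality is also fine, so there is nothing to add beyond noting that your phrase about ``reduced effective precision'' is just the standard widening of the accepted window to $k=3$ grid units in the phase-estimation tail bound.
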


We will have to search for a marked element in a collection but the oracle that identifies the targets may be probabilistic and return a wrong result with small probability. The following result generalises Grover search
in this setting. We say that a (probabilistic) Boolean function $f$ has
bounded error if there exists $b\in\{0,1\}$ such that $f()$ returns $b$
with probability at least $9/10$.

\begin{theorem}[\cite{Hoyer03}]\label{thm:hoyer03}
  Given $n$ algorithms, quantum or classical, each computing some bit-value with bounded error probability, there is a quantum algorithm that uses $O(\sqrt{n})$ queries and with constant probability: returns the index of a ``1'', if there is at least one ``1'' among the $n$ values; returns $\bot$ if there is no ``1''.
\end{theorem}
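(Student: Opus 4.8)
The plan is to recognise this as a \emph{robust} (bounded-error oracle) version of Grover search and to follow the strategy of~\cite{Hoyer03}. First I would cheaply reduce the per-input error: since each $f_i$ returns its value with probability at least $9/10$, running the corresponding algorithm a constant number of times and taking a majority vote yields, for any fixed target constant $\delta$, an algorithm $A_i$ that outputs the correct value $x_i \in \{0,1\}$ with probability at least $1-\delta$, at the cost of only a constant factor in queries. Crucially I would \emph{not} push $\delta$ down to $1/\poly(n)$, since that would cost a $\log n$ factor and yield only the naive bound $O(\sqrt n \log n)$ after plugging the amplified algorithms into ordinary Grover search.

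Next, from each $A_i$ I would build a coherent ``check-and-clean'' subroutine $V$: on input $\ket{i}\ket{0}_{\mathrm{work}}\ket{0}_b$, run $A_i$ on registers $(i,\mathrm{work})$, copy the claimed output bit into register $b$ with a CNOT, then run $A_i^\dagger$. Because $A_i$ errs with probability $\le \delta$, the resulting state is within $O(\sqrt\delta)$ in norm of $\ket{i}\ket{0}_{\mathrm{work}}\ket{x_i}_b$, so $V$ both evaluates $x_i$ into $b$ and approximately restores the workspace. I would then run amplitude amplification (\cref{lemma:grover}) on the uniform superposition over $[n]$, using the phase flip ``$b = 1$'' composed with $V$ and $V^\dagger$ in place of the ideal reflection about the set of good indices. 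Since the number of ones $t$ is unknown, I would wrap this in the standard exponential-search schedule (geometrically increasing guesses for the number of amplification rounds), whose total query count remains $O(\sqrt{n/t}) \subseteq O(\sqrt n)$; at the end of each round I measure the index register to get a candidate $j$, verify it by running $A_j$ a further constant number of times, output $j$ if some round produces a verified index, and output $\bot$ otherwise.

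The main obstacle is the error analysis. Naively, each amplification round incurs error $O(\sqrt\delta) = \Theta(1)$ and there are $\Theta(\sqrt{n/t})$ rounds, so the accumulated deviation from an ideal run with a perfect oracle could be $\Theta(\sqrt{n/t})$, which is useless when $t$ is small. The heart of the argument — and the contribution of~\cite{Hoyer03} — is to show that the \emph{effective} per-round error, measured as $\lVert (\tilde R - R)\ket{\psi_t} \rVert$ against the ideal intermediate state $\ket{\psi_t}$ rather than as the operator norm $\lVert \tilde R - R \rVert$, is small enough that the overall deviation stays below a fixed constant; one then sets $\delta$ to a suitable absolute constant and boosts the final success probability by a constant number of independent repetitions of the whole procedure. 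Correctness then follows: if some $x_i = 1$, the ideal amplitude amplification finds such an index with constant probability in at least one round of the schedule, and the robustness bound transfers this guarantee to the faulty implementation, so the subsequent verification passes; if all $x_i = 0$, then no candidate can pass the bounded-error verification except with small constant probability, so the algorithm outputs $\bot$.
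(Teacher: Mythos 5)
The paper does not actually prove this statement: it is imported verbatim from~\cite{Hoyer03}, so the only meaningful comparison is with the argument in that cited work, and against that your sketch has a genuine gap at its crux. You correctly identify the obstacle (constant per-round oracle error times $\Theta(\sqrt{n/t})$ amplification rounds), but the fix you gesture at --- bounding $\lVert(\tilde R-R)\ket{\psi_t}\rVert$ on the ideal intermediate states instead of the operator norm, with $\delta$ a fixed constant --- does not suffice, and you do not prove it; you explicitly defer it to ``the contribution of~\cite{Hoyer03}''. Concretely, your check-and-clean unitary $V$ is block-diagonal over the index register, and on a non-solution index $i$ it places amplitude of order $\sqrt{\delta}$ on the ``flag $=1$'' component. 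On the uniform superposition this means the falsely flagged mass has squared norm $\Theta(\delta)$, a constant, while the genuinely flagged mass is only $\Theta(t/n)$. So even measured against the ideal intermediate states the per-round deviation is $\Theta(\sqrt{\delta})$, and worse, a single run of amplitude amplification on the noisy flag amplifies the false positives together with the true ones: for small $t$ the measured index is garbage with probability $1-O(t/(n\delta))$, verification rejects it, and repeating does not restore an overall $O(\sqrt{n})$ query count unless $\delta$ is pushed down to $O(t/n)$, which reintroduces exactly the $\log$ factor you set out to avoid.

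What is missing is the actual mechanism of~\cite{Hoyer03}: a \emph{recursive interleaving of amplitude amplification and error reduction}, in which one alternates a bounded number of amplification steps with majority-vote error reduction at successive scales, so that the error probability of the effective oracle shrinks in lockstep with the growth of the amplified good amplitude and the costs form a geometric series summing to $O(\sqrt{n})$. Your outer scaffolding (constant-factor majority amplification, the exponential schedule over unknown $t$, final classical verification of the candidate, and the soundness argument when all $x_i=0$) is fine, but without the interleaved error-reduction the completeness claim --- that the faulty amplification still lands on a true $1$-index with constant probability within $O(\sqrt{n/t})$ queries --- is unsupported, and in the simple single-pass form you describe it is false for $t\ll n$.
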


This algorithm can easily be used to find the index of the \emph{first algorithm} that returns $1$, see e.g.~\cite{KKM0Y21}.

%\todo[inline]{Yixin: I modified the statement of this lemma to make it a bit clearer}
\begin{lemma}\label{lem:quantum_find_first}
   Let $N$ be an integer and $f:[0,N-1] \to\bin$ a function.
   Let $\mathcal{O}$ be a (classical or quantum with bounded error) algorithm
   computing $f$.
   Let $n_0$ be the first index such that $f(n_0)=1$, or let $n_0=\bot$ if no such index exists.
   There exists a quantum algorithm $\mathcal{A}^\mathcal{O}$ with the following property.
   $\mathcal{A}^\mathcal{O}(N)$ returns $i\in[0,N-1]$ such that $f(i)=1$, or $\bot$.
   With constant probability, $\mathcal{A}^\mathcal{O}(N)=n_0$.
   The algorithm runs in expected time $T=O(\sqrt{n_0})$ (or $O(\sqrt{N})$ if $n_0=\bot$),
   uses a polynomial number of qubits and makes an expected number $T$ of calls to $\mathcal{O}$.
   Furthermore, if the algorithm returns $i\in[0,N-1]$, then it only queries $\mathcal{O}$ on
   values in $[0,\min(N-1,2i)]$.
\end{lemma}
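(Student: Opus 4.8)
The plan is a two–phase galloping search. In the \emph{localisation phase} I would run, for $t=1,2,3,\dots$, the bounded-error search procedure of \cref{thm:hoyer03} on the restriction of $f$ to the window $W_t=[0,\min(N-1,2^t-1)]$, doubling the window at each round, and stopping at the first round $t^\star$ at which the procedure reports a candidate index rather than $\bot$ (or stopping and outputting $\bot$ once $2^t\ge N$). Because the window size, hence the cost $O(\sqrt{|W_t|})$ of the $t$-th call, grows geometrically, the total cost of rounds $1,\dots,t^\star$ is $O(\sqrt{2^{t^\star}})$; when $n_0\neq\bot$ one has $t^\star=\lceil\log_2(n_0+1)\rceil$ with constant probability, giving $O(\sqrt{n_0})$, and when $n_0=\bot$ it is $O(\sqrt N)$.

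In the \emph{refinement phase}, having pinned a marked element down to $W_{t^\star}$, I would binary-search for the \emph{leftmost} one: keeping an interval $[a,b]\subseteq W_{t^\star}$ known to contain a marked index, test with \cref{thm:hoyer03} whether its left half does, and recurse into the half that does. The working interval halves each level, so the per-level cost again forms a geometric series summing to $O(\sqrt{|W_{t^\star}|})=O(\sqrt{n_0})$, and after at most $t^\star$ levels the interval collapses to a single index $i$, the first marked element of $W_{t^\star}$. Finally verify $f(i)=1$ by $O(1)$ direct queries to $\mathcal O$ and output $i$ (restarting, or outputting $\bot$, if any test along the way is inconsistent). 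All subroutines use polynomially many qubits and are not composed in a way that accumulates qubits, so the space bound is immediate.

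For correctness I would show that, with constant probability, every call to \cref{thm:hoyer03} in both phases behaves as intended; then $t^\star$ is exactly the first round whose window contains $n_0$, the refinement returns $n_0$, and verification confirms it, so $\mathcal A^{\mathcal O}(N)=n_0$. The main obstacle is precisely this: there are $\Theta(\log n_0)$ calls to a subroutine with only constant success probability, so a naive union bound is too weak. I would handle it by giving the calls error budgets that decay geometrically (equivalently, majority-voting $O(1)$-error calls a suitably growing number of times, using that $t^\star$ becomes known after a cheap unamplified pre-pass and can otherwise be guessed by successive doubling), so that the budgets sum to an absolute constant while the repetition factors, weighed against the geometrically varying per-call costs, contribute only an $O(1)$ overall factor; spurious ``found'' verdicts in early rounds are caught by the verification step and cost only a geometrically decaying amount of extra work. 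Expectation (rather than worst-case) time enters because the number of rounds, and the wasted work following a spurious verdict, are random with geometric tails.

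Finally, the query-locality claim is structural rather than probabilistic: the $t$-th localisation call only queries $\mathcal O$ inside $[0,2^t-1]$, and we only pass round $t-1$ when its window $[0,2^{t-1}-1]$ reported no marked element, so any index $i$ ultimately returned satisfies $i\ge 2^{t^\star-1}$, whence $2^{t^\star}-1\le 2i$; the refinement and verification only touch $W_{t^\star}\cup\{i\}\subseteq[0,\min(N-1,2i)]$. To keep this a deterministic guarantee even when a rare false ``no marked element'' verdict occurs, I would simply never trust a returned index $i<2^{t^\star-1}$ and re-localise in that case, which happens with small probability and does not affect the cost bounds.
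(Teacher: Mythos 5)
Your construction---doubling windows searched with the bounded-error search of \cref{thm:hoyer03}, followed by a leftmost-element binary search whose per-level error budgets decay geometrically---is exactly the standard argument the paper relies on for \cref{lem:quantum_find_first}: the paper gives no proof of its own and simply defers to \cite{Hoyer03} and the first-marked-element technique of \cite{KKM0Y21}. Your accounting of the costs is sound: the geometric growth of the windows gives the $O(\sqrt{n_0})$ (resp.\ $O(\sqrt{N})$) bound, the $O(\ell)$-fold amplification at level $\ell$ of the refinement is absorbed because $\sum_\ell \ell\,2^{-\ell/2}$ converges, and false negatives past round $t^\star$ only add an expected $O(\sqrt{2^{t^\star}})$ since the failure probability per extra round beats the $\sqrt{2}$ growth of its cost.

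The one step that does not work as described is your patch for the deterministic query-locality clause. If round $t^\star-1$ gave a false negative and the refinement then yields an index $i<2^{t^\star-1}$, you propose to ``re-localise''. But the discarded attempt has already queried $\mathcal{O}$ on indices up to $2^{t^\star}-1$, and the fresh localisation may legitimately stop at a smaller window $W_{t'}$ with $t'<t^\star$ and return some $i'$ with $2^{t'-1}\leqslant i'<2^{t^\star-1}$; conditioned on that output, the algorithm has queried values exceeding $2i'$, so the clause ``if the algorithm returns $i$ then it only queries $\mathcal{O}$ on $[0,\min(N-1,2i)]$'' fails---and this clause is used deterministically in the proof of \cref{lem:correctness_quantum}. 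The repair is simpler than a restart: maintain the largest index $M$ queried so far over the entire execution and only ever output an index $i$ with $2i\geqslant M$, outputting $\bot$ otherwise. In the good event where all window tests are correct the returned index is $n_0\geqslant 2^{t^\star-1}$ while $M=2^{t^\star}-1\leqslant 2n_0$, so the check passes, the constant success probability is unaffected, and the locality guarantee holds deterministically as stated.
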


\subsubsection{Memory Access.}\label{quantum_memory}

``Baseline'' quantum circuits are simply built using a universal quantum gate set. A requirement for many quantum algorithms to process data efficiently is to be able to access classical data in quantum superposition. Such algorithms use quantum random-access memory, often denoted as qRAM, and require the circuit model to be augmented with the so-called ``qRAM gate''. These qRAM gates are assumed to have a time complexity polylogarithmic in the amount of classical data stored, so that each call is not time consuming. This model is inspired by the classical RAM model where we usually assume memory access in time $O(1)$.

Given an input integer \(0 \leq i \leq r-1\), which represents the index of a memory cell, and many quantum registers \(\ket{x_0 , \ldots x_{r-1}}\), which represent stored data, the qRAM gate fetches the data from register \(x_i\), possibly in superposition:
\[ \ket{i} \ket{x_{0} , \ldots x_{r-1}} \ket{y} \mapsto \ket{i} \ket{x_0 , \ldots x_{r-1}} \ket{y \oplus x_i} \enspace.\]
Following the terminology of~\cite{DBLP:conf/tqc/Kuperberg13}, there are three types of qRAMs:
\begin{itemize}
\item If the input $i$ is classical, then this is the plain quantum circuit model. We can implement it using a universal quantum gate set.
\item If the $x_j$ are classical, we have \emph{classical memory with quantum random access } (QRACM).\label{acronym:qracm}
The qRAM gate becomes \[ \ket{i} \ket{y} \mapsto \ket{i} \ket{y \oplus x_i} \enspace.\]
\item In general, we have \emph{quantum memory with quantum random access } (QRAQM). This is the most powerful quantum memory model where the data are also in superposition.\label{acronym:qraqm}
\end{itemize}

In our algorithm for the dual attack, we will be using QRACM\@. It is possible to implement a QRACM using a universal quantum gate set, albeit at a considerable cost. Given a classical data set $\{x_0,\cdots,x_{r-1}\}$, one can construct, in time $\tilde{O}(r)$, a circuit using $\tilde{O}(r)$ qubits that implements a QRACM for this data set. The obtained circuit then allows query in the form $\ket{i} \ket{y} \mapsto \ket{i} \ket{y \oplus x_i} $ and has circuit depth $O(\polylog(r))$~\cite{GLM08,KP20,MGM20,HLGJ20}.
Note that even low depth implementation of QRACM has at least $\Omega(r)$ gates, hence has time complexity at least $\Omega(r)$ by our definition. Therefore, the assumption that the qRAM gates have time complexity $\polylog(r)$ is very strong and corresponds to parallel evaluation of the circuit. The feasibility of constructing efficient QRACM is further discussed in e.g.~\cite{AC:AGPS20}.

\subsection{The Classical Algorithm of~\cite{AC:GuoJoh21,Matzov22}}

In this section, we give an overview of the algorithm in~\cite{Matzov22}, reproduced in~\cref{alg:matzov}. Our quantum algorithm will be a modified version that relies essentially on the same analysis for the correctness but a new analysis for the quantum complexity. We give an overview of the involved parameters in \cref{tab:matzov-parameters}.

\begin{table}[htbp]
  \centering
  \caption{Dual attack parameters.}\label{tab:matzov-parameters}
  \begin{tabular}{l@{\hskip 1em}p{0.75\linewidth}}
    \toprule
    parameters & explanation\\
    \midrule
    \(n\), \(m\), \(\chi_{s}\), \(\chi_{e}\) & LWE parameters as in \cref{def:lwe}\\
    \(\beta_{0}, \beta_{1}\) & BKZ block size \(\beta_{0}\) and sieving dimension \(\beta_{1}\)\\
    \(p\) & modulus switching target modulus\\
    \(\mu\) & the target success probability \(0 < \mu < 1\)\\
    \(\sigma_{s}^{2}\), \(\sigma_{e}^{2}\) & variances of \(\chi_{s}\) and \(\chi_{e}\) respectively\\
    \(\vec{s}_{\enum}, \vec{s}_{\fft}, \vec{s}_{\rest}\) & components of \(\vec{s}\) covered by exhaustive search, FFT and lattice reduction\\
    \(\tilde{\vec{s}}_{\enum}, \tilde{\vec{s}}_{\fft}\) & guesses for \(\vec{s}_{\enum} \bmod q\) and \(\vec{s}_{\fft} \bmod p\)\\
    \(N_{\enum}(\vec{s}_\enum)\) & number of guesses with \(\tilde{\vec{s}}_{\enum}\) with probabilities larger than the probability of \(\vec{s}_{\enum}\) when drawing \(\vec{s}\) from \(\chi_{s}\).\\
    \(k_{\enum}, k_{\fft}, k_{\rest}\) & dimension of \(\vec{s}_{\enum}, \vec{s}_{\fft}\) or \(\vec{s}_{\rest}\) respectively\\
    \(\mat{A}_{\enum}, \mat{A}_{\fft},  \mat{A}_{\rest}\) & \(\mat{A}\cdot \vec{s}=\mat{A}_{\enum}\cdot \vec{s}_{\enum} + \mat{A}_{\fft} \cdot \vec{s}_{\fft} +\mat{A}_{\rest} \cdot \vec{s}_{\rest}\)\\
    \(\alpha\) & scaling/normalisation factor \(\alpha \coloneqq \sigma_{e}/\sigma_{s}\)\\
    \(L,D\) & list/number of short vectors returned by sieving oracle \(D \coloneqq |L|\)\\
    \(\psi\left(\vec{s}_{\fft}\right)\) & \(=\exp({\frac{2 \pi i c_{q^{\prime}}}{p} \sum_{t} s_{t}})\) where \(q' = q/\gcd(p,q)\), \(c_{q'} := 0\) when \(q'\) is odd and \(c_{q'} \coloneqq  1/2q'\) when \(q'\) is even\\
    \(C\) & FFT cutoff value for scoring function\\
    \(\Phi\) & defined in \Cref{sec:discrete_gaussian}\\
    \(\phi_{\mathrm{fp}}(\mu)\), \(\phi_{\mathrm{fn}}(\mu)\) & \(=\Phi^{-1}\left(1-\frac{\mu}{2 \cdot N_{\enum }\left(\vec{s}_{\enum}\right) \cdot p^{k_\fft}}\right)\), \(\Phi^{-1}\left(1-\frac{\mu}{2}\right)\)\\
     \(D_{\text{eq}}\) & exponential factor coming from the clean Fourier coefficient of the error in the dual attack equations for required samples\\
    \(D_{\text{round}}\) & exponential factor coming from the rounding when modulus switching for required samples\\
    \(D_{\text{arg}}\) & \(\approx 1/2\), improvement factor coming from considering the complex argument of the Fourier coefficient rather than only the magnitude.\\
    \(D_{\text{fpfn}}\) & a polynomial factor controlling false positives and negatives\\
    \bottomrule
  \end{tabular}

\end{table}

We are given a sample from $\LWE_{n,m,\chi_s,\chi_e}$, where $\chi_s$ and $\chi_e$ have small variance $\sigma_s^2$ and $\sigma_e^2$ respectively. We partition $\vec{s}$ into three components:
\[
\vec{s}=\begin{pmatrix}
	\vec{s}_{\enum}\\\vec{s}_{\fft}\\\vec{s}_{\rest}
\end{pmatrix}
\]
where $\vec{s}_{\enum}$ has $k_{\enum}$ coordinates, $\vec{s}_{\fft}$ has $k_{\fft}$ coordinates, and $\vec{s}_{\rest}$ has $k_{\rest}=n-k_{\enum}-k_{\fft}$ coordinates.  We split $\mat{A}$ into three components accordingly as well:
\[
   \mat{A} = \left[\mat{A}_{\enum} \ \mat{A}_{\fft} \  \mat{A}_{\rest}\right]
\]
so that \(\mat{A}\cdot \vec{s}=\mat{A}_{\enum}\cdot \vec{s}_{\enum} + \mat{A}_{\fft} \cdot \vec{s}_{\fft} +\mat{A}_{\rest} \cdot \vec{s}_{\rest}\). We define the matrix:
\[
\mat{B}=\begin{pmatrix}
\alpha\mat{I}_m & 0\\
\mat{A}_{\rest}^T & q\mat{I}_{k_{\rest}}
\end{pmatrix},
\]
where \(\alpha\) is a constant equal to \(\frac{\sigma_e}{\sigma_s}\) and is used for normalisation in the case that \(\vec{s},\vec{e}\) have different distributions. We find \(D\) short vectors of the form \(\begin{pmatrix}
\alpha \cdot \vec{x}_j\\\vec{y}_{j,\rest}
\end{pmatrix}\) in the column space of \(\mat{B}\) using some short vectors sampling procedure (see~\cref{alg:short_vectors_sampling}). Then, given a list \(L\) of \(D\) such vectors let \(\vec{y}_{j,\fft}\coloneqq \vec{x}_j^T\cdot \mat{A}_\fft\)
and \(\vec{y}_{j,\enum}\coloneqq \vec{x}_j^T\cdot \mat{A}_\enum\). We can then define the function \(F_L(\tilde{\vec{s}}_\enum,\tilde{\vec{s}}_\fft) =\)
{\small\[
    \Re\left(
        \frac{1}{\psi(\tilde{\vec{s}}_\fft)}\sum_{j}\exp\left(
            \left(\round{\frac{p}{q}\cdot\vec{y}_{j,\fft}}^T\cdot \tilde{\vec{s}}_\fft
            +\frac{p}{q}\cdot \vec{y}_{j,\enum}^T\cdot \tilde{\vec{s}}_\enum-\frac{p}{q}\cdot\vec{x}_j^T\cdot \vec{b}
            \right)\cdot\frac{2i\pi}{p}
        \right)
        \right)
\]}%
for all \(\tilde{\vec{s}}_\enum\in \ZZ_q^{k_\enum}\)
and \(\tilde{\vec{s}}_\fft\in\ZZ_p^{k_\fft}\). First, note that \(\tilde{\vec{s}}_\fft\) is mod \(p \ll q\) rather than \(q\) meaning that only up to \(p^{k_\fft}\) candidates have to be considered rather than \(q^{k_\fft}\). Applying modulus switching here is the key innovation of~\cite{Matzov22}, but since its details do not matter for us here, we refer the reader to~\cite{Matzov22} for details. Second, here \(\psi(\tilde{\vec{s}}_\fft)\) is a complex factor of norm \(1\) defined in~\cite[p.~25, proof of Lemma~5.4]{Matzov22} and easily computable (cf.~\cref{tab:matzov-parameters}). The function \(F_L\) essentially performs an FFT on values drawn
from a certain distribution.

Via an analysis that we do not reproduce, one can show that the function \(F_L\) above has the following properties with respect to some cutoff parameter \(C\) (computed below):
\begin{itemize}
\item If \(\tilde{\vec{s}}_\enum\neq \vec{s}_\enum\) then \(F_L(\tilde{\vec{s}}_\enum,\tilde{\vec{s}}_\fft)<C\) for all \(\tilde{\vec{s}}_\fft\in\ZZ_p^{k_\fft}\).
\item \(F_L(\vec{s}_\enum,\vec{s}_\fft)>C\)
\item There might be \(\tilde{\vec{s}}_\fft\neq\vec{s}_\fft\) such that
  \(F_L(\vec{s}_\enum,\tilde{\vec{s}}_\fft)>C\).
\end{itemize}
The first point corresponds to a wrong guess. In this case, values on which the FFT
is performed follow a uniform distribution and the expected value of $F_L(\tilde{\vec{s}}_\enum,\tilde{\vec{s}}_\fft)$ is $0$.
The second point corresponds to the correct guess. In this case, values on which the FFT
is performed essentially follow a normal distribution with nonzero mean and therefore the expected
value of $F_L(\vec{s}_\enum,\vec{s}_\fft)$ is nonzero. By carefully choosing the value of $C$,
and taking sufficiently many samples in the list, we can ensure that these properties hold
with high probability. The third point follows from the fact that~\cite{Matzov22}
performs a modulo switching operation that can introduce some errors and makes the analysis
of $F_L(\vec{s}_\enum,\tilde{\vec{s}}_\fft)$ with $\tilde{\vec{s}}_\fft\neq\vec{s}_\fft$ more
difficult. Consequently, it is simpler to assume that we can only recover $\vec{s}_\enum$
with certainty. We can therefore reformulate the algorithm of~\cite{Matzov22} as looking
for $\tilde{\vec{s}}_\enum$ such that there exists $\tilde{\vec{s}}_\fft$ such that
$F_L(\tilde{\vec{s}}_\enum,\tilde{\vec{s}}_\fft)>C$.

The claimed relationships hold with high probability over the choice of the elements in \(L\), assuming sufficiently many vectors. Here ``sufficiently many'' depends on the input LWE parameters as well as parameters of the dual attack algorithm. In particular, this magnitude depends on
\begin{enumerate*}[label=(\alph*)]
\item \(D_{\text{eq}}\) which is an exponential factor coming from the LWE error,
\item \(D_{\text{round}}\) which is an exponential factor coming from the rounding when modulus switching and
\item \(D_{\text{fpfn}}\) is a polynomial factor controlling false positives and negatives.
\end{enumerate*} The following lemma formally states the required magnitudes.
\begin{lemma}[Adapted from Theorem~5.2 in~\cite{Matzov22}]\label{lem:matzov-c-d}
  Let $(n,m,q,\chi_s,\chi_e)$ be LWE parameters and \((\beta_0,\beta_1,k_\enum,k_\fft,k_\rest,p)\) be parameters for \Cref{alg:matzov}. Let \(0 < \mu < 1\) be the targeted failure probability. Let \(\sigma_{e}\) be the standard deviation of \(\chi_{e}\), \(\sigma_{s}\) be the standard deviation of \(\chi_{s}\) and \(\alpha = \sigma_{e}/\sigma_{s}\). Denote by \(\ell\) the expected Euclidean length of the vectors returned by \Cref{alg:short_vectors_sampling}. Then, \Cref{alg:matzov} succeeds with probability at least \(1-\mu\) for
  \[
    C = \phi_{\mathrm{fp}}(\mu) \cdot \sqrt{D_{\mathrm{arg}} \cdot D} \quad \textnormal{and}\quad D \geq D_{\mathrm{eq}} \cdot D_{\mathrm{round}} \cdot D_{\mathrm{arg}} \cdot D_{\mathrm{fpfn}}(\mu)\]
 where
 \begin{align*}
   \phi_{\mathrm{fp}}(\mu),\ \phi_{\mathrm{fn}}(\mu) &= \Phi^{-1}\left(1-\frac{\mu}{2 \cdot N_{\enum }\left(\vec{s}_{\enum}\right) \cdot p^{k_\fft}}\right),\ \Phi^{-1}\left(1-\frac{\mu}{2}\right)\\
    D_{\mathrm{eq}} &=  \exp\left({4{\left(\frac{\pi \tau}{q}\right)}^{2}}\right) \textnormal{ for } \tau^{2}=\frac{\alpha^{-2}\cdot\|\vec{e}\|^{2}+\left\|\vec{s}_{\mathrm{lat}}\right\|^{2}}{m+k_{\mathrm{lat}}} \ell^{2},\\
    D_{\mathrm{round}}=& {\left(\prod_{\substack{t=0 \\ s_{t} \neq 0}}^{k_{{\mathrm{fft}}}-1}\left(\frac{\sin \left(\frac{\pi s_{t}}{p}\right)}{\frac{\pi s_{t}}{p}}\right)\right)}^{-2} \textnormal{ for } \vec{s}_{{\mathrm{fft}}} = (s_0, \ldots, s_{k_{{\mathrm{fft}}}-1}),\\
  D_{\arg} =& \frac{1}{2}+\exp\left({-8{\left(\frac{\pi \tau}{q}\right)}^{2}}\right) \textnormal{ and } D_{\mathrm{fpfn}}(\mu) = {\left(\phi_{\mathrm{fp}}(\mu) + \phi_{\mathrm{fn}}(\mu)\right)}^{2}.
\end{align*}
\end{lemma}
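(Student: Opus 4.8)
\medskip\noindent\textbf{Proof sketch (plan).}
The plan is to follow the statistical analysis underlying Theorem~5.2 of~\cite{Matzov22}. After the reformulation above --- we only need to recover $\vec{s}_\enum$, so a spurious $\tilde{\vec{s}}_\fft\neq\vec{s}_\fft$ scoring above the cutoff does no harm --- it suffices to control two events with total failure budget $\mu$: (i) \emph{no} wrong enumeration guess scores above $C$, and (ii) the correct pair $(\vec{s}_\enum,\vec{s}_\fft)$ scores above $C$, each allotted budget $\mu/2$. Writing $F_L(\tilde{\vec{s}}_\enum,\tilde{\vec{s}}_\fft)=\Re\bigl(\psi(\tilde{\vec{s}}_\fft)^{-1}\sum_{j}\omega_j\bigr)$ with $\omega_j=\exp\bigl(\tfrac{2i\pi}{p}\theta_j\bigr)$ and $\theta_j$ the bracketed phase from the definition of $F_L$, everything reduces to understanding the distribution of $\sum_j\omega_j\in\CC$ in the two regimes and then applying a union bound.

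First I would treat a wrong guess $\tilde{\vec{s}}_\enum\neq\vec{s}_\enum$. Here the phases $\theta_j$ are essentially uniform over $\ZZ_p$ and sufficiently independent, so each $\Re(\psi^{-1}\omega_j)$ is a bounded, symmetric, mean-zero random variable of variance $\tfrac12$ up to the small correction captured by $D_{\arg}=\tfrac12+\exp(-8(\pi\tau/q)^2)$, which absorbs the contribution of the LWE error that survives an incorrect guess. A Hoeffding/Chernoff tail bound (or the Berry--Esseen-refined central limit theorem used in~\cite{Matzov22}) then gives $\Pr[F_L(\tilde{\vec{s}}_\enum,\tilde{\vec{s}}_\fft)>C]\lesssim 1-\Phi\bigl(C/\sqrt{D_{\arg}D}\bigr)$. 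Since at most $N_{\enum}(\vec{s}_\enum)$ enumeration candidates are tried before $\vec{s}_\enum$, each carrying $p^{k_\fft}$ FFT points, demanding that the union bound over these $N_{\enum}(\vec{s}_\enum)p^{k_\fft}$ events be at most $\mu/2$ forces $C/\sqrt{D_{\arg}D}\geq\Phi^{-1}\bigl(1-\tfrac{\mu}{2N_{\enum}(\vec{s}_\enum)p^{k_\fft}}\bigr)=\phi_{\mathrm{fp}}(\mu)$; taking equality yields $C=\phi_{\mathrm{fp}}(\mu)\sqrt{D_{\arg}D}$.

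Second I would treat the correct guess $(\vec{s}_\enum,\vec{s}_\fft)$. Now $\theta_j$ carries a deterministic signal plus two independent perturbations: the normalised LWE-type error $\alpha^{-1}\langle\vec{x}_j,\vec{e}\rangle+\langle\vec{y}_{j,\rest},\vec{s}_\rest\rangle$, and the modulus-switching rounding residue $\round{\tfrac pq\vec{y}_{j,\fft}}-\tfrac pq\vec{y}_{j,\fft}$ paired with $\vec{s}_\fft$. Taking expectations term by term, the first perturbation contributes the characteristic function of a Gaussian of the relevant variance --- with $\tau$ as in the statement, built from the expected length $\ell$ of the vectors output by \cref{alg:short_vectors_sampling} and the normalisation $\alpha$ --- which equals $\exp(-2(\pi\tau/q)^2)=D_{\mathrm{eq}}^{-1/2}$; the second contributes, coordinate by coordinate, the factor $\prod_{t:\,s_t\neq0}\sin(\pi s_t/p)/(\pi s_t/p)=D_{\mathrm{round}}^{-1/2}$. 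Hence $\E[F_L(\vec{s}_\enum,\vec{s}_\fft)]\approx D\cdot D_{\mathrm{eq}}^{-1/2}D_{\mathrm{round}}^{-1/2}$, while the fluctuation about this mean again has standard deviation $\approx\sqrt{D_{\arg}D}$. Requiring $F_L(\vec{s}_\enum,\vec{s}_\fft)>C$ except with probability $\mu/2$ becomes, via the normal approximation, $\E[F_L]-\phi_{\mathrm{fn}}(\mu)\sqrt{D_{\arg}D}\geq C=\phi_{\mathrm{fp}}(\mu)\sqrt{D_{\arg}D}$; substituting the value of $\E[F_L]$ and squaring gives $D\geq D_{\mathrm{eq}}D_{\mathrm{round}}D_{\arg}(\phi_{\mathrm{fp}}(\mu)+\phi_{\mathrm{fn}}(\mu))^2=D_{\mathrm{eq}}D_{\mathrm{round}}D_{\arg}D_{\mathrm{fpfn}}(\mu)$, as claimed.

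The main obstacle --- and the reason the statement is phrased as ``adapted from'' rather than proved from scratch --- is making the distributional approximations rigorous: the $D$ vectors returned by \cref{alg:short_vectors_sampling} are neither independent nor exactly of length $\ell$, the phases $\theta_j$ are only heuristically uniform in the wrong-guess case and Gaussian-perturbed in the correct one, and the Berry--Esseen step needs control of third moments. As in~\cite{Matzov22}, these are handled by modelling assumptions on the output of the short-vector sampler and on the surviving error distribution; since our correctness argument is verbatim theirs and only the complexity analysis is new, I would import the estimates for $\E[F_L]$, its variance, and the relevant tail bounds directly from~\cite[Theorem~5.2]{Matzov22} and merely recombine them into the stated pair $(C,D)$.
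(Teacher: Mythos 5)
Your proposal is correct and takes essentially the same route as the paper: the paper gives no independent proof of this lemma but simply adapts it from~\cite[Theorem~5.2]{Matzov22}, and your sketch reproduces exactly the argument it relies on --- a union bound over the $N_{\enum}(\vec{s}_{\enum})\cdot p^{k_\fft}$ wrong-guess events fixing $C=\phi_{\mathrm{fp}}(\mu)\sqrt{D_{\mathrm{arg}}\cdot D}$, and a mean/variance computation for the correct guess (with $\E[F_L]\approx D\cdot D_{\mathrm{eq}}^{-1/2}D_{\mathrm{round}}^{-1/2}$ and fluctuation $\approx\sqrt{D_{\mathrm{arg}}\cdot D}$) forcing $D\geq D_{\mathrm{eq}}D_{\mathrm{round}}D_{\mathrm{arg}}D_{\mathrm{fpfn}}(\mu)$. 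The intermediate tail bounds you invoke are precisely the ones the paper later quotes from that proof (in the proof of \Cref{lem:correctness_quantum}), so deferring the rigorous distributional modelling to~\cite{Matzov22}, as you do, matches the paper's treatment.
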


\begin{remark}
  In~\cite{Matzov22} two theorems are given establishing costs: Theorem~5.2 (essentially reproduced above), which is with respect to a fixed tuple \((\vec{s},\vec{e})\) and Theorem~5.9 which is with respect to distributions \(\chi_{s}, \chi_{e}\) and which will be restated in \Cref{lem:correctness_quantum_all}.
\end{remark}

\begin{algorithm}[h]
  \KwIn{LWE parameters $(n,m,q,\chi_s,\chi_e)$,
    integers $\beta_0,\beta_1\leqslant d$,
    integers $k_\enum,k_\fft,k_\rest$ such that
    $k_\enum+k_\fft+k_\rest=n$,
    an integer $p\leqslant q$,
    an integer $D$,
    a real number $C$,
    and an LWE pair $(\mat{A},\vec{b})\in\ZZ_q^{m\times n}\times \ZZ_q^m$.
  }
  \KwOut{(Guess of) the first $k_\enum$ coordinates of $\vec{s}$ or $\bot$.}
  Decompose $\mat{A}$ as $\begin{bmatrix}\mat{A}_\enum&\mat{A}_\fft&\mat{A}_\rest\end{bmatrix}$ of respective dimensions $m\times k_\enum$, $m\times k_\fft$ and $m\times k_\rest$.\;
  Compute the matrix $\mat{B}=\begin{bmatrix}\alpha \mat{I}_m&\mat{0}\\ \mat{A}_\rest^T&q\mat{I}_{k_\rest}\end{bmatrix}$
  where $\alpha=\tfrac{\sigma_e}{\sigma_s}$ \;
  Run \cref{alg:short_vectors_sampling} on the basis $\mat{B}$ with parameters $\beta_0,\beta_1,D$ to get a list $L$ of $D$ short vectors.
  \;\label{alg:matzov:line:bkz}
  \For{every value $\tilde{\vec{s}}_\enum$ in decreasing order of likelihood according to the
    secret distribution\label{alg:matzov:line:decreasing}}{
    Initialise a table $T$ of dimensions
    $\underbrace{p\times p\times \cdots p}_{k_\fft\text{ times}}$ \;
    \label{alg:matzov:line:loop_body_start}
    \For{every short vector $(\alpha \vec{x}_j,\vec{y}_\rest)$ in $L$}{
        \label{alg:matzov:line:fft_fill_start}
      Compute $\vec{y}_{j,\fft}=\vec{x}_j^T\cdot \mat{A}_\fft$. \;
      Compute $\vec{y}_{j,\enum}=\vec{x}_j^T\cdot \mat{A}_\enum$. \;
      Add $\exp\left((\vec{x}_j^T \cdot \vec{b} -\vec{y}^T_{j,\enum}\cdot \tilde{\vec{s}}_\enum)\cdot \tfrac{2i\pi}{q}\right)$
      to cell $\lfloor\tfrac{p}{q}\vec{y}_{j,\fft}\rceil$ of $T$. \;\label{alg:matzov:line:open-problem}
    }
    \label{alg:matzov:line:fft_fill_end}
    Perform FFT on $T$ \;
    \label{alg:matzov:line:do_fft}
    \If{for any $\tilde{\vec{s}}_\fft$, the real part of
      $\tfrac{1}{\psi(\tilde{\vec{s}}_\fft)}T[\tilde{\vec{s}}_\fft]$ is larger than $C$}{
      \label{alg:matzov:line:fft_test}
      \Return{$\tilde{\vec{s}}_\enum$.}
    }
    \label{alg:matzov:line:loop_body_end}
  }
  \Return{$\bot$}
  \caption{\label{alg:matzov}Dual Attack of~\cite{Matzov22}}
\end{algorithm}

\begin{remark}
In \cref{alg:matzov:line:decreasing} the algorithm asks to enumerate candidates for $\tilde{\vec{s}}_\enum$ in decreasing order of likelihood. A technique for achieving this enumeration is discussed in~\cite[Section~5.4]{Matzov22}. Roughly, this is accomplished by enumerating with increasing bounds on log-likelihoods for candidates, see~\cite{Matzov22} for details.
\end{remark}

\section{Quantum Guessing}\label{sec:guessing}

In this section, we give a more rigorous analysis of the classical and quantum guessing complexity for targets following a discrete Gaussian distribution. In more detail, let $X$ be a random variable on a finite or countable set. We consider the problem of guessing the value taken by $X$ by asking questions of the form ``Is $X$ equal to $x$?'' until the answer is yes. This problem arises when we must find the secret $\vec{s}_\enum$ in the dual attack by asking the question ``is the secret equal to $\tilde{\vec{s}}_\enum$?''. Let $N$ be the number of guesses used in the guessing strategy that minimises $\Exp{}{N}$. It can be shown that the best strategy is to try values of $X$ in decreasing order of probability. Without loss of generality, we can identify the possible values of $X$ with $\NN$ in such a way that $p_0\geq p_1\geq p_2\geq \cdots $ where $p_i=\Prob{}{X=i}$. The expected number of guesses of the optimal strategy is therefore
\[
G(X)=\sum_ii\cdot p_i.
\]
It is well-known that a \emph{lower bound} on $G(X)$ is given by the entropy of $X$.
More precisely, Massey showed in~\cite{Massey94} that
\[
G(X)\geq \tfrac{1}{4}\cdot 2^{H(X)}+1
\]
provided that $H(X)\geq 2$ bits, where $H$ denotes Shannon's entropy (i.e.~in base 2).
On the other hand, the same work shows that it is not, in general, possible to bound $G(X)$
in terms of $H(X)$ only. In \Cref{lem:guessing_complexity_of_discrete_gaussian_multidim}, we
heuristically show that
$G(X)\approx {(\tfrac{2}{\sqrt{e}})}^n\cdot 2^{H(X)}$
when $X$ is distributed according to a $n$-dimensional discrete Gaussian.

In this work, we are interested in the \emph{quantum complexity} of guessing.
Montanaro showed~\cite{Montanaro11} that the expected number of guesses in this case becomes,
\[
     G^{qc}(X)=\sum_i\sqrt{i}\cdot p_i
\]
and that this is the best possible. However,~\cite{Montanaro11} only deals with the case where the oracle (which answers the question ``is $X$ equal to $x$?'') always returns the correct answer. Using \Cref{lem:quantum_find_first}, a variant of Grover search that can handle two-sided errors, we extend this algorithm to deal with bounded error oracles.

\begin{lemma}\label{lem:quantum_guessing}
  Let $X$ be a random variable taking values in some (effectively describable) set $E$. Assume that there is an efficiently computable bijective function $\sigma:\NN\to E$ such that for all $i\leqslant j$, $\Prob{X}{X=\sigma(i)}\geqslant\Prob{X}{X=\sigma(j)}$, \emph{i.e.} $\sigma$ orders $E$ by non-increasing probability according to $X$. Let ${(\mathcal{O}_x)}_{x\in E}$ be a collection of oracles such that for any $x\in E$, $\mathcal{O}_x(x)$ returns $1$ with probability at least $9/10$ and for all $y\neq x$, $\mathcal{O}_x(y)$ returns $0$ with probability at least $9/10$. Then there is a quantum algorithm $\mathcal{A}$, with quantum oracle access to $\sigma$ and $\mathcal{O}_x$ such that for all $x\in E$, $\mathcal{A}^{\sigma,\mathcal{O}_x}()=x$ with constant probability, and
    \[
        \Exp{X}{T(X)}=O(G^{qc}(X)),
        \qquad
        \Exp{X}{Q(X)}=O(G^{qc}(X)),
    \]
    where $T(x)$ is the running time complexity of
    $\mathcal{A}^{\sigma,\mathcal{O}_x}()$, and $Q(x)$ its query
    complexity.
\end{lemma}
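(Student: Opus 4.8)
The plan is to reduce to \Cref{lem:quantum_find_first}. We identify $E$ with $\NN$ via $\sigma$, so that guessing $X$ becomes the problem of finding the first index $i$ such that the answer to ``is $X = \sigma(i)$?'' is yes. Concretely, fix the (unknown) target $x \in E$ and let $i^\star = \sigma^{-1}(x)$. Define $f : \NN \to \bin$ by $f(i) = 1$ iff $\sigma(i) = x$; then $f$ has a unique $1$, located at $i^\star$, and $n_0 = i^\star$ in the language of \Cref{lem:quantum_find_first}. The oracle $\mathcal{O}_x$ (composed with $\sigma$ to map indices to elements of $E$) is exactly a bounded-error algorithm computing $f$: on input $i^\star$ it returns $1$ with probability $\geq 9/10$, and on any other index it returns $0$ with probability $\geq 9/10$, matching the hypothesis of the lemma (note $9/10 \geq 9/10$ so the ``bounded error'' requirement is met). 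Applying \Cref{lem:quantum_find_first} with $N$ taken large enough to contain $i^\star$ (or, to be careful about the unbounded support, run it with a doubling sequence of cutoffs $N = 1, 2, 4, 8, \ldots$ until a positive index is returned) yields a quantum algorithm $\mathcal{A}^{\sigma,\mathcal{O}_x}$ that returns $i^\star$ with constant probability, runs in expected time $O(\sqrt{i^\star})$, and makes $O(\sqrt{i^\star})$ expected oracle calls, using polynomially many qubits. Translating back through $\sigma$, the output is $\sigma(i^\star) = x$.

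It then remains to take the expectation over $X$. Conditioned on $X = \sigma(i)$, the running time and query complexity are $O(\sqrt{i})$, so
\[
\Exp{X}{T(X)} = \sum_{i} \Prob{X}{X = \sigma(i)} \cdot O(\sqrt{i}) = O\!\left(\sum_i \sqrt{i}\, p_i\right) = O(G^{qc}(X)),
\]
where $p_i = \Prob{X}{X = \sigma(i)}$ and the ordering assumption on $\sigma$ makes this sum match Montanaro's $G^{qc}(X) = \sum_i \sqrt{i}\, p_i$ exactly. The same computation gives $\Exp{X}{Q(X)} = O(G^{qc}(X))$. The constant success probability is independent of $X$ because the guarantee from \Cref{lem:quantum_find_first} is uniform in $n_0$.

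The main technical nuisance — rather than a deep obstacle — is handling the potentially infinite (or merely unbounded) support $E$: \Cref{lem:quantum_find_first} is stated for a finite range $[0, N-1]$, whereas here $i^\star$ can be arbitrarily large. The doubling trick resolves this: run $\mathcal{A}^{\mathcal{O}}$ on ranges $[0, 2^t - 1]$ for $t = 0, 1, 2, \ldots$; by the ``only queries $\mathcal{O}$ on values in $[0, \min(N-1, 2i)]$'' clause and the $O(\sqrt{N})$ bound when no index is found, the total expected cost across all rounds up to the first $t$ with $2^t > i^\star$ is a geometric sum dominated by its last term $O(\sqrt{2^t}) = O(\sqrt{i^\star})$, preserving the bound. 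A minor point to check is that the per-round constant failure probability can be amplified (by $O(1)$ repetitions with majority vote on the bounded-error oracle, or by rerunning the find-first routine) so that the overall algorithm still succeeds with constant probability without inflating the expected complexity beyond a constant factor; this is routine. Finally, one should note the $\NN$-indexing is a labelling convention only — the algorithm never needs $E$ to be ordered intrinsically, just the existence of the efficiently computable $\sigma$, which is exactly what the hypothesis supplies.
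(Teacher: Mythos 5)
Your overall route is the same as the paper's: reduce to \Cref{lem:quantum_find_first} via $\sigma$ and keep the cost under control by searching over geometrically growing prefixes. The genuine gap is in the correctness of the doubling loop. In every round whose range does not yet contain $i^\star=\sigma^{-1}(x)$ there is no marked index, and \Cref{lem:quantum_find_first} only guarantees returning $\bot$ with constant probability; with the remaining constant probability (up to $1/10$) it may return a spurious index. If you stop ``as soon as a positive index is returned'', or verify the candidate with only $O(1)$ repetitions of the bounded-error oracle as you suggest, the per-round false-positive probability remains a fixed constant, and over the $\Theta(\log i^\star)$ rounds preceding the target the probability of terminating with a wrong answer tends to $1$ as $i^\star$ grows; there is then no success probability bounded below by a constant uniformly in $x$, which is what the lemma asserts. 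This is precisely why the paper's algorithm verifies a candidate $i$ in the round with bound $n$ by a majority vote over $1+2\log n$ oracle calls: the per-round error then drops to roughly $10^{-k-1}$ in round $k$, these errors sum to at most $1/9$ over all rounds, and the extra $O(\log n)=O(\sqrt{2^k})$ verification cost is absorbed into the round cost. So the fix is not ``routine $O(1)$ amplification''; the amplification must grow with the round index (any choice making the per-round error summable works).

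A secondary omission: your expected-cost bound only sums the rounds up to the first range containing $i^\star$. With constant probability the search (or the verification) fails in that round and the algorithm continues into strictly more expensive rounds; one must check, as the paper does, that the continuation probability decays geometrically (at most about $1/5$ per round) while the per-round cost grows only by a factor $\sqrt{2}$, so the tail contributes only $O(\sqrt{i^\star})$. With these two points repaired, the remaining steps of your argument (identifying $E$ with $\NN$, treating $\mathcal{O}_x\circ\sigma$ as a bounded-error oracle for the indicator of $i^\star$, and the final averaging $\sum_i \Prob{X}{X=\sigma(i)}\,O(\sqrt{i})=O(G^{qc}(X))$) coincide with the paper's proof.
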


\begin{proof}
    See \Cref{sec:proof_guessing_complexity}.
\end{proof}

We now study the guessing complexity of a \(n\)-dimensional discrete Gaussian, its modular version and also relate these quantities to the entropy. The reason why we also study the (non-modular) Gaussian is that it is not clear how to order the elements by decreasing probability in the modular case, whereas it is easy in the non-modular one. Therefore we study the discrete Gaussian first and then show that its guessing complexity is an upper bound on the guessing complexity of the modular Gaussian.

% \begin{lemma}\label{lem:guessing_complexity_of_discrete_gaussian_multidim}
%     Let $n\in\NN$ large, $\sigma\geqslant 0.5$ and $X=D_{\ZZ^n,\sigma}$. Then
% 	\[
% 	G(X)\approx \frac{n!\cdot {(\sigma\sqrt{2\pi})}^{n}}{2\,{\Gamma(\tfrac{n}{2}+1)}^2},
% 	\quad
% 	H(X)\approx n\log_2(\sigma\sqrt{2e\pi}),
% 	\quad
% 	G(X)\approx \frac{n! \cdot 2^{H(X)}\cdot e^{-n/2}}{2\,\Gamma(\tfrac{n}{2}+1)^2}
% 	\]
% 	and
% 	\[
% 	G^{qc}(X)\approx \frac{n\cdot{(2\pi\sigma^2)}^{n/4}\cdot \Gamma(\tfrac{3n}{4})}{2\,{\Gamma(\tfrac{n}{2}+1)}^{3/2}}.
% 	\]
% 	Furthermore,
% 	\[
% 	    G(X)\sim_{n\to\infty}
% 	        \frac{1}{\sqrt{2\pi n}}\cdot {\left(\frac{2}{\sqrt{e}}\right)}^n \cdot 2^{H(X)},
% 	    \quad
% 	    G^{qc}(X)\sim_{n\to\infty} \frac{\sqrt{6}}{3(\pi n)^{\frac{1}{4}}} \cdot
%     {\left(\frac{27}{8e}\right)}^{\frac{n}{4}} \cdot 2^{H(X)/2}.
% 	\]
% 	Finally, for all $n$ and $\sigma$,
% 	\[
% 	    G(D_{\ZZ_q^n,\sigma})\leqslant 2\,G(X),
% 	    \qquad
% 	    G^{qc}(D_{\ZZ_q^n,\sigma})\leqslant \frac{3}{2}\,G^{qc}(X).
% 	\]
% \end{lemma}
\begin{lemma}\label{lem:guessing_complexity_of_discrete_gaussian_multidim}
    Let $n\geqslant 4$. Then
    \[
        1\leqslant \frac{\rho_\sigma(\ZZ^n)}{(\sigma\sqrt{2\pi})^n}\leqslant \coth(\pi^2\sigma^2),
        \qquad
        G(D_{\ZZ^n,\sigma})\leqslant
	        \frac{2^n\rho_\sigma(\ZZ^n)}{1-e^{-1/2\sigma^2}},
	\]
	and
	\[
	    -n\frac{8\pi^2\sigma^2
            e^{-2\pi^2\sigma^2}}{\log(2)\coth(\pi^2\sigma^2)}
	    \leqslant
	   % H(D_{\ZZ^n,\sigma})-\frac{n}{2\log(2)}-n\log_2(\sigma\sqrt{2\pi})
	   H(D_{\ZZ^n,\sigma})-\frac{\tfrac{1}{2}+\log(\sigma\sqrt{2\pi})}{\log(2)}n
	    \leqslant
            n\log_2(\coth(\pi^2\sigma^2)).
	\]
	Furthermore, if $\sigma\geqslant \sqrt[4]{\frac{2}{27\pi^2}}\approx 0.294$ then
	\[
	    G^{qc}(D_{\ZZ^n,\sigma})\leqslant
	        \frac{7}{6}\cdot\left(\frac{3}{2}\right)^{3n/4}
            \frac{\sqrt{\rho_{\sigma}(\ZZ^n)}}{(1-e^{-1/3\sigma^2})^{3/2}}.
	\]
	Furthermore,
	\begin{align*}
	    G(D_{\ZZ^n,\sigma})
	    &\leqslant \frac{1}{1-e^{-1/2\sigma^2}}\left(\frac{2a(\sigma)}{\sqrt{e}}\right)^n
            \cdot 2^{H(D_{\ZZ^n,\sigma})},\\
        G^{qc}(D_{\ZZ^n,\sigma})
	    &\leqslant \frac{7}{6}\cdot\frac{1}{(1-e^{-1/3\sigma^2})^{3/2}}
	    \left(\frac{27a(\sigma)^2}{8e}\right)^{n/4}
            \cdot 2^{H(D_{\ZZ^n,\sigma})/2}
	\end{align*}
	where $a(\sigma)=e^{8\pi^2\sigma^2
            e^{-2\pi^2\sigma^2}\tanh(\pi^2\sigma^2)}$
    is plotted on Figure~\ref{fig:alpha_sigma} and
    \[
        a(\sigma)=1+8\pi^2\sigma^2 e^{-2\pi^2\sigma^2}+o(\sigma^2 e^{-\pi^2\sigma^2}),
        \qquad
        \sigma\to\infty.
    \]
	Finally, for all $n$ and $\sigma$,
	\[
	    G(D_{\ZZ_q^n,\sigma})\leqslant 2\,G(D_{\ZZ^n,\sigma}),
	    \qquad
	    G^{qc}(D_{\ZZ_q^n,\sigma})\leqslant \frac{3}{2}\,G^{qc}(D_{\ZZ^n,\sigma}).
	\]
\end{lemma}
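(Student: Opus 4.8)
The plan is to reduce everything to one-dimensional estimates on $\rho_\sigma(\intg)$ and then tensor up, exploiting that $D_{\ZZ^n,\sigma}$ is a product distribution. First I would establish the bracketing inequality for $\rho_\sigma(\ZZ^n)$: by the product structure it suffices to show $1 \leq \rho_\sigma(\intg)/(\sigma\sqrt{2\pi}) \leq \coth(\pi^2\sigma^2)$, and the right-hand side is exactly the ratio one gets from the Poisson summation formula applied to the Gaussian, namely $\rho_\sigma(\intg) = \sigma\sqrt{2\pi}\,\sum_{k\in\intg} e^{-2\pi^2\sigma^2 k^2} = \sigma\sqrt{2\pi}\,\theta$ with $1 \leq \theta \leq \sum_k e^{-2\pi^2\sigma^2|k|} = \coth(\pi^2\sigma^2)$ (a geometric-series bound on the tail). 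Raising to the $n$-th power gives the stated two-sided bound. The lower bound $\rho_\sigma(\ZZ^n)\geq(\sigma\sqrt{2\pi})^n$ is just $\theta\geq 1$.

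Next, for the guessing bounds, I would use the crude but robust estimate that ordering $\ZZ^n$ by decreasing $\rho_\sigma$ value, the $i$-th element has probability at most that of a ball-counting argument: $G(D_{\ZZ^n,\sigma}) = \sum_{\vec{x}} r(\vec{x})\,D_{\ZZ^n,\sigma}(\vec{x})$ where $r(\vec{x})$ is the rank. Since all $\vec{y}$ with $\rho_\sigma(\vec{y})\geq\rho_\sigma(\vec{x})$, i.e.\ $\|\vec{y}\|\leq\|\vec{x}\|$, precede $\vec{x}$, and the number of lattice points in a ball of radius $\|\vec{x}\|$ is at most $\prod_i(2|x_i|+1) \le 2^n \prod_i 2^{|x_i|}$ (using $2t+1\le 2^{t+1}$ for integers $t\ge 0$... more carefully $2|x_i|+1 \le 2\cdot 2^{|x_i|}$), one bounds $r(\vec{x})\le 2^n \prod_i 2^{|x_i|}$ and hence $G(D_{\ZZ^n,\sigma}) \le 2^n \sum_{\vec x} \prod_i 2^{|x_i|} \rho_\sigma(x_i)/\rho_\sigma(\ZZ^n)\cdot\rho_\sigma(\ZZ^n)$; the sum factorises into $n$ copies of $\sum_{t\in\intg} 2^{|t|} e^{-t^2/2\sigma^2}$, which I would bound by $\rho_\sigma(\intg)/(1-e^{-1/2\sigma^2})$ via a ratio-test/geometric-domination argument on consecutive terms. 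This yields $G(D_{\ZZ^n,\sigma}) \le 2^n\rho_\sigma(\ZZ^n)/(1-e^{-1/2\sigma^2})$. For $G^{qc}$ the same scheme applies with $\sqrt{r(\vec x)}$ in place of $r(\vec x)$: $\sqrt{r(\vec x)} \le 2^{n/2}\prod_i 2^{|x_i|/2}$, and one instead uses a Hölder/Jensen step or directly bounds $\sum_t 2^{|t|/2} e^{-t^2/2\sigma^2}$; the $(3/2)^{3n/4}$ and $(1-e^{-1/3\sigma^2})^{-3/2}$ shapes suggest splitting the exponent as $e^{-t^2/2\sigma^2} = e^{-t^2/3\sigma^2}\cdot e^{-t^2/6\sigma^2}$ and applying Cauchy--Schwarz with weights $2/3$ and $1/3$, so I would follow that route, which is where the constant $\sqrt[4]{2/(27\pi^2)}$ enters as the threshold ensuring convergence/positivity of the relevant bound.

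For the entropy estimate I would compute $H(D_{\ZZ^n,\sigma})$ directly: $H = -\sum_{\vec x} D(\vec x)\log_2 D(\vec x) = \log_2\rho_\sigma(\ZZ^n) + \tfrac{1}{\log 2}\sum_{\vec x}\tfrac{\|\vec x\|^2}{2\sigma^2}D(\vec x) = n\log_2\rho_\sigma(\intg) + \tfrac{n}{2\sigma^2\log 2}\,\Exp{}{t^2}$ where $t\sim D_{\intg,\sigma}$, using additivity of entropy over the product and $\Exp{}{\|\vec x\|^2}=n\Exp{}{t^2}$. For the one-dimensional second moment I would again invoke Poisson summation (differentiating the theta identity) to get $\Exp{}{t^2} = \sigma^2(1 + \varepsilon(\sigma))$ with $|\varepsilon(\sigma)|$ controlled by $O(\sigma^2 e^{-2\pi^2\sigma^2})$ — this explains the $\tfrac12$ in the main term and the $e^{-2\pi^2\sigma^2}$ corrections — while $\log_2\rho_\sigma(\intg) = \log_2(\sigma\sqrt{2\pi}) + \log_2\theta$ with $0\le\log_2\theta\le\log_2\coth(\pi^2\sigma^2)$ from the first part. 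Combining gives the stated two-sided bracket around $\tfrac{1/2+\log(\sigma\sqrt{2\pi})}{\log 2}\,n$. The two ``furthermore'' displays combining $G$ (resp.\ $G^{qc}$) with $2^{H}$ (resp.\ $2^{H/2}$) then follow by substituting $2^{H(D_{\ZZ^n,\sigma})} = \rho_\sigma(\ZZ^n)^{n/n}\cdot\ldots$, i.e.\ eliminating $\rho_\sigma(\ZZ^n)$ between the guessing bound and the entropy bound; the factor $a(\sigma)$ precisely packages the discrepancy $\coth$/moment-correction terms, and its asymptotic expansion follows from $\coth(x) = 1 + 2e^{-2x}+o(e^{-2x})$ and $\tanh(x)\to1$. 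Finally, the modular claims: since $D_{\ZZ_q^n,\sigma}(\vec x) = \rho_\sigma(\vec x + q\ZZ^n)/\rho_\sigma(\ZZ^n)$, the map $\ZZ^n\to\ZZ_q^n$ pushes $D_{\ZZ^n,\sigma}$ to $D_{\ZZ_q^n,\sigma}$, and collapsing cosets can only decrease guessing cost up to a constant accounting for the fact that the optimal modular ordering may differ from the induced one — I would argue that guessing in $\ZZ^n$ order and reporting the residue gives a valid (suboptimal) modular strategy, and that the rank of a residue class is at most the rank in $\ZZ^n$ of its shortest representative, losing only the multiplicative factors $2$ (resp.\ $3/2$ after the square root, via $\sqrt{2}\le 3/2$).

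The main obstacle I expect is pinning down the one-dimensional exponential sums $\sum_t 2^{c|t|}e^{-t^2/2\sigma^2}$ with clean closed-form upper bounds of the exact shape $\rho_\sigma(\intg)/(1-e^{-1/(c'\sigma^2)})$ — i.e.\ showing consecutive terms decay geometrically with the stated ratio so the series is dominated by a geometric one — and, relatedly, getting the Cauchy--Schwarz split for $G^{qc}$ to produce exactly the exponents $3/4$ and the threshold $\sigma\ge\sqrt[4]{2/(27\pi^2)}$ rather than some messier condition. Everything else is Poisson summation plus bookkeeping.
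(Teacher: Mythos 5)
The central step of your proposal fails. You bound the rank $r(\vec{x})$ --- the number of lattice points $\vec{y}$ with $\|\vec{y}\|\le\|\vec{x}\|$ --- by the box count $\prod_i(2|x_i|+1)$, but that inequality goes the wrong way: the box $\{\vec{y}:|y_i|\le|x_i|\ \forall i\}$ is \emph{contained} in the ball of radius $\|\vec{x}\|$, so $\prod_i(2|x_i|+1)$ is a \emph{lower} bound on the rank, not an upper bound (for $\vec{x}=(2,0,0,0)\in\ZZ^4$ the ball of radius $2$ contains $89$ points while your bound gives $5$). Since the entire coordinate-wise factorisation of $G$ and $G^{qc}$ into one-dimensional sums $\sum_t 2^{c|t|}e^{-t^2/2\sigma^2}$ rests on this step, your proofs of both guessing bounds --- and hence of the modular bounds and of the comparisons with $2^{H}$ and $2^{H/2}$, which you derive from them --- collapse. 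The paper never factorises over coordinates; it keeps the ball-counting function $N(\ell)=|\{\vec{x}\in\ZZ^n:\|\vec{x}\|^2\le\ell\}|$ intact and works with $g_{n,s}(\sigma)=\sum_\ell \rho_\sigma(\sqrt{\ell})N(\ell)^s/\rho_\sigma(\ZZ^n)$. The key devices are the injection $(\vec{x}_1,\dots,\vec{x}_k)\mapsto\ZZ^{kn}$ giving $N_n(\ell)^k\le N_{kn}(k\ell)$, the generating-function identity $\sum_\ell N_n(\ell)x^\ell=\rho_\sigma(\ZZ^n)/(1-x)$ at $x=e^{-1/2\sigma^2}$ (this is what produces the exact factor $1/(1-e^{-1/2\sigma^2})$ that you hoped to obtain by geometric domination), and Banaszczyk's bound $\rho_s(\mathcal{L})\le s^n\rho(\mathcal{L})$ to return to width $\sigma$, yielding $G\le g_{n,2}\le 2^n\rho_\sigma(\ZZ)^n/(1-e^{-1/2\sigma^2})$. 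For $G^{qc}$ the paper uses $\sum_{i\le b}\sqrt{i}\le\tfrac23 b^{3/2}+\tfrac12\sqrt{b}$ and a Cauchy--Schwarz step exploiting $\rho_\sigma^{2/3}=\rho_{\sigma\sqrt{3/2}}$ --- close in spirit to your proposed split, but applied to $g_{n,3/2}$, not to one-dimensional sums --- and the threshold $\sigma\ge\sqrt[4]{2/(27\pi^2)}$ arises when absorbing the secondary term $\tfrac12 g_{n,1/2}\le\tfrac12(1-e^{-1/2\sigma^2})^{-1}\rho_\sigma(\ZZ^n)^{-1/2}$ into the main term via $\rho_\sigma(\ZZ^n)>(\sigma\sqrt{2\pi})^n$, not from convergence of a series.

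The parts of your plan that do not depend on the factorisation are essentially the paper's: the two-sided Poisson-summation bound on $\rho_\sigma(\ZZ^n)$, the direct entropy computation with the second moment controlled by Poisson summation, the elimination of $\rho_\sigma(\ZZ)^n$ between the guessing and entropy bounds to define $a(\sigma)$, and the modular reduction ordering $\ZZ_q^n$ by the norm of the centered lift with the rank of a class bounded by $N(\|\tilde{\vec{x}}\|^2)$. However, the constants $2$ and $3/2$ in the modular claims do not come from ``$\sqrt2\le 3/2$'': they come from comparing $N(\ell)\,(N(\ell)-N(\ell-1))$ with the exact partial sum $\sum_{i=N(\ell-1)+1}^{N(\ell)}i$ (a factor $2$), and $\sqrt{N(\ell)}\,(N(\ell)-N(\ell-1))\le N(\ell)^{3/2}-N(\ell-1)^{3/2}\le\tfrac32\sum_{i=N(\ell-1)+1}^{N(\ell)}\sqrt{i}$ (a factor $3/2$). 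This comparison is also where the hypothesis $n\ge 4$ (every integer is a sum of $n$ squares, so $N(\ell)>N(\ell-1)$) enters, a hypothesis your proposal never uses.
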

\begin{proof}
	See \Cref{sec:proof:guessing_complexity_of_discrete_gaussian_multidim}.
\end{proof}

In \Cref{lem:guessing_complexity_of_discrete_gaussian_multidim}, we have identified upper bounds which we believe to be tight up to polynomial factors in the dimension. It is not clear that the same method can be used to obtain lower bounds. The bounds we obtained suggest (but do not conclusively show) that the quantum guessing complexity is exponentially smaller than the square root of the classical guessing complexity, for the discrete Gaussian over $\ZZ^n$. Obtaining lower bounds on these quantities would confirm that this is indeed the case. The situation is the same for the \emph{modular} discrete Gaussian since we also only obtained upper bounds for this distribution.

% We remark that, neglecting polynomial factors in $\sigma$, the quantum speed-up on the
% guessing of $D_{\ZZ^n,\sigma}$ is more than quadratic since
% \[
%     \frac{\sqrt{G(D_{\ZZ^n,\sigma})}}{G^{qc}(D_{\ZZ^n,\sigma})}
%         \approx \poly[\sigma]\left(\frac{16}{27}\right)^{n/4}.
% \]

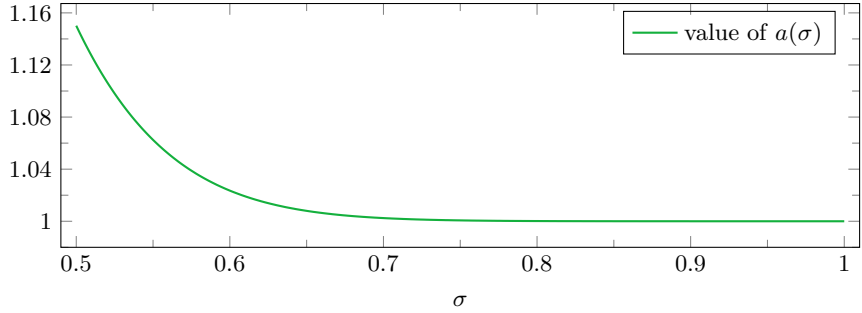
\begin{figure}
    \centering
    \begin{tikzpicture}[scale = 1]
        \begin{axis}[xlabel={$\sigma$},ylabel={},
                width=\textwidth,
                height=0.25\textheight,
                minor y tick num=1,
                xtick distance=0.1,
                ytick distance=0.04,
                minor x tick num=1,
                enlarge x limits=0.02,
                legend style={legend pos=north east},
                legend cell align={left},
                domain=0.5:1,
                ymin=0.98
                ]
            \addplot+[samples=400] %{1/tanh(pi*pi*x*x)*exp(1/2-(1-16*pi*pi*x*x*exp(-2*pi*pi*x*x))*tanh(pi*pi*x*x)/2)};
            {exp(8*pi*pi*x*x*exp(-2*pi*pi*x*x)*tanh(pi*pi*x*x))};
            \addlegendentry{value of $a(\sigma)$};
        \end{axis}
    \end{tikzpicture}
    \caption{Value of the extra factor $a(\alpha)$ in \Cref{lem:guessing_complexity_of_discrete_gaussian_multidim}.}\label{fig:alpha_sigma}
\end{figure}

\section{A Quantum Algorithm for Mean Estimation}\label{sec:quantum-mean-estimation}

We provide a quantum algorithm which estimates the mean value of \[\cos(2\pi(\langle\vec{w}_i,\vec{b}\rangle)/q)\] used in the dual attack. The idea is inspired by~\cite[Theorem~47]{ACKS20} and can be seen as a special case of quantum speedup of Monte Carlo methods~\cite{Montanaro15}.

\begin{definition}[QRACM Oracle]\label{def:qracm-oracle}
  Let \(N\) be a positive integer and \(W\) be a list of \(N\) vectors \(\vec{w}_0,\dots,\vec{w}_{N-1} \in \mathbb{Z}^n\). The QRACM oracle for \(W\) is
  \[
    \mathcal{O}_W: \ket{j}\ket{0}\mapsto \ket{j}\ket{\vec{w}_j}.
  \]
\end{definition}

\begin{definition}[Positive Controlled Rotation Oracle]\label{def:ocrp-oracle}
The positive controlled rotation oracle for any $a\in\mathbb{R}$ is
\[
\mathcal{O}_{CR^+}: \ket{a} \ket{0} \rightarrow \begin{dcases}
    \ket{a}(\sqrt{a}\ket{1}+\sqrt{1-a}\ket{0}),& \text{if } a\geq 0\\
    \ket{a}\ket{0},              & \text{otherwise,}
\end{dcases}
\]
which can be implemented up to negligible error by $\poly[\log n]$ quantum elementary gates.
\end{definition}

\begin{definition}[Cosine Inner Product Oracle]\label{def:cos-oracle}
 The cosine inner product oracle for any $\vec{b},\vec{w}\in \mathbb{Z}^n$ is
\[
  \mathcal{O}_{\cos}: \ket{\vec{w}}\ket{\vec{b}}\ket{0}\rightarrow
  \ket{\vec{w}}\ket{\vec{b}}\ket{\cos(2\pi \langle \vec{w},\vec{b}\rangle/q)},
\]
which can be implemented by $\poly[\log n]$ quantum elementary gates.
\end{definition}

\begin{theorem}\label{thm:Quantum_estimation_f_W}
  Let \(N\) be a positive integer and \(W\) be a list of \(N\) vectors \(\in \mathbb{Z}^n\): \(\vec{w}_0,\dots,\vec{w}_{N-1}\). Let \(f_{W}(\vec{b})=\tfrac{1}{N}\sum_{i=0}^{N-1} \cos(2\pi(\langle\vec{w}_i,\vec{b}\rangle)/q)\), where \(\Vec{b}\in \mathbb{Z}_q^n\). For any \(\epsilon,\delta>0\), there exists a quantum algorithm \(\mathcal{A}\) that given \(\vec{b}\in \mathbb{Z}_q^n\) and oracle access to \(\mathcal{O}_W\) as defined in \cref{def:qracm-oracle}, outputs \(\mathcal{A}^{\mathcal{O}_W}(\vec{b})\) which satisfies \(|\mathcal{A}^{\mathcal{O}_W}(\vec{b})-f_{W}(\vec{b})|\leq \epsilon\) with probability \(1-\delta\). The algorithm makes \(\mathcal{O}(\epsilon^{-1}\cdot\log\frac{1}{\delta})\) queries to \(\mathcal{O}_W\), and requires \(\epsilon^{-1}\cdot\log\frac{1}{\delta}\cdot \poly[\log n]\) elementary quantum gates.
\end{theorem}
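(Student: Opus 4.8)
The plan is to build the unitary $U$ required by amplitude estimation (\Cref{thm:amplitude_estimation}) so that the amplitude of a distinguished flag qubit encodes $f_W(\vec b)$, up to an affine rescaling that maps the range $[-1,1]$ of cosines into $[0,1]$. First I would prepare a uniform superposition $\tfrac{1}{\sqrt N}\sum_j \ket j$ over the index register; apply $\mathcal{O}_W$ to obtain $\tfrac{1}{\sqrt N}\sum_j \ket j\ket{\vec w_j}$; append $\ket{\vec b}$ (classical input, so it can be written in); apply $\mathcal{O}_{\cos}$ of \Cref{def:cos-oracle} to get $\tfrac{1}{\sqrt N}\sum_j \ket j\ket{\vec w_j}\ket{\vec b}\ket{c_j}$ with $c_j=\cos(2\pi\langle\vec w_j,\vec b\rangle/q)$; then classically compute $a_j \coloneqq (1+c_j)/2 \in [0,1]$ into a fresh register and apply $\mathcal{O}_{CR^+}$ of \Cref{def:ocrp-oracle} (the "positive" case always triggers since $a_j\geq 0$) to append $\sqrt{a_j}\ket1+\sqrt{1-a_j}\ket0$ on the flag qubit. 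Measuring the flag qubit in this state yields $1$ with probability $\tfrac1N\sum_j a_j = \tfrac12 + \tfrac12 f_W(\vec b) \eqqcolon A$. Call this whole procedure $U$; it uses one query to $\mathcal{O}_W$ and $\poly[\log n]$ additional gates (by the gate-count claims in \Cref{def:cos-oracle} and \Cref{def:ocrp-oracle}, plus $\poly[\log n]$ for the arithmetic and the state preparation over $N \le 2^{\poly[\log n]}$ indices — or, more honestly, $\poly[\log N, \log n]$ which I will fold into the stated bound).

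Next I would invoke \Cref{thm:amplitude_estimation} with this $U$ and a parameter $M$: it returns $\tilde A$ with $|A-\tilde A|\le 15\pi^2/M$ with probability $\ge 2/3$, using $M$ calls to $U$ and $U^\dagger$, hence $O(M)$ queries to $\mathcal{O}_W$. Outputting $2\tilde A - 1$ as the estimate of $f_W(\vec b)$ gives error $|{(2\tilde A-1)} - f_W(\vec b)| = 2|A-\tilde A| \le 30\pi^2/M$. Choosing $M = \lceil 30\pi^2/\epsilon\rceil$ yields an $\epsilon$-accurate estimate with probability $\ge 2/3$, at the cost of $O(\epsilon^{-1})$ queries to $\mathcal{O}_W$ and $\epsilon^{-1}\cdot\poly[\log n]$ gates. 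To boost the success probability from $2/3$ to $1-\delta$, I would repeat the estimator $O(\log\tfrac1\delta)$ times independently and take the median; a standard Chernoff/Hoeffding argument shows the median lies within $\epsilon$ of $f_W(\vec b)$ except with probability $\delta$. This multiplies both the query count and the gate count by $O(\log\tfrac1\delta)$, giving the claimed $\mathcal{O}(\epsilon^{-1}\log\tfrac1\delta)$ queries and $\epsilon^{-1}\log\tfrac1\delta\cdot\poly[\log n]$ gates.

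The only genuinely delicate point — and the one I expect to require the most care in the writeup — is the handling of the small implementation errors: \Cref{def:ocrp-oracle} says $\mathcal{O}_{CR^+}$ is realised only "up to negligible error", and the arithmetic computing $a_j$ and the cosine are done in finite precision, so the state actually fed to amplitude estimation is only close to the ideal $U\ket{0}$. I would argue that taking $\poly[\log n]$ bits of precision (absorbed into the $\poly[\log n]$ gate count) makes the operator-norm distance between the implemented and ideal unitaries smaller than, say, $\epsilon/(100 M)$, so that over the $M$ applications inside amplitude estimation the accumulated deviation in the final measured probability is $o(\epsilon)$, which is swallowed by the error budget. Everything else is bookkeeping: uncomputing the ancilla registers ($\ket{\vec w_j}$, $\ket{c_j}$, $\ket{a_j}$) so that $U$ is a clean unitary on index-plus-flag as \Cref{thm:amplitude_estimation} requires, and checking that the qubit count is $\poly[\log n,\log N]$.
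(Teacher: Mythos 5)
Your proposal is correct, and it follows the same skeleton as the paper's proof: uniform superposition over indices, one query to $\mathcal{O}_W$, then $\mathcal{O}_{\cos}$ and $\mathcal{O}_{CR^+}$ to push the cosine value into the amplitude of a flag qubit, amplitude estimation via \Cref{thm:amplitude_estimation}, and a median over $\Theta(\log\tfrac1\delta)$ repetitions. The one genuine difference is how negative cosines are handled. The paper keeps the raw values $\gamma_j=\cos(2\pi\langle\vec{w}_j,\vec{b}\rangle/q)$ and exploits that $\mathcal{O}_{CR^+}$ acts trivially on negative inputs: it runs amplitude estimation twice, once to estimate $a^+=\sum_{\gamma_j\geq 0}\gamma_j/N$ and once (with the signs flipped) to estimate $a^-=\sum_{\gamma_j<0}\gamma_j/N$, each to accuracy $\epsilon/2$, and outputs the sum. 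You instead apply the affine map $c\mapsto(1+c)/2$ so that a single amplitude $A=\tfrac12+\tfrac12 f_W(\vec{b})$ is estimated and then un-shifted; this saves one estimation (the paper pays a factor of two in queries, you pay a factor of two in the error when undoing the rescaling, so the costs are essentially identical) and removes the case split in the state analysis, at the cost of one extra reversible arithmetic step computing $(1+c_j)/2$. Two small remarks: the uncomputation you worry about is unnecessary, since \Cref{thm:amplitude_estimation} permits arbitrary garbage states $\ket{\phi_0},\ket{\phi_1}$ on the non-flag registers (the paper indeed leaves $j$, $\vec{w}_j$, $\gamma_j$ in place); and your discussion of finite-precision and the ``negligible error'' in $\mathcal{O}_{CR^+}$ is more care than the paper itself takes, which simply absorbs these into the oracle definitions.
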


\begin{proof}
Prepare the state $\frac{1}{\sqrt{N}}\sum\limits_{j=0}^{N-1}\ket{j}\ket{\vec{0}}\ket{\vec{b}}\ket{0}\ket{0}$, and then apply $\mathcal{O}_W$ on the first and second registers (storing \(\vec{w}_{j}\) there), apply $\mathcal{O}_{\cos}$ on the second, third, fourth registers (storing \(\cos(2\pi \langle \vec{w},\vec{b}\rangle/q)\) there), and apply $\mathcal{O}_{CR^+}$ on the fourth and fifth registers. Writing \(\gamma_{j} \coloneqq \cos(2\pi\langle \vec{w}_j,\vec{b}\rangle/q) \) and letting sums run over \(j \in [0,N-1]\), we have
\begin{align*}
  \frac{1}{\sqrt{N}}\sum\limits_{\substack{\gamma_j \ge 0}} \ket{j}\ket{\vec{w}_j}\ket{\vec{b}}\ket{\gamma_j} \big(\sqrt{\gamma_j}\ket{1} + \sqrt{1-\gamma_j}\ket{0}\big)
  + \frac{1}{\sqrt{N}}\sum\limits_{\substack{\gamma_{j} <0}} \ket{j}\ket{\vec{w}_j}\ket{\vec{b}}\ket{\gamma_j}\ket{0}.
\end{align*}
By rearranging, we obtain
\begin{align*}
  &\frac{1}{\sqrt{N}}
    \sum_{\substack{\gamma_j \ge 0}} \sqrt{\gamma_j}\ket{j}\ket{\vec{w}_j}\ket{\vec{b}}\ket{\gamma_j}\ket{1}\\
  +& \frac{1}{\sqrt{N}} \bigg(\sum_{\gamma_j\ge 0}\sqrt{1-\gamma_j}\ket{j}\ket{\vec{w}_j}\ket{\vec{b}}\ket{\gamma_j}
+\sum_{\gamma_{j}<0} \ket{j}\ket{\vec{w}_j}\ket{\vec{b}}\ket{\gamma_j}\bigg)\ket{0}\\
= & \sqrt{a^+}\ket{\phi_1}\ket{1}+\sqrt{1-a^+}\ket{\phi_0}\ket{0},
\end{align*}
where $a^+=\sum_{\gamma_j \ge 0} \frac{\gamma_j}{N}$. By applying Theorem~\ref{thm:amplitude_estimation}, we can estimate $a^+$ with additive error $\epsilon/2$ by using  $\mathcal{O}(\epsilon^{-1})$ applications of $\mathcal{O}_W$, $\mathcal{O}^\dagger_W$, and $\epsilon^{-1}\cdot \poly[\log n]$ elementary quantum gates. Following the same strategy, we can also estimate $a^-=\sum_{\gamma_j<0} \frac{\gamma_j}{N}$ with same additive error and by using same amount of queries and quantum elementary gates. Therefore, we can estimate \[a^++a^- \pm \epsilon = \sum_{j} \frac{\cos(2\pi\langle \vec{w}_j,\vec{b}\rangle/q)}{N}.\] By repeating the procedure $\Theta(\log \frac{1}{\delta})$ times and taking the median among them, we finish the proof. \qed{}
\end{proof}

\section{Quantum Augmented Dual Attack}\label{sec:quantum_dual_attack}

We now modify the algorithm of~\cite{Matzov22} to obtain a quantum speedup. At a high-level, \cref{alg:quantum_dual_attack} works in the same way. First, we run a sampling algorithm to obtain short vectors in the dual. Here, we can take advantage of the existing quantum speedups for sieving~\cite{PQCRYPTO:LaaMosvan13,PhD:Laarhoven15,AC:AGPS20,AC:ChaLoy21}.

Next, we can obtain an (at least) quadratic speedup on the search for $\vec{s}_\enum$.
The original algorithm enumerates them one by one until the correct one is found. By carefully choosing the order in which elements are enumerated, one can show that the expected complexity of this search is related to the guessing complexity  (see \Cref{sec:guessing}). In the quantum setting,
we can apply a variant of Grover's search
algorithm to obtain a speed-up on this search. The complexity
of this search is now related to the quantum guessing
complexity which is not necessarily related to the (classical)
guessing complexity. Indeed, this quantity is
always smaller than the square root of the classical one and our calculations suggest that it is strictly smaller than the square root for the discrete Gaussian.
In our case, the quantum search will call an oracle that is probabilistic so care must be taken. We use the improved version of Grover's search in \Cref{thm:hoyer03} that can handle bounded-error inputs.

We now move to the most interesting part of our quantum speedup. In their algorithm~\cite{Matzov22}, the authors first fill a large array T, perform an FFT and then look at all the entries to check if one is larger than a given threshold $C$. While it is tempting to use the quantum Fourier transform (QFT), which runs in polynomial time, we do not know how to implement the second step (checking each entry) efficiently. Indeed, the QFT works on the amplitudes and, therefore, simply extracting a coefficient of the result is a nontrivial task (see \Cref{sec:open_problem}). We work around this issue by observing two points:
\begin{enumerate}
\item The input array of the FFT is relatively sparse: it has $D$ nonzero entries (out of $p^{k_\fft}$).
\item We can obtain a quadratic speedup on the task of evaluating a sum of cosine (\Cref{thm:Quantum_estimation_f_W}).
\end{enumerate}
Since every entry of the output of the FFT is a sum of cosine that we can evaluate efficiently,
and since the sum only has $D$ terms, we can evaluate each coefficient in reasonable time.
By turning this algorithm into a quantum oracle, we can use Grover's search to obtain
a further quadratic speedup on the inner part of the algorithm that looks for an entry above the
threshold $C$.

A crucial detail of this algorithm is our use of a QRACM\@. Indeed, in order to apply \Cref{thm:Quantum_estimation_f_W} and obtain a quadratic speedup when evaluating the sums, we need a quantum oracle access to the short vectors stored in \(L\). Since those vectors are obtained by a classical algorithm, we store them in a QRACM to build this oracle.\footnote{Note that quantum augmented sieving procedures still output classical lists of short vectors.}

% First, we use the BKZ algorithm to produce one $\beta_0$-BKZ reduced basis $B$.
% % (the algorithm can easily be extended to the case where BKZ produces
% % more than one vectors).
% We then do sieving on the sub-lattice generated by the first $\beta_1$ vectors of this basis to produce
% many short vectors $\vec{w}_1,\dots,\vec{w}_N$. We can then put these vectors into a QRACM: $\mathcal{O}_W: \ket{j}\ket{0}\mapsto \ket{j}\ket{\vec{w}_j}$.

\begin{algorithm}[htbp]
    \SetKwProg{Oracle}{create oracle}{:}{}
    \KwIn{LWE parameters $(n,m,q,\chi_s,\chi_e)$,
        integers $\beta_0,\beta_1\leqslant d$,
        integers $k_\enum,k_\fft,k_\rest$ such that
        $k_\enum+k_\fft+k_\rest=n$,
        an integer $p\leqslant q$,
        an integer $D$,
        a real number $C$,
        a coefficient $\eta\in[0,1]$
        and an LWE pair $(\mat{A},\vec{b})\in \ZZ_q^{m\times n}\times \ZZ_q^m$.
        }
    \KwOut{(Guess of) the first $k_\enum$ coordinates of $\vec{s}$ or $\bot$.}
    Decompose $\mat{A}$ as
    $\begin{bmatrix}\mat{A}_\enum&\mat{A}_\fft&\mat{A}_\rest\end{bmatrix}$ of
    respective dimensions $m\times k_\enum$, $m\times k_\fft$ and
    $m\times k_\rest$.\; \label{alg:quantum_dual_attack:init_start}
    Compute the matrix $\mat{B}=
        \begin{bmatrix} q\mat{I}_{k_\rest} & \mat{A}_\rest^T\\
        \mat{0} & \alpha \mat{I}_m\\ \end{bmatrix}$
    where $\alpha=\tfrac{\sigma_e}{\sigma_s}$ \;
    Run \Cref{alg:short_vectors_sampling} on the basis $\mat{B}$ with parameters $\beta_0,\beta_1,D$ to get a list $L$ of $D$
        short vectors. \; \label{alg:line:bkz}
        \label{alg:quantum_dual_attack:init_end}
    Create a QRACM $\mathcal{O}_W$ \; \label{alg:quantum_dual_attack:create_qram}
    \For{every short vector $(\alpha \cdot\vec{x}_j,\vec{y}_{j,\rest})$ in $L$}{
        \label{alg:quantum_dual_attack:loop_qram}
        Add vector $\vec{x}_j$ to $\mathcal{O}_W$ at index $j$ \;
        \label{alg:quantum_dual_attack:store_qram}
    }
    Use \Cref{thm:Quantum_estimation_f_W} to create an algorithm $\mathcal{A}$
        with
        % $\delta=\tfrac{1}{3}p^{-k_\fft}q^{-k_\enum}$,
        $\delta=\tfrac{1}{10}$,
        $\varepsilon=\tfrac{C}{D}\eta$ and ``$q$''=$p$ \;
        \label{alg:quantum_dual_attack:create_quant_est_alg}
    \Oracle{\textup{$\mathcal{O}$($\tilde{\vec{s}}_\enum$)}}{
        \label{alg:quantum_dual_attack:create_oracle_O}
        \Oracle{\textup{$\hat{\mathcal{O}}$($\tilde{\vec{s}}_\fft$)}}{
            Compute $\theta$ such that $\psi(\tilde{\vec{s}}_\fft)=e^{-\tfrac{2i\pi}{p}\theta}$
                (recall that $|\psi(\tilde{\vec{s}}_\fft)|=1$) \;
            \label{alg:quantum_dual_attack:compute_theta}
            \Oracle{\textup{$\mathcal{O}_W'$($j$)}}{
                \label{alg:quantum_dual_attack:create_oracle_O'}
                Get $\vec{x}_j$ from $\mathcal{O}_W$ at index $j$ \;
                \label{alg:quantum_dual_attack:get_xj}
                Compute $\vec{y}_{j,\fft}=\vec{x}_j^T\cdot \mat{A}_\fft$ \;
                Compute $\vec{y}_{j,\enum}=\vec{x}_j^T\cdot \vec{A}_\enum$ \;
                \label{alg:quantum_dual_attack:calc_yjenum}
                \Return{vector $\left(\tfrac{p}{q}\cdot \vec{y}_{j,\enum},
                    \round{\tfrac{p}{q}\cdot \vec{y}_{j,\fft}}, \theta-\tfrac{p}{q}\cdot\vec{x}_j^T\cdot \vec{b}\right)$}
            }
            \label{alg:quantum_dual_attack:end_oracle_O'}
            \eIf{$\mathcal{A}^{\mathcal{O}_W'}((\tilde{\vec{s}}_\enum, \tilde{\vec{s}}_\fft,1))
                    >(1+\eta)\cdot\tfrac{C}{D}$}{
                \label{alg:quantum_dual_attack:test_fft}
                \Return{1}
            }{
                \Return{0}
            }
        }
        Use \Cref{thm:hoyer03} to find, with probability $\tfrac{9}{10}$, $i$ such that $\hat{\mathcal{O}}(i)=1$
            or let $i=\bot$ if none exists \;
            \label{alg:quantum_dual_attack:grover_for_fft}
        \eIf{$i\neq \bot$}{
                \Return{1}
            }{
                \Return{0}
            }
    }
    \label{alg:quantum_dual_attack:end_oracle_O}
    \Oracle{\textup{$\tilde{\mathcal{O}}$($i$)}}{
        \label{alg:quantum_dual_attack:create_Otilde}
        Compute the $i^{th}$ most probable $\tilde{\vec{s}}_\enum$ according to
        the distribution $\chi_s^{k_\enum}$ \;
        \Return{$\mathcal{O}(\tilde{\vec{s}}_\enum)$}
    }
    Find, with probability $\tfrac{9}{10}$, $\tilde{\vec{s}}_\enum$ such that $\mathcal{O}(\tilde{\vec{s}}_\enum)=1$
        using \Cref{lem:quantum_guessing} %on $[1,q^k_\enum]$
        with oracle $\tilde{\mathcal{O}}$, or let $\tilde{\vec{s}}_\enum=\bot$
        if none is found\; \label{alg:quantum_dual_attack:guess}
    \Return{$\tilde{\vec{s}}_\enum$.}
    \caption{\label{alg:quantum_dual_attack}Quantum Augmented Dual Attack}
\end{algorithm}

\subsection{Correspondance between the classical and quantum algorithms}

Before delving into the analysis of \Cref{alg:quantum_dual_attack}, we start by explaining the major steps of the algorithm. This is best done by giving the correspondence between our algorithm, and the original algorithm from~\cite{Matzov22} (\Cref{alg:matzov}).
\begin{itemize}
\item Steps~\ref{alg:quantum_dual_attack:init_start}-%
  \ref{alg:quantum_dual_attack:init_end} are exactly the same for the two algorithms.
\item Steps~\ref{alg:quantum_dual_attack:create_qram}-%
  \ref{alg:quantum_dual_attack:store_qram} create and initialise the QRACM that contains all the short dual vectors. This step is not needed in the original algorithm since the vectors are simply stored in an array.
\item Step~\ref{alg:quantum_dual_attack:create_quant_est_alg} instantiates the quantum estimation algorithm with certain parameters. This step is only here for readability.
\item Steps~\ref{alg:quantum_dual_attack:create_oracle_O}-%
  \ref{alg:quantum_dual_attack:end_oracle_O} roughly correspond to Steps~\ref{alg:matzov:line:loop_body_start}-%
  \ref{alg:matzov:line:loop_body_end} of \Cref{alg:matzov}. Specifically, our algorithm will perform a quantum search using those steps as an oracle, whereas \Cref{alg:matzov} uses those steps as a body of a search loop. The result is the same in the sense that our oracle $\mathcal{O}$ essentially produces the same result as the body of the loop in \Cref{alg:matzov}. The details on how this result is obtained are different however:
  \begin{itemize}
  \item Step~\ref{alg:quantum_dual_attack:compute_theta} is used to compute the phase of the normalisation factor in the FFT sum. This factor is needed at Step~\ref{alg:matzov:line:fft_test} of \Cref{alg:matzov} but we need to put it in the search oracle because of the way the search algorithm works.
  \item Steps~\ref{alg:quantum_dual_attack:create_oracle_O'}-%
    \ref{alg:quantum_dual_attack:end_oracle_O'} roughly correspond to Steps-\ref{alg:matzov:line:fft_fill_start}-%
    \ref{alg:matzov:line:do_fft}. As explained at the beginning of \Cref{sec:quantum_dual_attack} and also in \Cref{sec:open_problem}, our quantum algorithm does not perform the FFT in the same way as the original algorithm. Contrary to the classical algorithm where all the output coefficients are computed ``at once'' by a single FFT computation, our algorithm computes each individual output FFT coefficient on demand and check whether it is above the threshold (this is what $\hat{\mathcal{O}}$ does). To do so, given a specified output coefficient $\tilde{s}_\fft$, we define an oracle $\mathcal{O}'_W$ which (essentially) returns the (input) coefficients of the FFT and use \Cref{thm:amplitude_estimation} to estimate the output coefficient. We note that the steps~%
    \ref{alg:quantum_dual_attack:get_xj}-%
    \ref{alg:quantum_dual_attack:calc_yjenum} could have been in the oracle $\mathcal{O}_W$ instead, this would be completely equivalent (and save a small polynomial factor in the overall complexity). Step~\ref{alg:quantum_dual_attack:test_fft} corresponds to Step~\ref{alg:matzov:line:fft_test} of \Cref{alg:matzov}. The difference between the thresholds comes from the way \Cref{thm:amplitude_estimation} works ($C$ vs $C/D$) and also because of the two-sided errors made by the oracle (factor $\eta$). See \Cref{th:analysis_quantum_alg} for more details.
  \end{itemize}
\item Steps~\ref{alg:quantum_dual_attack:create_Otilde}-%
  \ref{alg:quantum_dual_attack:guess} correspond to Step~\ref{alg:matzov:line:loop_body_start} of \Cref{alg:matzov}. Specifically, the classical algorithm performs a sequential search by decreasing probability of $\tilde{s}_\enum$. Our algorithm uses \Cref{lem:quantum_guessing} to obtain a quantum speedup on this search which requires to create an oracle.
\end{itemize}

\subsection{Analysis of the Quantum Augmented Dual Attack}

We now analyse the quantum augmented dual attack given in \Cref{alg:quantum_dual_attack}. Informally, we first establish that with constant probability the algorithm outputs a guess for \(\vec{s}_{\enum}\) where the corresponding FFT scoring function's score is above the chosen threshold \(C\). Below, we will instantiate this theorem with appropriate choices of thresholds \(C\) (and number of samples \(D\)). Note that like~\cite[Thm.~5.2]{Matzov22}, the following theorem is with respect to a fixed tuple \((\vec{s},\vec{e})\).

\begin{theorem}\label{th:analysis_quantum_alg}
    Let $(n,m,q,\chi_s,\chi_e)$ be LWE parameters, let \[(\beta_0,\beta_1,k_\enum,k_\fft,k_\rest,p,D,C,\mat{A},\vec{b},\eta)\] be the input of \Cref{alg:quantum_dual_attack}. Let $L$ be the list of vectors obtained at
    \Cref{alg:line:bkz} of \Cref{alg:quantum_dual_attack}.
    For any $x>0$, let $S_x^L=\set{\tilde{\vec{s}}_\enum:\exists \tilde{\vec{s}}_\fft, F_L(\tilde{\vec{s}}_\enum,\tilde{\vec{s}}_\fft)>x}$. With probability at least $9/10$,
    the algorithm returns a value in $S_{C}^L\cup\set{\bot}$. Furthermore,
    if $S_{(1+2\eta)C}^L\neq\varnothing$ then the algorithm returns a value
    in $S_{C}^L$ with probability at least $9/10$.
\end{theorem}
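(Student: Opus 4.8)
The plan is to unwind the three nested oracles of \Cref{alg:quantum_dual_attack} from the inside out, tracking constant failure probabilities and appealing to \Cref{thm:Quantum_estimation_f_W} for the innermost estimate, \Cref{thm:hoyer03} for the search over $\tilde{\vec{s}}_\fft$, and \Cref{lem:quantum_guessing} for the outer search over $\tilde{\vec{s}}_\enum$. The crucial first step is an exact identity: unrolling $\mathcal{O}_W'$ and substituting the query $(\tilde{\vec{s}}_\enum,\tilde{\vec{s}}_\fft,1)$ into the function $f_W$ of \Cref{thm:Quantum_estimation_f_W} with ``$q$''$=p$ and ``$N$''$=D$, and using that $\theta$ was chosen so that $\psi(\tilde{\vec{s}}_\fft)=e^{-2i\pi\theta/p}$ (hence $1/\psi(\tilde{\vec{s}}_\fft)=e^{2i\pi\theta/p}$ can be absorbed into the exponent), one checks term by term that $f_W\big((\tilde{\vec{s}}_\enum,\tilde{\vec{s}}_\fft,1)\big)=\tfrac1D\,F_L(\tilde{\vec{s}}_\enum,\tilde{\vec{s}}_\fft)$. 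Since $\mathcal{A}$ was instantiated with $\delta=\tfrac1{10}$ and $\varepsilon=\tfrac{C}{D}\eta$, \Cref{thm:Quantum_estimation_f_W} then gives $\big|\mathcal{A}^{\mathcal{O}_W'}\big((\tilde{\vec{s}}_\enum,\tilde{\vec{s}}_\fft,1)\big)-\tfrac1D F_L(\tilde{\vec{s}}_\enum,\tilde{\vec{s}}_\fft)\big|\le\tfrac{C\eta}{D}$ with probability $\ge 9/10$. Multiplying by $D$ and comparing with the test ``$>(1+\eta)\tfrac CD$'' at \Cref{alg:quantum_dual_attack:test_fft} yields the two-sided behaviour of $\hat{\mathcal{O}}$: if $F_L(\tilde{\vec{s}}_\enum,\tilde{\vec{s}}_\fft)\le C$ it returns $0$ w.p.\ $\ge 9/10$, and if $F_L(\tilde{\vec{s}}_\enum,\tilde{\vec{s}}_\fft)>(1+2\eta)C$ it returns $1$ w.p.\ $\ge 9/10$.

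Moving up one level, I would feed the $p^{k_\fft}$ algorithms $\{\hat{\mathcal{O}}(\tilde{\vec{s}}_\fft)\}_{\tilde{\vec{s}}_\fft}$ into \Cref{thm:hoyer03} (amplified so that \Cref{alg:quantum_dual_attack:grover_for_fft} succeeds with probability $9/10$). For a fixed $\tilde{\vec{s}}_\enum\notin S_C^L$ every $\hat{\mathcal{O}}$-value is a bounded-error $0$, so the search returns $\bot$ and $\mathcal{O}(\tilde{\vec{s}}_\enum)$ returns $0$ w.p.\ $\ge 9/10$; for $\tilde{\vec{s}}_\enum\in S_{(1+2\eta)C}^L$ at least one $\hat{\mathcal{O}}$-value is a bounded-error $1$, so the search returns a non-$\bot$ index and $\mathcal{O}(\tilde{\vec{s}}_\enum)$ returns $1$ w.p.\ $\ge 9/10$. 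The key observation for the first claim is that whenever $\mathcal{O}(\tilde{\vec{s}}_\enum)$ outputs $1$ it has certified some $\tilde{\vec{s}}_\fft$ with $\hat{\mathcal{O}}(\tilde{\vec{s}}_\fft)=1$, i.e.\ the test at \Cref{alg:quantum_dual_attack:test_fft} passed, which (outside the $\le\tfrac1{10}$ failure event of $\mathcal{A}$) forces $F_L(\tilde{\vec{s}}_\enum,\tilde{\vec{s}}_\fft)>C$ and hence $\tilde{\vec{s}}_\enum\in S_C^L$; note that the ``grey zone'' values with $C<F_L\le(1+2\eta)C$ still lie in $S_C^L$, so they cause no harm. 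Finally, \Cref{alg:quantum_dual_attack:guess} applies \Cref{lem:quantum_guessing} to $\mathcal{O}$ through the re-indexing oracle $\tilde{\mathcal{O}}$: its output is either $\bot$ or a value on which $\mathcal{O}$ accepted, which by the previous sentence lies in $S_C^L$ with probability $\ge 9/10$, giving the first claim. For the second claim, if $S_{(1+2\eta)C}^L\ne\varnothing$ I pick the most likely $\tilde{\vec{s}}_\enum^\star$ in it; then $\mathcal{O}(\tilde{\vec{s}}_\enum^\star)$ is a bounded-error $1$, so \Cref{lem:quantum_guessing} returns a non-$\bot$ value with probability $\ge 9/10$, which again lies in $S_C^L$. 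A small amount of slack in the amplification constants (for the amplitude estimation, the \Cref{thm:hoyer03} search and the \Cref{lem:quantum_guessing} search) keeps the composed failure probability below $\tfrac1{10}$.

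I expect the main obstacle to be precisely the grey zone $C<F_L\le(1+2\eta)C$ — the gap that the $(1+2\eta)$ factor in the hypothesis of the second claim is designed to accommodate. On such inputs $\hat{\mathcal{O}}$, and one level up $\mathcal{O}$, is not a genuine bounded-error algorithm, so \Cref{thm:hoyer03} and \Cref{lem:quantum_guessing} do not apply verbatim; the point that needs care is that these search primitives remain usable when some inputs have no well-defined bit, in the sense that the search still returns either $\bot$ or a genuine ``$1$''-index among the well-defined inputs, and any returned grey-zone index is harmless because it still certifies $F_L>C$. The rest is bookkeeping: verifying the $f_W=\tfrac1D F_L$ identity in full, and propagating the constant error probabilities through the two levels of nesting so that each of the two claimed events holds with probability at least $9/10$.
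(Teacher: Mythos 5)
Your proposal is correct and follows essentially the same route as the paper's proof: the same identity $f_W((\tilde{\vec{s}}_\enum,\tilde{\vec{s}}_\fft,1))=\tfrac{1}{D}F_L(\tilde{\vec{s}}_\enum,\tilde{\vec{s}}_\fft)$ combined with \Cref{thm:Quantum_estimation_f_W} (with $\delta=\tfrac{1}{10}$, $\varepsilon=\tfrac{C}{D}\eta$) to make $\hat{\mathcal{O}}$ a bounded-error oracle outside the grey zone, then \Cref{thm:hoyer03} for the inner search and \Cref{lem:quantum_guessing} for the outer one. The grey-zone subtlety you flag is exactly what the paper's four intermediate claims are structured to handle (claims about $\mathcal{O}$ are only asserted when all values are $\leqslant C$ or some value exceeds $(1+2\eta)C$, and any returned grey-zone guess is harmless since it still lies in $S_C^L$), so your argument matches the paper's in substance as well as in structure.
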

\begin{proof}
See \cref{sec:proofs-for-algorithm}.
\end{proof}

Next, we establish that the parameter choices of \(C,D\) in \cref{lem:matzov-c-d} for \cref{alg:matzov}  allow us to also instantiate \cref{alg:quantum_dual_attack} such that it succeeds with a comparable probability.

\begin{lemma}\label{lem:correctness_quantum}
  Let $(n,m,q,\chi_s,\chi_e)$ be $\LWE$ parameters, let $(\beta_0,\beta_1,k_\enum,k_\fft,k_\rest,p$, $C,D,\eta)$ be a tuple of parameters for \Cref{alg:quantum_dual_attack}, and let $0<\nu<1$. Fix $(\vec{s},\vec{e})\in \ZZ_q^n\times \ZZ_q^{m}$. By choosing the parameters $C,D$ according to \Cref{lem:matzov-c-d} with $\mu=\nu/2$, and $\eta\leqslant\frac{\sqrt{2\pi}\mu}{8\phi_{\text{fp}}(\mu)}$, \Cref{alg:quantum_dual_attack} returns $\vec{s}_\enum$ with probability at least $1-\nu$.
\end{lemma}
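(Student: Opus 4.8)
The plan is to combine the correctness guarantee of the classical parameter choices from \Cref{lem:matzov-c-d} with the quantum algorithm's behaviour established in \Cref{th:analysis_quantum_alg}. First I would invoke \Cref{lem:matzov-c-d} with failure probability $\mu=\nu/2$: this tells us that with the stated choice of $C=\phi_{\mathrm{fp}}(\mu)\sqrt{D_{\mathrm{arg}}\cdot D}$ and $D\geqslant D_{\mathrm{eq}}\cdot D_{\mathrm{round}}\cdot D_{\mathrm{arg}}\cdot D_{\mathrm{fpfn}}(\mu)$, the list $L$ sampled at \Cref{alg:line:bkz} has, with probability at least $1-\mu$ over the randomness of the sampling procedure, the three structural properties of $F_L$ recalled in the overview: $F_L(\vec{s}_\enum,\vec{s}_\fft)>C$, while $F_L(\tilde{\vec{s}}_\enum,\tilde{\vec{s}}_\fft)<C$ for every $\tilde{\vec{s}}_\enum\neq\vec{s}_\enum$ and every $\tilde{\vec{s}}_\fft$. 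In the notation of \Cref{th:analysis_quantum_alg} this says exactly that, conditioned on this good event, $S_C^L=\set{\vec{s}_\enum}$, i.e.\ the set of enum-guesses that pass threshold $C$ is precisely the correct one.

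The second ingredient is to upgrade this to the separation needed by \Cref{th:analysis_quantum_alg}, namely $S_{(1+2\eta)C}^L\neq\varnothing$, which requires $F_L(\vec{s}_\enum,\vec{s}_\fft)>(1+2\eta)C$ rather than merely $>C$. Here I would inspect the classical analysis behind \Cref{lem:matzov-c-d} more carefully: $C$ is set as the false-positive quantile $\phi_{\mathrm{fp}}(\mu)\sqrt{D_{\mathrm{arg}}D}$, whereas the correct-guess score concentrates around a value controlled by the false-negative quantile, and the slack between the two is exactly $D_{\mathrm{fpfn}}(\mu)=(\phi_{\mathrm{fp}}(\mu)+\phi_{\mathrm{fn}}(\mu))^2$ built into the requirement on $D$. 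The point of the hypothesis $\eta\leqslant\frac{\sqrt{2\pi}\mu}{8\phi_{\mathrm{fp}}(\mu)}$ is to make $2\eta C$ small compared to this slack, so that the correct score still exceeds $(1+2\eta)C$ with the same (or marginally reduced) probability. Concretely I expect to re-run the Gaussian tail estimate from the proof of \Cref{lem:matzov-c-d} (or cite it) with the cutoff raised from $C$ to $(1+2\eta)C$ and check that the extra additive shift $2\eta C\leqslant \tfrac{\sqrt{2\pi}\mu}{4}\sqrt{D_{\mathrm{arg}}D}$ changes the relevant $\Phi^{-1}$ argument by at most $O(\mu)$, hence degrades the success probability by at most an additive $O(\mu)$ that can be absorbed, keeping the good event probability at least $1-\mu$ (or $1-O(\mu)$, which we then rebalance by a constant in the choice of $\mu$).

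With those two events in hand the conclusion is a union bound. Conditioned on the $L$-good event (probability $\geqslant 1-\mu$), we have both $S_C^L=\set{\vec{s}_\enum}$ and $S_{(1+2\eta)C}^L=\set{\vec{s}_\enum}\neq\varnothing$; then \Cref{th:analysis_quantum_alg} guarantees that, with probability at least $9/10$ over the internal quantum randomness, \Cref{alg:quantum_dual_attack} returns a value in $S_C^L=\set{\vec{s}_\enum}$, i.e.\ returns exactly $\vec{s}_\enum$. Hence the total failure probability is at most $\mu+\tfrac1{10}$. To get the claimed bound $1-\nu$ I would be slightly more careful with the constants: take $\mu=\nu/2$ as stated and note the theorem's $9/10$ should really be read as ``can be boosted to $1-\nu/2$ by standard repetition of the bounded-error subroutines'' (the $9/10$ constants throughout \Cref{sec:quantum_dual_attack} are fixed only for convenience and amplify at logarithmic cost), giving failure at most $\nu/2+\nu/2=\nu$.

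The main obstacle I anticipate is the second paragraph: precisely quantifying how much the correct-guess score $F_L(\vec{s}_\enum,\vec{s}_\fft)$ exceeds $C$ and verifying that the hypothesis on $\eta$ is exactly what is needed to keep it above $(1+2\eta)C$. This is where one must dig into the distributional analysis imported from~\cite{Matzov22} rather than use it as a black box — specifically, relating the multiplicative gap $(1+2\eta)$ on $C$ to the additive gap between $\phi_{\mathrm{fp}}(\mu)$ and $\phi_{\mathrm{fp}}(\mu)+\phi_{\mathrm{fn}}(\mu)$ in units of $\sqrt{D_{\mathrm{arg}}D}$, and checking the $\sqrt{2\pi}$ and the constant $8$ in the bound on $\eta$ come out right (they plausibly originate from the local density $\Phi'(0)=1/\sqrt{2\pi}$ of the standard normal used to convert the additive score shift back into a shift of the tail probability $\mu$). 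Everything else is a clean composition of \Cref{lem:matzov-c-d}, \Cref{th:analysis_quantum_alg}, and a union bound.
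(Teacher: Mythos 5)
Your overall route is the same as the paper's: apply the two tail bounds from the proof of~\cite[Theorem~5.2]{Matzov22} with the correct-guess threshold raised to $(1+2\eta)C$ and the wrong-guess threshold kept at $C$, and absorb the shift $2\eta\,\phi_{\mathrm{fp}}(\mu)$ using the normal density bound $1/\sqrt{2\pi}$ — your guess about where the $\sqrt{2\pi}$ and the $8$ come from is exactly right, and the paper's computation is precisely $\Phi(\phi_{\mathrm{fn}}(\mu)-2\eta\phi_{\mathrm{fp}}(\mu))\geq 1-\tfrac{\mu}{2}-\tfrac{2\eta}{\sqrt{2\pi}}\phi_{\mathrm{fp}}(\mu)\geq 1-\tfrac{3\mu}{4}$, followed by \Cref{th:analysis_quantum_alg} and a union bound. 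Your treatment of the constant-probability quantum failures (flagging the $9/10$ and boosting it) is if anything more explicit than the paper's.

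There is, however, one genuine gap in your first step: the claim that, with probability $1-\mu$ over $L$, one has $F_L(\tilde{\vec{s}}_\enum,\tilde{\vec{s}}_\fft)<C$ for \emph{every} $\tilde{\vec{s}}_\enum\neq\vec{s}_\enum$, i.e.\ $S_C^L=\{\vec{s}_\enum\}$, is not what \Cref{lem:matzov-c-d} gives and cannot be obtained with this $C$. The false-positive quantile is $\phi_{\mathrm{fp}}(\mu)=\Phi^{-1}\bigl(1-\tfrac{\mu}{2N_\enum(\vec{s}_\enum)p^{k_\fft}}\bigr)$, so the per-pair false-positive probability is only $\tfrac{\mu}{2N_\enum(\vec{s}_\enum)p^{k_\fft}}$ and the union bound only covers $2N_\enum(\vec{s}_\enum)\cdot p^{k_\fft}$ pairs — nowhere near all $q^{k_\enum}$ wrong enumeration guesses. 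The paper therefore proves the weaker event that the \emph{first} $2N_\enum(\vec{s}_\enum)$ candidates in likelihood order (other than $\vec{s}_\enum$) all score below $C$, and then crucially uses the ``furthermore'' clause of \Cref{lem:quantum_find_first}: the quantum search only queries the oracle on indices up to twice the index of the first marked element, so candidates outside this prefix are never examined and their possible false positives are irrelevant (this is also why the factor $2$ sits in the denominator of $\phi_{\mathrm{fp}}$). Without this restriction-plus-query-locality argument, knowing only that the output lies in $S_C^L$ does not let you conclude it equals $\vec{s}_\enum$, since you cannot show $S_C^L$ is a singleton. This is the one missing idea; the rest of your plan composes as you describe.
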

\begin{proof}
See \cref{sec:proofs-for-algorithm}.
\end{proof}

Next, we state the analogous result as~\cite[Thm.~5.9]{Matzov22} for our setting, i.e.~express the success probability with respect to \(\chi_{s},\chi_{e}\) rather than a fixed tuple \((\vec{s},\vec{e})\). The only difference compared to~\cite[Thm.~5.9]{Matzov22} is that we replace the Shannon entropy \(H(X)\) by the guessing complexity \(G(X)\), as discussed in~\cref{sec:guessing}.

\begin{lemma}\label{lem:correctness_quantum_all}
  Let $(n,m,q,\chi_s,\chi_e)$ be $\LWE$ parameters, $(\beta_0,\beta_1,k_\enum,k_\fft,k_\rest,p$, $D, C, \eta)$ be parameters for \Cref{alg:quantum_dual_attack}. Let $0<\nu<1$, $\mu=\nu/2$, $\eta\leqslant\frac{\sqrt{2\pi}\mu}{8\phi_{\text{fp}}(\mu)}$. Denote by \(\ell\) the expected Euclidean length of the vectors returned by \Cref{alg:short_vectors_sampling}. Let \(G(X)\) as in \cref{lem:guessing_complexity_of_discrete_gaussian_multidim}. Let
  \(D \geq D_{\mathrm{eq}} \cdot D_{\text {round }} \cdot D_{\mathrm{arg}} \cdot D_{\mathrm{fpfn}}(\mu)\) and \(C=\tilde{\phi}_{\mathrm{fp}}(\mu) \sqrt{D_{\mathrm{arg}} \cdot D}\) with
  \begin{align*}
    D_{\mathrm{eq}} & =\exp\left({4\,{\left(\frac{\pi \sigma_{s} \ell}{q}\right)}^{2}}\right),
    \qquad D_{\mathrm{round}} =\prod_{\bar{s} \neq 0}{\left(\frac{\sin \left(\frac{\pi \bar{s}}{p}\right)}{\frac{\pi \bar{s}}{p}}\right)}^{-2 k_{\mathrm{fft}} \chi_{s}(\bar{s})} \\
    D_{\mathrm{arg}} &=\frac{1}{2}
                       \,\exp\left(2\,
                       {\left(\chi_{e}(0)
                       + e^{-\frac{8 \pi^{2} \alpha^{-2} \ell^{2}}{q^{2\,(m + k_\rest)}}}\right)}^{m}
                       \cdot {\left(\chi_{s}(0)+e^{-\frac{8 \pi^{2} \ell^{2}}{q^{2(m+k_\rest)}}}\right)}^{k_\rest}\right) \approx \frac{1}{2} \\
    D_{\mathrm{fpfn}}(\mu) &={\left({\phi}_{\mathrm{fp}}(\mu)+{\phi}_{\mathrm{fn}}(\mu)\right)}^{2} \cdot \mu \\
    {\phi}_{\mathrm{fp}}(\mu) &=\Phi^{-1}\left(1-\frac{\mu}{2 \cdot {G\left(\chi_{s}\right)} \cdot p^{k_{\mathrm{fft}}}}\right) \quad \text{ and } \quad  \tilde{\phi}_{\mathrm{fn}}(\mu) =\Phi^{-1}\left(1-\frac{\mu}{2}\right).
  \end{align*}
  Then, with probability at least $1-\nu$ the algorithm returns $\vec{s}_\enum$.
\end{lemma}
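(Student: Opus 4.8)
The plan is to transcribe the proof of \cite[Theorem~5.9]{Matzov22} with two changes: substitute our quantum correctness statement \Cref{th:analysis_quantum_alg} (equivalently \Cref{lem:correctness_quantum}) for its classical counterpart, and replace the step in which \cite{Matzov22} bound the expected number of enumeration steps by $2^{H}$ of the secret's entropy with the exact fact that this expectation \emph{is} the guessing complexity. Fix $\mu=\nu/2$ and write $G\coloneqq G(\chi_s^{k_\enum})$ for the quantity denoted $G(\chi_s)$ in the statement (the guessing complexity of $\vec{s}_\enum$). With $S_x^L=\set{\tilde{\vec{s}}_\enum:\exists\,\tilde{\vec{s}}_\fft,\ F_L(\tilde{\vec{s}}_\enum,\tilde{\vec{s}}_\fft)>x}$ as in \Cref{th:analysis_quantum_alg}, for a given list $L$ the algorithm fails to return $\vec{s}_\enum$ only if: \textbf{(FP)} some $\tilde{\vec{s}}_\enum$ strictly more likely than $\vec{s}_\enum$ under $\chi_s^{k_\enum}$ lies in $S_C^L$; \textbf{(FN)} $F_L(\vec{s}_\enum,\vec{s}_\fft)\le(1+2\eta)C$, so $\vec{s}_\enum\notin S_{(1+2\eta)C}^L$; or \textbf{(Q)} one of the quantum search layers (\Cref{thm:hoyer03} inside $\hat{\mathcal{O}}$, \Cref{lem:quantum_guessing} for $\tilde{\mathcal{O}}$) errs. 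By \Cref{th:analysis_quantum_alg}, conditioned on neither \textbf{(FP)} nor \textbf{(FN)} the quantum part returns $\vec{s}_\enum$ with probability $\ge 9/10$, which the standard constant-probability amplification of \Cref{thm:hoyer03} and \Cref{lem:quantum_guessing} (as already used for \Cref{lem:correctness_quantum}) drives below $\nu/2$. Over the joint randomness of $(\vec{s},\vec{e})\sim\chi_s^n\times\chi_e^m$ and of $L$ it therefore suffices to prove $\Pr[\mathrm{FP}]\le\mu/2$ and $\Pr[\mathrm{FN}]\le\mu/2$, so that $\Pr[\textbf{(FP)}\vee\textbf{(FN)}]\le\mu=\nu/2$ and the total failure probability is at most $\nu$.

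For \textbf{(FP)}, condition on $\vec{s}_\enum$: the candidates that can be examined before $\vec{s}_\enum$ number at most $N_\enum(\vec{s}_\enum)$, each pairing with $p^{k_\fft}$ choices of $\tilde{\vec{s}}_\fft$. For a fixed wrong pair, $F_L(\tilde{\vec{s}}_\enum,\tilde{\vec{s}}_\fft)$ is a sum of $D$ bounded, mean-zero contributions over the randomness of $L$; a central-limit / Berry--Esseen argument (as in \cite{Matzov22}) puts it within total variation $o(1)$ of $\mathcal{N}(0,\le D_{\mathrm{arg}}D)$, so $\Pr[F_L(\tilde{\vec{s}}_\enum,\tilde{\vec{s}}_\fft)>C]\le\Phi(-C/\sqrt{D_{\mathrm{arg}}D})=\Phi(-\phi_{\mathrm{fp}}(\mu))=\frac{\mu}{2Gp^{k_\fft}}$ by the choice $C=\phi_{\mathrm{fp}}(\mu)\sqrt{D_{\mathrm{arg}}D}$. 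Summing over the at most $N_\enum(\vec{s}_\enum)\cdot p^{k_\fft}$ relevant pairs and taking $\E_{\vec{s}_\enum}$, the expected number of \textbf{(FP)}-witnesses is at most $\E_{\vec{s}_\enum}[N_\enum(\vec{s}_\enum)]\cdot p^{k_\fft}\cdot\frac{\mu}{2Gp^{k_\fft}}\le\mu/2$, because $\E_{\vec{s}_\enum\sim\chi_s^{k_\enum}}[N_\enum(\vec{s}_\enum)]=G$ by the very definition of the guessing complexity (this is exactly the place where \cite{Massey94}'s caveat is sidestepped: we never bound $G$ by $2^{H}$). Markov's inequality then gives $\Pr[\mathrm{FP}]\le\mu/2$.

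For \textbf{(FN)}, fix $\vec{s}$; over the randomness of $L$, $F_L(\vec{s}_\enum,\vec{s}_\fft)$ is a sum of $D$ bounded contributions of common mean equal to the product of (i) the ``clean'' Fourier coefficient of the aggregated noise $\langle\vec{x}_j,\vec{e}\rangle+\langle\vec{y}_{j,\rest},\vec{s}_\rest\rangle$, whose per-sample variance is $\sigma_e^2\Vert\vec{x}_j\Vert^2+\sigma_s^2\Vert\vec{y}_{j,\rest}\Vert^2=\sigma_s^2(\alpha^2\Vert\vec{x}_j\Vert^2+\Vert\vec{y}_{j,\rest}\Vert^2)=\sigma_s^2\ell^2$, and (ii) the modulus-switching rounding factor $\prod_{t:s_t\neq0}\frac{\sin(\pi s_t/p)}{\pi s_t/p}$ for $\vec{s}_\fft=(s_0,\dots,s_{k_\fft-1})$; its fluctuation variance is $\Theta(D_{\mathrm{arg}}D)$. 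Averaging (i) over $\vec{e}\sim\chi_e^m,\vec{s}_\rest\sim\chi_s^{k_\rest}$ reproduces the stated $D_{\mathrm{eq}}=\exp(4(\pi\sigma_s\ell/q)^2)$ and its second moment gives the stated $D_{\mathrm{arg}}$, while (ii) is controlled over $\vec{s}_\fft\sim\chi_s^{k_\fft}$ by the weighted geometric mean $D_{\mathrm{round}}=\prod_{\bar s\neq0}(\frac{\sin(\pi\bar s/p)}{\pi\bar s/p})^{-2k_\fft\chi_s(\bar s)}$ together with a Markov bound discarding a probability-$O(\mu)$ event on which $\vec{s}_\fft$ has atypically large coordinates. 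Then $D\ge D_{\mathrm{eq}}D_{\mathrm{round}}D_{\mathrm{arg}}D_{\mathrm{fpfn}}(\mu)$ makes the mean of $F_L(\vec{s}_\enum,\vec{s}_\fft)$ exceed $(1+2\eta)C$ by at least $\phi_{\mathrm{fn}}(\mu)=\Phi^{-1}(1-\mu/2)$ standard deviations, the $\eta$-shift of the threshold adding at most $\mu/4$ to the tail since $2\eta C=2\eta\phi_{\mathrm{fp}}(\mu)\sqrt{D_{\mathrm{arg}}D}\le\frac{\sqrt{2\pi}\mu}{4}\sqrt{D_{\mathrm{arg}}D}$ and the normal density is $\le 1/\sqrt{2\pi}$ (this is the role of the hypothesis $\eta\le\frac{\sqrt{2\pi}\mu}{8\phi_{\mathrm{fp}}(\mu)}$, exactly as in \Cref{lem:correctness_quantum}). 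A Gaussian tail bound, together with the discarded events, gives $\Pr[\mathrm{FN}]\le\mu/2$ after absorbing numerical constants as in the proof of \Cref{lem:correctness_quantum}.

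The delicate step is the distributional bookkeeping in \textbf{(FN)}: converting the per-instance quantities $D_{\mathrm{eq}}(\vec{e},\vec{s}_\rest)$ and $D_{\mathrm{round}}(\vec{s}_\fft)$ of \Cref{lem:matzov-c-d} into the distribution-level forms of the statement with the precise constants, in particular the appearance of the \emph{geometric} (not arithmetic) mean in $D_{\mathrm{round}}$ and the factor $\mu$ in $D_{\mathrm{fpfn}}(\mu)$ that encodes the cost of the Markov step ruling out atypical $\vec{s}_\fft$, together with verifying that all constant-probability quantum-subroutine failures, once amplified, fit inside the $\nu$ budget. Since \cite{Matzov22} already carry out the analogous averaging for \cite[Theorem~5.9]{Matzov22} (with $2^{H(\chi_s^{k_\enum})}$ in place of $G$), and \Cref{th:analysis_quantum_alg} has already isolated the quantum-specific losses, everything beyond the $H\mapsto G$ replacement and this bookkeeping is routine.
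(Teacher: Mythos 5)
Your proposal is correct and follows essentially the same route as the paper: the paper's entire proof of this lemma is the observation that one transcribes the proof of~\cite[Theorem~5.9]{Matzov22} verbatim, with the classical correctness statement replaced by \Cref{th:analysis_quantum_alg}/\Cref{lem:correctness_quantum} and the unjustified bound $\Exp{}{N_\enum(\vec{s}_\enum)}\leqslant 2^{k_\enum H(\chi_s)}$ replaced by the exact identity $\Exp{}{N_\enum(\vec{s}_\enum)}=G(\chi_s^{k_\enum})$, which is precisely your plan. Your additional reconstruction of the averaging (false-positive/false-negative bookkeeping, Markov over $N_\enum$, per-instance to distribution-level conversion of $D_{\mathrm{eq}},D_{\mathrm{round}}$) is consistent with what~\cite{Matzov22} does, and the loose constants you leave implicit (e.g.\ the factor $2N_\enum$ from \Cref{lem:quantum_find_first} and the constant-probability quantum-subroutine failures) are treated no more carefully in the paper itself.
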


\noindent The proof is exactly the same as in~\cite[Theorem~5.9]{Matzov22} except that we replace the inequality \[ \Exp{}{N_\enum(\vec{s}_{\enum})}\leqslant 2^{k_\enum H(\chi_s)} \] by \[ \Exp{}{N_\enum(\vec{s}_{\enum})}=G(\chi_s^{k_\enum}). \] The reason for this replacement is that the first inequality does not appear to be justified in~\cite{Matzov22} and does not hold in general. See \Cref{sec:guessing} for more details. We can now state our main theorem which gives the complexity of our quantum augmented dual attack.

\begin{theorem}\label{thm:quantum_dual_complexity}
  Let $(n,m,q,\chi_s,\chi_s)$ be $\LWE$ parameters, $(\beta_0,\beta_1,k_\enum,k_\fft,k_\rest,p)$ be a partial tuple of parameters for \Cref{alg:quantum_dual_attack}, and let $0<\nu<1$. Let \(T_{\mathrm{BKZ}}(d,\beta_0)\) denote the cost of running BKZ-\(\beta_{0}\) in dimension \(d\), let \(T_{\mathrm{sieve}}(\beta_1)\) be the cost of sieving in dimension \(\beta_{1}\) and let \(G^{qc}(X)\) be the quantum guessing complexity. Choosing the parameters $C,D$ according to \Cref{lem:correctness_quantum_all}, \Cref{alg:quantum_dual_attack} outputs $\vec{s}_{\enum}$ with probability at least $1-\nu$ in expected time
  \[    O\left(\left\lceil\frac{D}{(\sqrt{4/3})^{\beta_1+o(\beta_1)}}\right\rceil\cdot\left(T_{\mathrm{BKZ}}(d,\beta_0)+T_{\mathrm{sieve}}(\beta_1)\right)+G^{qc}(\chi_s^{k_\enum})\cdot p^{k_\fft/2} \cdot \sqrt{D}\right)
\]
    up to polynomial factors.
\end{theorem}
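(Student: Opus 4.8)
The plan is to establish the two parts of the statement separately. \emph{Correctness} --- that \Cref{alg:quantum_dual_attack} returns $\vec{s}_\enum$ with probability at least $1-\nu$ under the choices of $C$ and $D$ of \Cref{lem:correctness_quantum_all} --- is exactly that lemma (combined, when $\nu$ is small, with the routine trick of boosting the $\tfrac{9}{10}$-type success probabilities of the quantum subroutines to $1-O(\nu/\mathrm{poly})$, costing only an $O(\log(1/\nu))$ factor). The substance of the theorem is the \emph{expected running time}, which I would obtain by charging the cost of \Cref{alg:quantum_dual_attack} to two stages and checking that their sum matches the displayed expression up to factors polynomial in the natural size parameters ($n$, $\log q$, $k_\enum$, $k_\fft$, $k_\rest$, $\log(1/\nu)$).

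\emph{Stage 1 (sampling and QRACM setup, Steps~\ref{alg:quantum_dual_attack:init_start}--\ref{alg:quantum_dual_attack:store_qram}).} \Cref{alg:short_vectors_sampling} runs $\lceil D/N_{\mathrm{sieve}}(\beta_1)\rceil$ iterations, each consisting of a basis randomisation, one BKZ-$\beta_0$ call, and one sieve in dimension $\beta_1$, hence of cost $T_{\mathrm{BKZ}}(d,\beta_0)+T_{\mathrm{sieve}}(\beta_1)$ up to polynomial factors. Instantiating the sieve with a quantum-augmented sieve, its heuristic output size is $N_{\mathrm{sieve}}(\beta_1)=(\sqrt{4/3})^{\beta_1+o(\beta_1)}$, which yields the first summand. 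Constructing and filling the QRACM $\mathcal{O}_W$ costs $\tilde{O}(D)$ gates; since $T_{\mathrm{sieve}}(\beta_1)\geq N_{\mathrm{sieve}}(\beta_1)$, Stage~1 already costs at least $D$, so the QRACM construction is absorbed into this summand.

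\emph{Stage 2 (nested quantum search, Steps~\ref{alg:quantum_dual_attack:create_oracle_O}--\ref{alg:quantum_dual_attack:guess}).} I would bound the cost from the innermost layer outwards. One query to $\mathcal{O}_W'$ (Steps~\ref{alg:quantum_dual_attack:create_oracle_O'}--\ref{alg:quantum_dual_attack:end_oracle_O'}) performs one QRACM lookup, two matrix--vector products, and a rounding, i.e.~a polynomial number of gates in the assumed cost model. The algorithm $\mathcal{A}$ from \Cref{thm:Quantum_estimation_f_W} instantiated with $\delta=\tfrac{1}{10}$ and $\varepsilon=\tfrac{C}{D}\eta$ then uses $O(\varepsilon^{-1})=O\big(\tfrac{D}{C\eta}\big)$ queries to $\mathcal{O}_W'$ and $\tfrac{D}{C\eta}\cdot\mathrm{poly}$ gates; the key estimate is that $C=\phi_{\mathrm{fp}}(\mu)\sqrt{D_{\mathrm{arg}}\cdot D}$ with $D_{\mathrm{arg}}\in[\tfrac12,\tfrac32]$, and both $\phi_{\mathrm{fp}}(\mu)$ (which grows like $\sqrt{\log\!\big(G(\chi_s^{k_\enum})\,p^{k_\fft}/\mu\big)}$, polynomial by \Cref{lem:guessing_complexity_of_discrete_gaussian_multidim}) and $\eta^{-1}\le\tfrac{8\phi_{\mathrm{fp}}(\mu)}{\sqrt{2\pi}\mu}$ are polynomially bounded, so $\tfrac{D}{C\eta}=O(\sqrt D)$ up to polynomial factors. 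Hence one invocation of $\hat{\mathcal{O}}$ costs $O(\sqrt D)$ up to polynomial factors. The oracle $\mathcal{O}$ applies \Cref{thm:hoyer03} over the $p^{k_\fft}$ candidates $\tilde{\vec{s}}_\fft$, using $O(\sqrt{p^{k_\fft}})=O(p^{k_\fft/2})$ calls to the bounded-error routine $\hat{\mathcal{O}}$, so one call to $\mathcal{O}$ costs $O(p^{k_\fft/2}\sqrt D)$ up to polynomial factors. Finally, \Cref{lem:quantum_guessing} applied with the enumeration oracle $\tilde{\mathcal{O}}$ (which returns the $i$-th most likely $\tilde{\vec{s}}_\enum$ in polynomial time and then calls $\mathcal{O}$) makes an \emph{expected} $O(G^{qc}(\chi_s^{k_\enum}))$ calls to $\tilde{\mathcal{O}}$; multiplying the expected call count by the deterministic per-call bound gives expected cost $O\!\big(G^{qc}(\chi_s^{k_\enum})\cdot p^{k_\fft/2}\cdot\sqrt D\big)$ up to polynomial factors --- the second summand. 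Adding the two stages gives the claim.

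I expect the main difficulty to lie in the \emph{composition of bounded-error quantum subroutines}: one must verify that \Cref{thm:hoyer03} and \Cref{lem:quantum_guessing} can be stacked --- each producing a bounded-error bit-valued routine that legitimately serves as the oracle of the next layer --- without re-amplifying success probabilities at every level (which would introduce extra logarithmic factors), and that the \emph{expected}-time guarantee of \Cref{lem:quantum_guessing} multiplies cleanly with the worst-case per-$\mathcal{O}$-call cost, the enumeration layer being the only source of randomness in the running time. One must also check that the correctness hypothesis invoked via \Cref{th:analysis_quantum_alg} --- the separation between the thresholds $C$ and $(1+2\eta)C$ --- is genuinely ensured by the parameter choice of \Cref{lem:correctness_quantum_all}. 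A secondary, more routine, point is confirming that every auxiliary quantity ($\phi_{\mathrm{fp}}(\mu)$, $\eta^{-1}$, the depth of a QRACM access, arithmetic modulo $q$ and $p$) is polynomial in the size parameters so it vanishes into the ``up to polynomial factors'' slack; within that, the only non-obvious estimate is $C=\Theta(\sqrt D)$ up to polynomial factors, which is precisely what makes $\varepsilon^{-1}=\Theta(\sqrt D)$ rather than $\Theta(D)$.
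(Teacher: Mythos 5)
Your proposal is correct and follows essentially the same route as the paper's proof: correctness is delegated to \Cref{lem:correctness_quantum_all}, the sampling/QRACM stage is costed via $\lceil D/N_{\mathrm{sieve}}(\beta_1)\rceil$ iterations with $N_{\mathrm{sieve}}(\beta_1)=(\sqrt{4/3})^{\beta_1+o(\beta_1)}$, and the nested search is analysed inside out using \Cref{thm:Quantum_estimation_f_W} (giving $O(\sqrt{D})$ per $\hat{\mathcal{O}}$ call since $C=\Theta(\sqrt{D})$ up to polynomial factors), \Cref{thm:hoyer03} (factor $p^{k_\fft/2}$), and \Cref{lem:quantum_guessing} (expected factor $G^{qc}(\chi_s^{k_\enum})$). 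Your explicit handling of $\eta^{-1}$ and $\phi_{\mathrm{fp}}(\mu)$ as polynomially bounded is, if anything, slightly more careful than the paper's treatment.
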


\begin{proof}
The correctness follows directly from \Cref{lem:correctness_quantum_all}. We now analyse the
complexity. In this proof, we will neglect all polynomials factors in $n$ and $m$.

First, the cost of Step~\ref{alg:line:bkz} is that of \Cref{alg:short_vectors_sampling}, which is
\begin{equation}\label{eq:th7:cost_create_list} \left\lceil\frac{D}{N_{\mathrm{sieve}}(\beta_1)}\right
	\rceil\cdot\left(T_{\mathrm{BKZ}}(d,\beta_0) +T_{\mathrm{sieve}}(\beta_1)\right)
\end{equation} and $N_{\mathrm{sieve}}(\beta_1)=(\sqrt{4/3})^{\beta_1+o(\beta_1)}$ (this was justified in
\cite{AC:GuoJoh21}). Next, the cost of creating (Step~\ref{alg:quantum_dual_attack:create_qram}) and
filling (Steps~\ref{alg:quantum_dual_attack:loop_qram}-% \ref{alg:quantum_dual_attack:store_qram})
the QRACM is $D$ (up to polynomial factors), by the assumption of the QRACM model (see
\Cref{quantum_memory}), which is negligible compared to \eqref{eq:th7:cost_create_list}. The cost of
Step~\ref{alg:quantum_dual_attack:create_quant_est_alg} is zero (all the cost of the algorithm is
when $\mathcal{A}$ is run), this step is just for the readability of the algorithm.

We now analyse the complexity of each call to the oracle $\mathcal{O}$. On input $\tilde{s}_\enum$, the
algorithm starts by creating the oracle $\hat{\mathcal{O}}$. This creation cost is negligible, all the cost is
incurred when the oracle is run. We need to analyse the cost of each call to $\hat{\mathcal{O}}$. The cost of
computing $\theta$ (Step~\ref{alg:quantum_dual_attack:compute_theta}) is negligible (polynomial). The
creation of oracle $\mathcal{O}'_W$ is also negligible and each call to $\mathcal{O}'_W$ takes time polynomial since in
our model, getting an entry from the QRACM $\mathcal{O}_W$ is $O(1)$. Therefore, the cost of each call to
$\hat{\mathcal{O}}$ is, up to polynomial factors, the cost of running $\mathcal{A}^{\mathcal{O}'_W}$. By
\Cref{thm:Quantum_estimation_f_W}, this algorithm makes $O(\varepsilon^{-1}\log\tfrac{1}{\delta})$ queries to
$\mathcal{O}'_W$ (which runs in polynomial time) and takes time $O(\varepsilon^{-1}\log(\tfrac{1}{\delta})\poly[\log n])$.
Recall that by Step~\ref{alg:quantum_dual_attack:create_quant_est_alg}, we have $\delta=\tfrac{1}{10}$
and $\varepsilon=\tfrac{C}{D}n$. Furthermore, recall that we chose the parameter $C$ according to
\Cref{lem:matzov-c-d}, that is
\[ C = \phi_{\mathrm{fp}}(\mu) \cdot \sqrt{D_{\mathrm{arg}} \cdot D}, \qquad D_{\arg} =
\frac{1}{2}+\exp\left({-8{\left(\frac{\pi \tau}{q}\right)}^{2}}\right)
\]
It is immediate that $D_{\arg}\geqslant 1/2$ by definition, and $\phi_{\mathrm{fp}}(\mu)$ is a constant
factor (it only depends $\mu=\tfrac{\nu}{2}$ which is fixed). Therefore, each call to $\hat{\mathcal{O}}$ takes
time
\begin{equation}\label{eq:th7:cost_Ohat} O\left(\frac{D}{C}\poly[\log n]\right)=
	O\left(\sqrt{D}\poly[\log n]\right).
\end{equation}
We can now finish the analysis of the cost of running $\mathcal{O}$. The cost of the search
for an FFT coefficient above the threshold (Step~\ref{alg:quantum_dual_attack:grover_for_fft}) is
given by \Cref{thm:hoyer03}: it makes $O(\sqrt{M})$ calls to $\hat{\mathcal{O}}$ where $M$ is the size of the
search space, which is $\ZZ_p^{k_\fft}$. Hence, the total cost of each call to $\mathcal{O}$ is
\begin{equation}\label{eq:th7:cost_O} O(\sqrt{p^{k_\fft}})
	=O\left(p^{k_\fft/2}\sqrt{D}\poly[\log n]\right)
\end{equation}

It remains to analyse the final cost: creating the oracle $\tilde{\mathcal{O}}$ takes polynomial time. The
cost of each call to the oracle is \eqref{eq:th7:cost_O} plus a polynomial cost. Indeed, finding the
$i^{th}$ most probably vector according to $\chi_s^{k_\enum}$ can easily be done in polynomial time for
all reasonable choices of $\chi_s$ such as the discrete Gaussian. Finally, the final search
(Step~\ref{alg:quantum_dual_attack:guess}), given by \Cref{lem:quantum_guessing}, takes an expected
time $G^{qc}(\chi_s^{k_\enum})$ and makes an expected $G^{qc}(\chi_s^{k_\enum})$ calls to $\mathcal{O}$. Therefore
the overall cost of this step is
\begin{equation}\label{eq:th7:search} O(G^{qc}(\chi_s^{k_\enum})\cdot\eqref{eq:th7:cost_O})
	=O\left(\sqrt{D}p^{k_\fft/2}G^{qc}(\chi_s^{k_\enum})\poly[\log n]\right).
\end{equation}

To conclude the proof, just note that the total cost of the algorith, up to
polynomial factors is $\eqref{eq:th7:cost_create_list}+\eqref{eq:th7:search}$.\qed{}
%\end{proof}

\end{proof}

\section{Application}\label{sec:application}

In this section we give \emph{rough} estimates for the impact of our algorithm on the cost of solving lattice parameters from the literature.
In particular, we consider NIST PQC Round 3 candidate Saber~\cite{NISTPQC-R3:SABER20} and NIST PQC to-be-standardised candidate Kyber~\cite{NISTPQC-R3:CRYSTALS-KYBER20} and some TFHE
parameters~\cite{JC:CGGI20}.\footnote{These schemes do not use discrete Gaussians as secret distributions. However, we simply ignore this difference here.} Such a measurement is complicated by two major obstacles.
\begin{itemize}
\item The cost given in \Cref{thm:quantum_dual_complexity} is the sum of two costs: lattice reduction and a quantum search. Roughly speaking, lowering the first summand increases the second and vice versa. In other words, the final cost is obtained by balancing the two summands. For the first summand cost estimates in various cost models are available. In particular, estimates in quantum circuit models are available~\cite{AC:AGPS20}. Thus, to give precise cost estimates we require quantum circuit costs for the oracles in \Cref{alg:quantum_dual_attack}. It is clear that such costs would be substantial when compared with e.g.~\cite{AC:AGPS20}. In the latter, the costed circuit is essentially an XOR followed by an adder. Here, we have to implement matrix vector products mod \(q\) which will cost significantly more. We consider designing and costing quantum circuits for elementary operations such as linear algebra mod \(q\) beyond the scope of this paper. For this reason, we cost our algorithm in the quantum query model only, both for the oracle inside lattice reduction and our quantum search. In this model, all oracle queries are assigned unit cost. As just outlined, this is unrealistic but gives a ``best case'' estimate from the perspective of an attacker. Similarly, in the same spirit as the ``Core-SVP'' model, we cost algorithms by simply dropping the \(O()\) around cost estimates, which means our estimates may over- or underestimate the true cost; however, all evidence so far points towards underestimating the costs.
\item A second major obstacle is that our algorithm critically relies on QRACM, a possibly unrealistic resource as already pointed out in e.g.~\cite{AC:AGPS20}. Thus, even armed with a  quantum circuit for our oracle, we would have to assume a QRACM oracle (for which we, following previous work~\cite{AC:AGPS20}, assign unit cost for querying). This would not permit us to draw conclusions about realistic costs of solving instances of lattice problems.
\end{itemize}

\begin{table}[htbp]
\centering
\caption{Dual attack cost estimates. All costs are logarithms to base two.}\label{tab:results}
\begin{tabular}{l@{\hskip 1em}r@{\hskip 1em}r@{\hskip 1em}r@{\hskip 1em}r@{\hskip 1em}r@{\hskip 1em}r@{\hskip 1em}r@{\hskip 1em}r}
\toprule
 Scheme     & CC    & CN    & C0    & GE19  & QN    & Q0    & This work & This work \\
            &       &       &       &       &       &       & (QN)      & (Q0) \\
  \midrule
 Kyber 512  & 139.2 & 134.4 & 115.4 & 139.5 & 124.4 & 102.7 & 119.3 & 99.7 \\
 Kyber 768  & 196.1 & 190.6 & 173.7 & 191.9 & 175.3 & 154.6 & 168.3 & 150.0 \\
 Kyber 1024 & 262.4 & 256.1 & 241.8 & 252.0 & 234.5 & 215.0 & 225.6 & 208.4 \\
 LightSaber & 145.3 & 140.8 & 121.8 & 145.3 & 129.7 & 107.7 & 120.8 & 100.6 \\
 Saber      & 204.7 & 198.9 & 182.3 & 199.0 & 182.6 & 162.4 & 176.6 & 157.9 \\
 FireSaber  & 267.9 & 261.7 & 247.4 & 257.0 & 239.4 & 220.3 & 231.3 & 214.0 \\
 TFHE630    & 118.2 & 113.3 & 93.0  & 120.2 & 105.2 & 83.0  & 100.8 & 80.7 \\
 TFHE1024   & 122.0 & 117.2 & 95.4  & 123.9 & 108.5 & 84.8  & 105.6 & 83.2 \\
 \bottomrule
 \end{tabular}
 \end{table}

 \noindent We give the source code and the results of the comparison in \cref{sec:source-code} and \cref{tab:results}.\footref{fn:bug} In our table, for each set of parameters, we give the following cost estimates.
\begin{description}
\item[CC] Classical cost estimates in a classical circuit model~\cite{AC:AGPS20,NISTPQC-R3:CRYSTALS-KYBER20,Matzov22} for \Cref{alg:matzov} using~\cite{SODA:BDGL16} as the sieving oracle. We derive these estimates by implementing the cost estimates from~\cite{Matzov22}, those tagged ``asymptotic'' cf.~\cite{MABMDAP22}.\footnote{This explains the minor differences in numerical results compared to~\cite{Matzov22}. In particular, we have an additional exponential factor for the guessing complexity, cf.~\Cref{lem:guessing_complexity_of_discrete_gaussian_multidim}.} This is the most detailed cost estimate available in the literature. However, we caution that these estimates, too, ignore the cost of memory access and thus may significantly underestimate the true cost. That is, while RAM access is expected to be considerably cheaper than QRACM it is still not ``free'', cf.~\cite{MABMDAP22}. This cost model is called ``list\_decoding-classical'' in~\cite{AC:AGPS20}. We naturally do not cost our algorithm in this cost model.
\item[CN] Classical cost estimates in a query model for \Cref{alg:matzov} using~\cite{SODA:BDGL16} as the sieving oracle. We include this cost model for completeness and for interpreting our quantum query cost model estimates. This cost model is called ``list\_decoding-naive\_classical'' in~\cite{AC:AGPS20}. We naturally do not cost our algorithm in this cost model.
\item[C0] Classical cost estimates in the ``Core-SVP'' cost model~\cite{USENIX:ADPS16} for \Cref{alg:matzov} using~\cite{SODA:BDGL16} as the sieving oracle. This model assumes a single SVP call suffices to reduce a lattice. It furthermore assumes that all lower-order terms in the exponent are zero. This is to enable comparison with ``Q0'' below.
\item[GE19] Quantum costs in a circuit model based on~\cite{GidEke19} for \Cref{alg:matzov} using~\cite{SODA:BDGL16}. This is the most detailed quantum cost model available in the literature but we recall that here we still assume unit cost QRACM\@. This cost model is called ``list\_decoding-ge19'' in~\cite{AC:AGPS20}. We do not cost our algorithm in this cost model due to the lack of a quantum circuit design for our oracles.\footnote{Note that the quantum costs in this cost model may be higher than classical costs because the crossover between classical and quantum computing under these cost models is in higher dimensions.}
\item[QN] Quantum costs in the quantum query model for \Cref{alg:matzov} using the quantum version of~\cite{SODA:BDGL16} as the sieving oracle; all other steps are classical. This cost model is called ``list\_decoding-naive\_quantum'' in~\cite{AC:AGPS20}.
\item[Q0] Quantum cost estimates in the ``Core-SVP'' cost model~\cite{USENIX:ADPS16} for \Cref{alg:matzov} using~\cite{AC:ChaLoy21} as the sieving oracle; all other steps are classical. This is the asymptotically fastest quantum sieving algorithm but no estimates exist in the literature for lower-order terms; hence, we only consider it in the Core-SVP model.
\item[This work (QN)] The cost of \Cref{alg:quantum_dual_attack} in the quantum query model assuming the quantum version~\cite{PhD:Laarhoven15,AC:AGPS20} of~\cite{SODA:BDGL16}. Thus, the most natural comparison is to the column labelled ``CN''.
\item[This work (Q0)] The cost of \Cref{alg:quantum_dual_attack} in the Core-SVP model assuming~\cite{AC:ChaLoy21}. Thus, the most natural comparison is to the column labelled ``C0''.
\end{description}

\begin{remark}
  It is worth noting that our optimisation routine almost always returns attack parameters where \(k_{\fft} = 0\) and \(k_{\enum} > 0\), i.e.~the FFT part of the algorithm is disabled, leaving the mean estimation and secret guessing part.
\end{remark}

On the one hand, comparing the column labelled ``QN'' and the last column shows that our algorithm offers an improvement of between 3 and 9 ``bits'' in complexity in the query model. On the other hand, even in this -- arguably unrealistic -- model our improvements do not lower the cost of solving below a square-root of the targeted security level. That is, to force a revision of lattice parameters, a quantum algorithm would have to obtain a quadratic speed-up over the classical cost given as ``CN''.

\section{Open Problems}\label{sec:open_problem}

The crux of our quantum improvement is \Cref{sec:quantum_dual_attack}. Here we formalise the problem
that this algorithm solves a promise variant. We introduce some minor notation first.
Fix a finite group \(G=\ZZ_q^n\) and a list
\[
  L=\{(\vec{u}_0,w_0),\ldots,(\vec{u}_{k-1},w_{k-1})\}
\]
where \(\vec{u}_j\) are elements of $G$ and $w_j$ complex numbers.
For every $\vec{x}\in G$,
we let
\[
    f_L(\vec{x})=\sum_{j=0}^{k-1} w_je^{-\tfrac{2i\pi}{q}\vec{u}_j^T\vec{x}}.
\]
In other words, $f_L$ is the (unormalized) discrete Fourier transform of
the sequence $T:G\to\CC$ defined by
\begin{equation}\label{eq:def_T_for_Fourier}
    T(\vec{u})=\sum_{j:\vec{u}_j=\vec{u}}w_j.
\end{equation}
We now introduce two problems,
which we call ``input sparse FFT'' to avoid confusion with
``sparse FFT'' where the
sparseness refers to the number of nonzero Fourier coefficient,
not the number of nonzero inputs coefficients.
\medskip

\newcommand{\SparseFFTThreshold}{\texttt{INPUT}-\texttt{SPARSE}-\texttt{FFT}-\texttt{THRESHOLD}}
\begin{samepage}
\noindent\SparseFFTThreshold{}:
\begin{itemize}
    \item \textbf{input:} $G=\ZZ_q^n$ a finite group,
    \item \textbf{input:} $\delta>0$ a threshold,
    \item \textbf{input:} $L=\{(\vec{u}_0,w_0),\ldots,(\vec{u}_{k-1},w_{k-1})\}$,
    \item \textbf{output:} decide whether $\exists\vec{x}\in G$ such that $\Re({f_L(\vec{x})})>\delta$.
\end{itemize}
\end{samepage}

\newcommand{\PromiseSparseFFTThreshold}{\texttt{PROMISE}-\texttt{INPUT}-\texttt{SPARSE}-\texttt{FFT}-\texttt{THRESHOLD}}
\begin{samepage}
\noindent\PromiseSparseFFTThreshold{}:
\begin{itemize}
    \item \textbf{input:} $G=\ZZ_q^n$ a finite group,
    \item \textbf{input:} $\delta^+>\delta^->0$ two thresholds,
    \item \textbf{input:} $L=\{(\vec{u}_0,w_0),\ldots,(\vec{u}_{k-1},w_{k-1})\}$,
    \item \textbf{promise:} $\Re({{f_L}(\vec{x})})\notin[\delta^-,\delta^+]$ for all $\vec{x}\in G$,
    \item \textbf{output:} decide whether $\exists\vec{u}\in G$ such that $\Re({{f_L}(\vec{x})})>\delta^{+}$.
\end{itemize}
\end{samepage}

\begin{remark}
  To map this formulation back to our task,
  let \(\vec{u}_{j}=\lfloor \frac{p}{q} \cdot \vec{y}_{j,\fft}\rceil\) and
  \(w_j=\exp\left((\vec{x}_j^T \cdot \vec{b} -\vec{y}^T_{j,\enum}\cdot \tilde{\vec{s}}_\enum)\cdot \tfrac{2i\pi}{q}\right)\).
  \cref{alg:matzov:line:open-problem} of \cref{alg:matzov} is then
  adding $w_j$ to the cell $\vec{u}_j$ of \(T\), hence building $T$
  as defined in \eqref{eq:def_T_for_Fourier}.
  We then seek to decide if there is some \(\vec{x} = \tilde{\vec{s}}_{\fft} \in G = \ZZ_{p}^{k_\fft}\) s.t. \(\Re({f_L(\vec{x})})>\delta = C\), i.e.~the entry in the FFT'd table \(T\).
\end{remark}

\subsection*{Classical case}

In the classical case, to the best of our knowledge, the best algorithm is to
first fill an multidimensional array $T$ with the $k$ coefficients
and perform a complete FFT on the $|G|$ coefficients, which therefore takes time
$O(|G|\log|G|)$ plus $O(k)$ to fill the array.
While there are algorithms for ``sparse'' FFT (see e.g.~\cite{10.1145/2213977.2214029}), it is not clear that their approximation guarantees
would be sufficient. Indeed, the sparseness in such algorithm refers to the number of
output coefficients $\widehat{f_L}(\vec{u})$ which is assumed small. Since we expect all the output
coefficients of our FFT to be small and the threshold $\delta$ to be small compared to the number of input coefficients,
it is unlikely that such an approximation would be sufficient.

In the quantum case, the situation depends on the availability of quantum memories (QRACM).
Our algorithm relies on the use of a QRACM in a crucial way. In fact, without a QRACM,
we are not aware of any algorithm better than the classical one for all ranges of parameters. This is surprising in light
of the fact that QFT can be done in polynomial time: we will explain below why this fact alone is not
sufficient.

\subsection*{With access to a QRACM}

Our quantum (with QRACM) algorithm from \Cref{sec:quantum_dual_attack} solves
\PromiseSparseFFTThreshold{} as follows. For every $\vec{x}\in G$,
it computes an approximation of $\Re({{f_L}(\vec{x})})$ with error at most
$\tfrac{1}{2}(\delta^+-\delta^-)$ and then compares it to $\delta^+$. By the promise,
this suffices to solve the problem. We then leverage two facts to obtain a quantum speedup:
the search over $\vec{x}\in G$ can be done using Grover's algorithm, and the approximation
is done by amplitude estimation (\Cref{thm:Quantum_estimation_f_W}).
The running time of our algorithm is $k\sqrt{|G|}/{(\delta^+-\delta^-)}$
assuming oracle access to the $\vec{u}_j$ and $w_j$,
and it outputs a correct index with constant
probability. A QRACM holding the $\vec{u}_j$ and $w_j$ can be built in time $O(k)$.
In the dual attack algorithm, the interesting set of parameters for this
algorithm is roughly
\begin{equation}\label{eq:setting_dual_attack}
    \delta^+-\delta^-=\Theta(\sqrt{k}),
    \qquad
    \delta^-=\Theta(\sqrt{k}),
    \qquad
    |w_j|=1.
\end{equation}
Therefore our algorithm has running time
roughly $O(\sqrt{k|G|})$ (assuming oracle access to the $\vec{u}_j$
and $w_j$) which is always better than
$O(|G|\log|G|+k)$ and potentially
much better if $k$ is either much smaller or much bigger
than $|G|$.

\subsection*{Without access to a QRACM}

It is possible to obtain a speed up over the classical
algorithm when $k$ is much smaller than $|G|$ without using
any QRACM. First, one can create an oracle $\mathcal{O}_L$
that given $\vec{x}$, compute ${f_L}(\vec{x})$
and returns $1$ if it is greater than $\delta$.
This oracle contains the list $L$ hardcoded into the circuit
and compute the sum naively.
Hence, it has depth $O(k)$ and takes time $O(k)$ to evaluate.
We then use Grover's algorithm to search for a $\vec{u}$
such that $\mathcal{O}(\vec{x})=1$, that is
${f_L}(\vec{u})>\delta$. Grover's search
and the oracle can be implemented with a plain quantum
circuit and a polynomial number of qubits, without the need
for a QRACM. The overall complexity is then, up to some small
polynomial factors for the arithmetic operations that we have
neglected,
\[
    O(k\sqrt{|G|}).
\]
This algorithm is better than the classical $O(|G|\log|G|)$
whenever $k=o(\sqrt{|G|})$.
In the context of dual attacks, this unfortunately brings
no improvement for our algorithm
(without access to a QRACM) since the optimal choice of the
parameters always seem to satisfy $k\gtrsim\sqrt{|G|}$.
For this reason, we did not report the results in
\Cref{tab:results}.

\subsection*{Why the QFT does not help}

Let $L=\{(\vec{u}_0,w_0),\ldots,(\vec{u}_{k-1},w_{k-1})\}$ be a list.
In order to apply the QFT, we need
to create the superposition
\begin{equation}\label{eq:superposition_for_fft}
    \ket{\psi}=\frac{1}{\sqrt{Z}}\sum_{j=0}^{k-1}w_j\ket{\vec{u}_j}
\end{equation}
where $Z$ is the normalisation factor. In general,
\[
    Z=\sum_{\vec{u}\in G}\Bigg|\sum_{j:\vec{u}_j=\vec{u}}w_j\Bigg|^2
\]
which simplifies to
\begin{equation}\label{eq:value_Z_all_distinct}
    Z=\sum_{j=0}^{k-1}|w_j|^2
    \qquad\text{if the }\vec{u}_j\text{ are distinct}.
\end{equation}
We then perform a QFT on $\ket{\psi}$ to obtain
\[
    \ket{\widehat{\psi}}=\frac{1}{\sqrt{Z|G|}}\sum_{\vec{u}\in G}{f_L}(\vec{u})\ket{\vec{u}}.
\]
We now would like to decide if there is some $\vec{x}\in G$ such that $f_L(\vec{x})>\delta$. Unfortunately,
there are two problems with this approach:
\begin{enumerate}
\item it is not clear how to create the superposition in \cref{eq:superposition_for_fft} efficiently without a QRACM,
\item it is not clear that we can efficiently detect whether there is an amplitude in front of some $\ket{\vec{x}}$ which is above the threshold.
\end{enumerate}
The first point is not a problem in our setting since we consider QRACM as part of our model of computation. Indeed, by storing the $w_j$ in a QRACM and applying
some elementary operations, one can produce such a state with high probability
(see e.g.~\cite[Theorem~1]{PhysRevA.99.012301}).
Regarding the second point, a possible strategy would be to measure $\ket{\widehat{\psi}}$ and obtain one $\vec{x}$.
By repeating this algorithm a very large number of times,
we can approximate the probability of the most likely $\vec{x}$
and therefore detect whether there is some sufficiently large ${f_L}(\vec{x})$.
In order to approximate this quantity within $\varepsilon$
and with probability at least $1-\nu$,
we need $N$ samples, where (see \Cref{sec:open_question_calc})
\begin{equation}\label{eq:compl_fft_threshold_with_qram}
    N=O(\ln(1/\nu)/\varepsilon^2),
    \qquad
    \varepsilon=\frac{\eta^+-\eta^-}{2},
    \qquad
    \eta^{\pm}=\frac{(\delta^{\pm})^2}{Z\cdot |G|}.
\end{equation}
In the case of the dual attack,
the parameters \eqref{eq:setting_dual_attack}
do not completely characterize the state and we need to take
into account the relation between $k$ and $|G|$.
\textbf{Assuming that $k\ll|G|$,}
then the $\vec{u}_j$ are distinct with high probability and therefore,
by \eqref{eq:value_Z_all_distinct},
$Z=k$ since $|w_j|=1$ because all inputs coefficients.
Therefore,
\[
    \varepsilon
        =\frac{\eta^+-\eta^-}{2}
        =\frac{(\delta^+)^2-(\delta^-)^2}{2|G|Z}
        =\frac{(\delta^+-\delta^-)(\delta^++\delta^-)}{2|G|Z}
        =\Theta\left(\frac{1}{|G|}\right).
\]
Hence, to succeed with probability $1-\nu$, we need
\begin{equation}\label{eq:samples_fft_threshold_naive}
    N=\Theta(|G|^2\ln(1/\nu))
\end{equation}
samples. It is not clear if there is a better strategy that merely decides on the presence of some \(\vec{x}\) without recovering it. Note that \eqref{eq:samples_fft_threshold_naive}
is even worse than the classical complexity.
\medskip

\subsection*{Summary}

We have introduced a problem, and its promise version,
related to FFT on a sparse input.
We have explained how our quantum
algorithm for the dual attacks solves the promise version
more efficiently than the classical algorithm but requires
access to a QRACM. We have also given an alternative quantum
algorithm that does not require access to a QRACM but which
is always worse than our algorithm and not always better
than the classical algorithm (depending on the value of $k$).
Specifically, we have the following algorithms to solve the promise problem:
\begin{itemize}
    \item classical: $O(|G|\log|G|)$ plus $O(k)$ for array filling,
    \item quantum (no QRACM): $O(k\sqrt{|G|})$,
    \item quantum (QRACM): $O(k\sqrt{|G|/(\delta^+-\delta^-)})$ plus $O(k)$ for QRACM filling.
\end{itemize}
Which one is better depends on the choice of the parameters. Clearly, $\Omega(k)$ is a lower bound on the complexity since
the algorithm needs to read the input in any case.
We have also explained why the natural algorithm based on
the QFT and measurements is seemingly always worse than
all of the above.

In the context of dual attacks, the quantum algorithm with QRACM above
has complexity $O(\sqrt{k|G|})$ for parameters \eqref{eq:setting_dual_attack}.
Therefore the following two questions regarding \PromiseSparseFFTThreshold{}:
\begin{itemize}
\item Is the quantum complexity \(O(\sqrt{k|G|})\) optimal  with a QRACM
% for parameters as in \eqref{eq:setting_dual_attack}
?
\item  Can we achieve quantum complexity better than \(O(|G|\log|G|+k)\) when $k=\Omega(\sqrt{|G|})$ and without a QRACM?
\end{itemize}

Another open problem is to prove a matching lower bound for the
(classical and quantum) guessing complexity of a discrete Gaussian.
See \Cref{lem:guessing_complexity_of_discrete_gaussian_multidim}
and the remark below for more details.

\ifeprint
\section*{Acknowledgements}
We thank André Chailloux, Yanlin Chen, Thomas Debris-Alazard, Yassine Ham\-oudi, André Schrottenloher, Amaury Pouly, Jean-Pierre Tillich for helpful discussions.
We thank Erik Mårtensson and Alessandro Budroni for alerting us to a bug in the estimation code contained and used in a previous version of this work.
The research of MA was supported by EPSRC grants EP/S020330/1 and EP/S02087X/1, and by the European Union Horizon 2020 Research and Innovation Program Grant 780701. The research of YS was supported by EPSRC grant EP/S02087X/1 and EP/W02778X/1.
\fi

\bibliographystyle{alpha}
\bibliography{cryptobib/abbrev3,cryptobib/crypto_crossref,local}

\newpage
\appendix

\ifeprint
\section*{Appendix}
\else
\section*{Supplementary Material}
\fi

\section{Proof of \Cref{lem:quantum_guessing}}\label{sec:proof_guessing_complexity}
We intuitively want to call the algorithm from \Cref{lem:quantum_find_first}
directly but there is a small issue. Indeed, with small probability, the algorithm
might not find the marked element and run in time $\sqrt{|E|}$ which will make average
execution time bigger than what we want. Instead, we search in a subset of the array
that we double at each failure. This ensures that the ``worst-case'' cost remains under
control.

\begin{algorithm}[ht]
    \SetKwProg{Oracle}{create oracle}{:}{}
    \KwIn{size $N$ of $E$, order $\sigma$ and bounded-error oracle $\mathcal{O}$}
    \KwOut{(guess for) $i$ such that $\mathcal{O}(i)=1$, or $\bot$}
    \Oracle{\textup{$\mathcal{O}'$($i$)}}{
        \Return{$\mathcal{O}(\sigma(i))$}
    }
    $n\gets 1$ \;
    \While{$n<2N$}{
        Use \Cref{lem:quantum_find_first} to find $i\in[1,\min(n,N)]$
        such that $\mathcal{O}'(i)=1$, or $i=\bot$ \;
        \If{$i\neq\bot$}{
            Call $1+2\log(n)$ times $\mathcal{O}'(i)$ and let $b\in\{0,1\}$ be the majority
            answer\;
            \If{$b=1$}{
                \Return{$i$}
            }
        }
        $n\gets 2n$\;
    }
    \Return{$\bot$}
    \caption{Quantum guessing algorithm with bounded error oracle\label{alg:quantum_guessing}}
\end{algorithm}

Consider \Cref{alg:quantum_guessing}, which we call with
$\sigma$ and $\mathcal{O}_x$ for some fixed $x\in E$.
Note that $\sigma$ is surjective so there
exists $i$ such that $\sigma(i)=x$ and then $\mathcal{O}'(i)=1$
with probability at least $9/10$. Since $\sigma$ is injective,
$\sigma(j)\neq x$ for $j\neq i$ so $\mathcal{O}'(j)=0$ with
probability at least $9/10$. Therefore, $\mathcal{O}'$ computes,
with bounded error at most $1/10$, the function $f_x(x)=1$ and $f_x(y)=0$
for all $y\neq x$.
Let $i_x$ be such that $\sigma(i_x)=x$, and $p_x$ such that $2^{p_x-1}<i_x\leqslant 2^{p_x}$.

Let $k<p_x$ and let us analyse the $k^{th}$ iteration of the loop.
During the iteration, $n=2^{k-1}$ so $n<i_x$ and therefore, $f_x(j)=0$ for all $j\in[1,n]$.
It follows that \Cref{lem:quantum_find_first} will return $\bot$ with probability at
least $9/10$ and then the loop will continue to its next iteration. With probability at
most $1/10$, it will return a (wrong) index $i\neq\bot$. But $f_x(i)=0$ so each call
to $\mathcal{O}'(i)$ will return $0$ with probability at least $9/10$. Therefore with
probability at most $10^{-k}$, a majority of the $1+2k$ calls will return $0$ and the
loop with continue. In summary, the loop will return a wrong index only with probability
at most $10^{-k-1}$. In all cases, the cost of this iteration is $O(\sqrt{2^k})$ queries/time
for the search, plus an extra $1+2k$ calls to the oracle.

It follows that with probability at most $\sum_{k=1}^{p_x-1}10^{-k-1}\leqslant 1/9$,
the algorithm will stop during on the first $p_x$ iterations and return a wrong answer.
Hence, with probability at least $8/9$, the algorithm reaches the $p_x^{th}$ iterations.

Now assume that the algorithm has reached the $p_x^{th}$ iteration.
Let $p_x\leqslant k\leqslant 1+\log_2(N)$ and let us analyse the $k^{th}$ iteration of the
loop.
During the iteration, $n=2^{k-1}$ so $i_x\in[1,\min(n,N)]$ and $f_x(j)=0$
for all $j\neq i_x$.
It follows that \Cref{lem:quantum_find_first} will return $i_x$ with probability at
least $9/10$. Since $f_x(i_x)=1$, each call to $\mathcal{O}'(i_x)$ will return $1$ with
probability at least $9/10$. Therefore with
probability at least $1-10^{-k}\geqslant 9/10$,
a majority of the $1+2k$ calls will return $1$
and the algorithm will return $i_x$.
In summary, with probability at least $(9/10)^2\geqslant 4/5$,
the algorithm stops and return $i_x$ during the $k^{th}$ iteration.
Conversely, with probability at most
$1/5$, the algorithm will either return a wrong answer or continue to the next
iteration.

This analysis shows that the algorithm is correct since already with probability at
least $8/9$ the algorithm will reach $p_x^{th}$ iteration and then with probability
at least $4/5$ it will return $i_x$ during that iteration, so it returns $i_x$
with probability at least $32/45$.

We now analyse the complexity: the cost of the $k^{th}$ iteration is the cost
of the search and (maybe) the $1+2k$ calls to the oracle, which overall is
\[
    O(\sqrt{2^{k-1}}+1+2k)=O(\sqrt{2^k})
\]
queries/time. To simplify the analysis, we can always assume that the first
$p_x-1$ iterations are done in all cases (this is clearly an upper bound in the case
where the algorithm stops too early). We can also assume that the while loop
does not stop at $n<2N$ but instead goes on forever (again this is clearly an upper
bound on the real cost).
For each $k\geqslant p_x$, the $k^{th}$ iteration
only occurs with probability at most $5^{-(k-p_x+1)}$ by the analysis above. Therefore,
the expected running time/number of queries of the algorithm is bounded by
\begin{align*}
    O\left(\sum_{k=1}^{p_x-1}\sqrt{2^k}+\sum_{k=p_x}^{\infty}5^{-(k-p_x+1)}\sqrt{2^k}\right)
        &=O\left(\sqrt{2^{p_x}}+\sqrt{2^{p_x}}\sum_{k=0}^\infty (\sqrt{2}/5)^k\right)\\
        &=O\left(\sqrt{2^{p_x}}\right)\\
        &=O\left(\sqrt{i_x}\right).
\end{align*}

Now recall that this analysis holds when the algorithm is called
with an oracle $\mathcal{O}_x$ for a given $x$.
We now let $x$ be drawn from $X$. Then every $x$ is chosen with
probability $\Prob{X}{X=x}$ and, when this is the case, the
returned index is $i_x=\sigma^{-1}(x)$.
Hence, the expected time/query
complexity of the algorithm when given $\mathcal{O}_X$ is
\[
    O\left(\sum_{x\in E}\Prob{X}{X=x}\sqrt{\sigma^{-1}(x)}\right)
    =O\left(\sum_{i=0}^\infty\Prob{X}{X=\sigma(i)}\sqrt{i}\right)
    =O(G^{qc}(X)).
\]
since we assumed that $\sigma$ orders the elements by decreasing
probability.
\qed{}

\section{Proof of Lemma~\ref{lem:guessing_complexity_of_discrete_gaussian_multidim}}\label{sec:proof:guessing_complexity_of_discrete_gaussian_multidim}

For reasons that will be clear in the proof of the bounds, we introduce
for all $s>0$ and $n\in\NN$,
\[
    g_{n,s}(\sigma):=\sum_{\ell=0}^\infty \frac{\rho_\sigma(\sqrt{\ell})}{\rho_\sigma(\ZZ^n)} \cdot N(\ell)^s
\]
where $N(\ell):=N_n(\ell)=\left|\set{\vec{x}\in\ZZ^n:\norm{\vec{x}}^2\leqslant\ell}\right|$. Note that $N(-1)=0$. We also bound $\rho_\sigma(\ZZ^n)$ by using the Poisson summation formula as follows:
\begin{equation}\label{eq:lower_bound_rho_Z^n}
    \rho_\sigma(\ZZ^n)
        ={\left(\rho_\sigma(\ZZ)\right)}^n
        ={\left(\sigma\sqrt{2\pi}\rho_{1/2\pi\sigma}(\ZZ)\right)}^n
        >{(\sigma\sqrt{2\pi})}^n
\end{equation}
and
\begin{equation}\label{eq:upper_bound_rho_Z^n}
    \rho_{1/2\pi\sigma}(\ZZ)
        =1+2\sum_{n=1}^\infty e^{-2\pi^2\sigma^2 n^2}
        \leqslant 1+2\sum_{n=1}^\infty e^{-2\pi^2\sigma^2 n}
        %=1+\frac{2e^{-2\pi^2\sigma^2}}{1-e^{-2\pi^2\sigma^2}}
        =\coth(\pi^2\sigma^2).
\end{equation}
Also recall~\cite{Ban93} that
\begin{equation}\label{eq:ineq_ban93_gaussian_weight}
    \rho_s(\mathcal{L})\leqslant s^n\cdot \rho(\mathcal{L})
\end{equation}
for any $s\geqslant 1$ and lattice $\mathcal{L}\subset\RR^n$.

\subsection{Generic bounds on $g_{n,k}(\sigma)$ for $k$ integer}

We start by obtaining a general relationship between the $g_{n,k}$ and $g_{n,1}$ where $k$ is an
integer. A simple counting argument shows that
${N(\ell)}^k\leqslant N_{kn}(k\ell)$. Indeed, the map
$\phi:(\vec{x}_1,\ldots,\vec{x}_k)\in(\ZZ^n)^k\to
\begin{pmatrix}\vec{x}_1,\cdots,\vec{x}_k\end{pmatrix}\in\ZZ^{kn}$ is injective
and if $\norm{\vec{x}_1}^2,\ldots,\norm{\vec{x}_k}^2\leqslant\ell$ then
$\norm{\phi(\vec{x}_1,\ldots,\vec{x}_k)}^2
=\norm{\vec{x}_1}^2+\cdots+\norm{\vec{x}_k}^2\leqslant k\ell$.
Therefore,
\[
    g_{n,k}(\sigma)
        \leqslant\sum_{\ell=0}^\infty
            N_{kn}(k\ell)\frac{e^{-\tfrac{\ell}{2\sigma^2}}}{\rho_\sigma(\ZZ^n)}
        =\frac{1}{\rho_\sigma(\ZZ^n)}H^0_{kn}(e^{-1/2\sigma^2})
\]
where
\[
    H^i_n(x)=\sum_{\ell=0}^\infty N_n(k\ell+i)x^{\ell}.
\]
Furthermore,
\[
    H_n^0(x^k)+xH_n^1(x^k)+\cdots+x^{k-1}H_n^{k-1}(x^k)=\sum_{\ell=0}^\infty N_n(\ell)\cdot x^\ell.
\]
Since $N_n(\ell)$ is increasing with $\ell$, we immediately have that
$H_n^0(x)\leqslant H_n^i(x)$ for all $i$ so
\[
    H_n^0(x)\leqslant \frac{1}{1+x^{1/k}+\cdots+x^{(k-1)/k}}
        \sum_{\ell=0}^\infty N_n(\ell)\cdot \sqrt[k]{x}^\ell.
\]
In particular,
\begin{align*}
    g_{n,k}(\sigma)
        &\leqslant \frac{1}{\rho_{\sigma}(\ZZ^{n})}\frac{1}{\sum_{i=0}^{k-1}e^{-i/2k\sigma^2}}
            \sum_{\ell=0}^\infty N_{kn}(\ell)\cdot e^{-\ell/2k\sigma^2}\\
        &= \frac{\rho_{\sqrt{k}\sigma}(\ZZ^{kn})}{\rho_{\sigma}(\ZZ^{n})}
            \frac{1}{\sum_{i=0}^{k-1}e^{-i/2k\sigma^2}}
            \sum_{\ell=0}^\infty \frac{N_{kn}(\ell)}{\rho_{\sqrt{k}\sigma}(\ZZ^{kn})}\cdot e^{-\ell/2(\sqrt{k}\sigma)^2}\\
        &=\frac{\rho_{\sqrt{k}\sigma}(\ZZ^{kn})}{\rho_{\sigma}(\ZZ^{n})}
            \frac{1}{\sum_{i=0}^{k-1}e^{-i/2k\sigma^2}}\cdot g_{kn,1}(\sqrt{k}\sigma).
\end{align*}
It is well-known (check by developing $(1-x)^{-1}$ into its series) that
\[
    \sum_{\ell=0}^\infty N_n(\ell)\cdot x^\ell=\frac{1}{1-x}
        \sum_{\ell=0}^\infty S_n^L(\ell)\cdot x^\ell,
    \qquad
    S_n^L(\ell)=\left|  \set{\vec{x}\in\ZZ^n:\norm{\vec{x}}^2=\ell}\right|.
\]
On the other hand, for $x=e^{-1/2\sigma^2}$, $\sum_{\ell=0}^\infty S_n^L(\ell)\cdot x^\ell=\rho_\sigma(\ZZ^n)$.
Therefore,
\begin{equation}\label{eq:bound_g1}
    g_{n,1}(\sigma)=\frac{1}{1-e^{-1/2\sigma^2}}
\end{equation}
Therefore
\begin{align*}
    g_{n,k}(\sigma)
        &\leqslant \frac{\sqrt{k}^{kn}\rho_{\sigma}(\ZZ)^{(k-1)n}}{\sum_{i=0}^{k-1}e^{-i/2k\sigma^2}}
            \cdot g_{kn,1}(\sqrt{k}\sigma)
        &&\text{by \eqref{eq:ineq_ban93_gaussian_weight}}\\
        &= \frac{\sqrt{k}^{kn}\rho_{\sigma}(\ZZ)^{(k-1)n}}{\sum_{i=0}^{k-1}e^{-i/2k\sigma^2}}
            \cdot \frac{1}{1-e^{-1/2(\sqrt{k}\sigma)^2}}
            &&\text{by \eqref{eq:bound_g1}}\\
        &=\frac{\sqrt{k}^{kn}\rho_{\sigma}(\ZZ)^{(k-1)n}}{1-e^{-1/2\sigma^2}}
            \numberthis\label{eq:bound_gn}.
\end{align*}
Note that this equation also holds for $k=1$ by \eqref{eq:bound_g1}. In fact, it even
holds for $k=0$, with the convention that $0^0=1$. since
\begin{equation}\label{eq:bound_g0}
    g_{n,0}(\sigma)
        =\frac{1}{\rho_\sigma(\ZZ^n)}\sum_{\ell=0}^\infty e^{-\ell/2\sigma^2}
        =\frac{1}{\rho_\sigma(\ZZ^n)}\frac{1}{1-e^{-1/2\sigma^2}}.
\end{equation}

\subsection{Bound on $G(D_{\ZZ^n,\sigma})$}

Observing that \(D_{\ZZ^n,\sigma}\) decreases with $\norm{\vec{x}}$, we rewrite the guessing complexity as
\[
    G(D_{\ZZ^n,\sigma})
        =\sum_{\ell=0}^\infty
            \frac{\rho_\sigma(\sqrt{\ell})}{\rho_\sigma(\ZZ^n)}\sum_{i=N(\ell-1)+1}^{N(\ell)}i.
\]
It follows that, for $n\geqslant 4$
(we need every number to be a sum of $n$ squares so that $N(\ell)>N(\ell-1)$):
% \todo[]{``- p.30, the two lines following Equation (3) are redundant, you can directly write the last one''}
\begin{align*}
    G(D_{\ZZ^n,\sigma})
        &=\sum_{\ell=0}^\infty
        \frac{\rho_\sigma(\sqrt{\ell})}{\rho_\sigma(\ZZ^n)} \cdot
            \frac{(N(\ell)-N(\ell-1)) (N(\ell-1)+N(\ell)+1)}{2}
            \label{eq:formula_G_X}\numberthis\\
        &\leqslant \sum_{\ell=0}^\infty
        \frac{\rho_\sigma(\sqrt{\ell})}{\rho_\sigma(\ZZ^n)} \cdot (N(\ell)-N(\ell-1))\cdot N(\ell)\\
        &\leqslant g_{n,2}(\sigma)\\
        &\leqslant \frac{2^{n}\rho_{\sigma}(\ZZ)^n}{1-e^{-1/2\sigma^2}}
            &&\hspace{-2cm}\text{by \eqref{eq:bound_gn}.}\label{eq:formula_G_X_upper}\numberthis
\end{align*}

\subsection{Bound on $G^{qc}(D_{\ZZ^n,\sigma})$}

For the quantum guessing complexity, we use the same approach to get that
\begin{equation}\label{eq:formula_Gqc_X}
    G^{qc}(D_{\ZZ^n,\sigma})
        =\sum_{\ell=0}^\infty
            \frac{\rho_\sigma(\sqrt{\ell})}{\rho_\sigma(\ZZ^n)}
            \sum_{i=N(\ell-1)+1}^{N(\ell)}\sqrt{i}.
\end{equation}
Now check by induction on $b$ that
\[
    \sum_{i=0}^b\sqrt{i}\leqslant \tfrac{2}{3}n^{3/2}+\tfrac{1}{2}\sqrt{n}.
\]
It follows that
\begin{align*}
      G^{qc}(D_{\ZZ^n,\sigma})
        &\leqslant \frac{2}{3}\sum_{\ell=0}^\infty
                \frac{\rho_\sigma(\sqrt{\ell})}{\rho_\sigma(\ZZ^n)}N(\ell)^{3/2}
                +\frac{1}{2}\sum_{\ell=0}^\infty
                \frac{\rho_\sigma(\sqrt{\ell})}{\rho_\sigma(\ZZ^n)}\sqrt{N(\ell)}\\
        &=\tfrac{2}{3}g_{n,3/2}(\sigma)+\tfrac{1}{2}g_{n,1/2}(\sigma).
\end{align*}
Now check, that for $\alpha=2/3$,
\begin{align*}
    g_{n,3/2}(\sigma)
    &=\sum_{\ell=0}^\infty
        \frac{\rho_\sigma(\sqrt{\ell})}{\rho_\sigma(\ZZ^n)}N(\ell)^{3/2}\\
    &=\frac{1}{\rho_\sigma(\ZZ^n)}\sum_{\ell=0}^\infty
        \sqrt{\rho_\sigma(\sqrt{\ell})^\alpha N(\ell)}^3\\
    &\leqslant\frac{1}{\rho_\sigma(\ZZ^n)}\sqrt{\sum_{\ell=0}^\infty
        \rho_\sigma(\sqrt{\ell})^\alpha N(\ell)}^3
        &&\text{by Cauchy–Schwarz}\\
    &=\frac{1}{\rho_\sigma(\ZZ^n)}\sqrt{\sum_{\ell=0}^\infty
        \rho_{\sigma/\sqrt{\alpha}}(\sqrt{\ell}) N(\ell)}^3\\
    &=\frac{1}{\rho_\sigma(\ZZ^n)}\sqrt{
        \rho_{\sigma/\sqrt{\alpha}}(\ZZ^n)g_{n,1}(\tfrac{\sigma}{\sqrt{\alpha}})}^3\\
    &=\frac{\rho_{\sigma/\sqrt{\alpha}}(\ZZ^n)^{3/2}}{\rho_\sigma(\ZZ^n)}
        \frac{1}{(1-e^{-1/3\sigma^2})^{3/2}}
        &&\text{by \eqref{eq:bound_g1}}\\
    &\leqslant\frac{\rho_{\sigma}(\ZZ^n)^{3/2}}{\sqrt{\alpha}^{3n/2}\rho_\sigma(\ZZ^n)}
        \frac{1}{(1-e^{-1/3\sigma^2})^{3/2}}
        &&\text{by \eqref{eq:ineq_ban93_gaussian_weight} since $1/\sqrt{\alpha}>1$}\\
    &=\left(\frac{3}{2}\right)^{3n/4}
        \frac{\sqrt{\rho_{\sigma}(\ZZ^n)}}{(1-e^{-1/3\sigma^2})^{3/2}}
        \numberthis\label{eq:upper_bound_gn3_2_unsimplified}
    % \\
    % &\leqslant
    %     \left(\frac{3}{2}\right)^{3n/4}
    %     \frac{\left(\sigma\sqrt{2\pi}\coth(\pi^2\sigma^2)\right)^{n/2}}{(1-e^{-1/3\sigma^2})^{3/2}}
    %     &&\text{by \eqref{eq:upper_bound_rho_Z^n}}\\
    % &=\frac{\left(3^{3/2}\cdot 2^{-1}\sigma\sqrt{\pi}\coth(\pi^2\sigma^2)\right)^{n/2}}
    %     {(1-e^{-1/3\sigma^2})^{3/2}}.
\end{align*}
Similarly,
\begin{align*}
    g_{n,1/2}(\sigma)
    &=\sum_{\ell=0}^\infty
        \frac{\rho_\sigma(\sqrt{\ell})}{\rho_\sigma(\ZZ^n)}N(\ell)^{1/2}\\
    &=\sum_{\ell=0}^\infty
        \sqrt{\frac{\rho_\sigma(\sqrt{\ell})}{\rho_\sigma(\ZZ^n)}N(\ell)}
        \cdot
        \sqrt{\frac{\rho_\sigma(\sqrt{\ell})}{\rho_\sigma(\ZZ^n)}}\\
    &\leqslant\sqrt{\sum_{\ell=0}^\infty\frac{\rho_\sigma(\sqrt{\ell})}{\rho_\sigma(\ZZ^n)}N(\ell)}
        \sqrt{\sum_{\ell=0}^\infty\frac{\rho_\sigma(\sqrt{\ell})}{\rho_\sigma(\ZZ^n)}}\\
    &=\sqrt{g_{n,1}(\sigma)g_{n,0}(\sigma)}\\
    &=\frac{1}{1-e^{-1/2\sigma^2}}\frac{1}{\sqrt{\rho_\sigma(\ZZ^n)}}
        &&\text{by \eqref{eq:bound_g1} and \eqref{eq:bound_g0}.}
\end{align*}
It follows by \eqref{eq:upper_bound_gn3_2_unsimplified} that
\begin{align*}
    G^{qc}(D_{\ZZ^n,\sigma})
        &\leqslant \frac{2}{3}\cdot\left(\frac{3}{2}\right)^{3n/4}
            \frac{\sqrt{\rho_{\sigma}(\ZZ^n)}}{(1-e^{-1/3\sigma^2})^{3/2}}
            +\frac{1}{2}\cdot\frac{1}{1-e^{-1/2\sigma^2}}\frac{1}{\sqrt{\rho_\sigma(\ZZ^n)}}\\
        &=\frac{2}{3}\cdot\left(\frac{3}{2}\right)^{3n/4}
            \frac{\sqrt{\rho_{\sigma}(\ZZ^n)}}{(1-e^{-1/3\sigma^2})^{3/2}}\\
        &\hspace{2ex}\times\left(1+\frac{3}{4}\cdot\left(\frac{2}{3}\right)^{3n/4}
            \frac{(1-e^{-1/3\sigma^2})^{3/2}}{1-e^{-1/2\sigma^2}}\frac{1}{\rho_\sigma(\ZZ^n)}\right)\\
        &\leqslant\frac{2}{3}\cdot\left(\frac{3}{2}\right)^{3n/4}
            \frac{\sqrt{\rho_{\sigma}(\ZZ^n)}}{(1-e^{-1/3\sigma^2})^{3/2}}
            \left(1+\frac{3}{4}\cdot\left(\frac{2}{3}\right)^{3n/4}
            \frac{1}{\rho_\sigma(\ZZ^n)}\right)\\
        &\leqslant\frac{2}{3}\cdot\left(\frac{3}{2}\right)^{3n/4}
            \frac{\sqrt{\rho_{\sigma}(\ZZ^n)}}{(1-e^{-1/3\sigma^2})^{3/2}}
            \left(1+\frac{3}{4}\cdot
            \left(\frac{2^{1/4}}{3^{3/4}\sigma\sqrt{\pi}}\right)^n\right)
            &&\text{by \eqref{eq:lower_bound_rho_Z^n}}\\
        &\leqslant\frac{7}{6}\cdot\left(\frac{3}{2}\right)^{3n/4}
            \frac{\sqrt{\rho_{\sigma}(\ZZ^n)}}{(1-e^{-1/3\sigma^2})^{3/2}}
            &&\hspace{-3cm}\text{when }\sigma\geqslant \sqrt[4]{\frac{2}{27\pi^2}}.
            \numberthis\label{eq:upper_bound_Gc_unsimplified}
\end{align*}
% It can be shown by a tedious function analysis (replace $\sigma$ by $1/x$ and differentiate)
% that $\frac{(1-e^{-1/3\sigma^2})^{3/2}}{1-e^{-1/2\sigma^2}}\leqslant \frac{2}{3\sqrt{3}\sigma}$.
% It follows by \eqref{eq:lower_bound_rho_Z^n} that
% \begin{align*}
%     G^{qc}(D_{\ZZ^n,\sigma})
%         &\leqslant \frac{2}{3}\cdot\left(\frac{3}{2}\right)^{3n/4}
%             \frac{\sqrt{\rho_{\sigma}(\ZZ^n)}}{(1-e^{-1/3\sigma^2})^{3/2}}
%             \left(1+
%             \frac{1}{2\sqrt{3}\sigma}\left(\frac{2^{3/4}}{3^{3/4}\sqrt{2\pi}\sigma}\right)^n\right)
% \end{align*}

\subsection{Bound on $H(D_{\ZZ^n,\sigma})$}
Finally, we estimate the entropy of this distribution as follows:
\begin{align*}
    H(D_{\ZZ^n,\sigma})
        &=-\sum_{x\in\ZZ^n}D_{\ZZ^n,\sigma}(x)\log_2(D_{\ZZ^n,\sigma}(x))\\
        &=\frac{1}{2\sigma^2\log(2)\rho_\sigma(\ZZ)}
            \sum_{x\in\ZZ^n}\rho_\sigma(x)\cdot \|x\|^2
            +\frac{\log_2\rho_\sigma(\ZZ^n)}{\rho_\sigma(\ZZ^n)}
            \sum_{x\in\ZZ^n}\rho_\sigma(x)\\
        &=\frac{1}{2\sigma^2\log(2)\rho_\sigma(\ZZ^n)}
            \sum_{x\in\ZZ^n}\rho_\sigma(x)\cdot\|x\|^2
            +\log_2\rho_\sigma(\ZZ^n).
\end{align*}
Now note that
\begin{align*}
    \sum_{x\in\ZZ^n}\rho_\sigma(x)\cdot\|x\|^2
    &=\sum_{k=1}^n\sum_{x\in\ZZ^n}x_k^2\prod_{j=1}^k\rho_\sigma(x_j)\\
    &=\sum_{k=1}^n{\left(\sum_{x\in\ZZ}\rho_\sigma(x)\right)}^{n-1}
        \sum_{x\in\ZZ}x^2\rho_\sigma(x)\\
    &=n\,{\rho_\sigma(\ZZ)}^{n-1}\sum_{x\in\ZZ}f_\sigma(x)
\end{align*}
where $f_\sigma(x)=x^2\rho_\sigma(x)$. Hence, since $\rho_s(\ZZ^n)=\rho_s(\ZZ)^n$,
\begin{equation}\label{eq:formula_H_intermediate}
    H(D_{\ZZ^n,\sigma})=
        \frac{n}{2\sigma^2\log(2)\rho_\sigma(\ZZ)}
            \sum_{x\in\ZZ}f_\sigma(x)
            +n\log_2\rho_\sigma(\ZZ).
\end{equation}
It is not hard to check that the Fourier transform of $f_\sigma$ is
\[
    \widehat{f_\sigma}(x)=e^{-2\pi^2\sigma^2x^2}\sigma^3\sqrt{2\pi}\left(1-4\pi^2\sigma^2x^2\right).
\]
Therefore, by the Poisson summation formula
\begin{align*}
    \sum_{x\in\ZZ}f_\sigma(x)
        &=\sigma^3\,\sqrt{2\pi}\left(\sum_{x\in\ZZ}e^{-2\pi^2\sigma^2x^2}
            -4\pi^2\sigma^2\sum_{x\in\ZZ}e^{-2\pi^2\sigma^2x^2}x^2\right)\\
        &=\sigma^3\,\sqrt{2\pi}\left(\rho_{1/2\pi\sigma}(\ZZ)
            -4\pi^2\sigma^2\sum_{x\in\ZZ}e^{-2\pi^2\sigma^2x^2}x^2\right)\\
        &=\sigma^3\,\sqrt{2\pi}\left(\frac{1}{\sigma\sqrt{2\pi}}\rho_{\sigma}(\ZZ)
            -4\pi^2\sigma^2\sum_{x\in\ZZ}e^{-2\pi^2\sigma^2x^2}x^2\right)
            &&\text{by \eqref{eq:lower_bound_rho_Z^n}}\\
        &=\sigma^2\rho_{\sigma}(\ZZ)
            -4\sqrt{2\pi}\pi^2\sigma^5\sum_{x\in\ZZ}e^{-2\pi^2\sigma^2x^2}x^2
            \label{eq:sum_of_f_sigma}
            \numberthis
\end{align*}
Hence, we have that
\begin{align*}
    H(D_{\ZZ^n,\sigma})
        &\leqslant
            \frac{n}{2\log(2)}
            +n\log_2\rho_\sigma(\ZZ)
        &&\text{by \eqref{eq:formula_H_intermediate}
            and \eqref{eq:sum_of_f_sigma}}\\
        &\leqslant
            \frac{n}{2\log(2)}
            +n\log_2(\sigma\sqrt{2\pi}\coth(\pi^2\sigma^2))
        &&\text{by \eqref{eq:lower_bound_rho_Z^n} and
            \eqref{eq:upper_bound_rho_Z^n}.}
            \numberthis
\end{align*}
Now check that for any $\alpha>0$,
\[
    \sum_{x\in\ZZ}e^{-\alpha x^2}x^2
        \leqslant 2\sum_{n=1}^\infty e^{-\alpha n}n^2
        =\frac{2e^\alpha(1+e^\alpha)}{(e^\alpha-1)^3}
        \leqslant 4e^{-\alpha}.
\]
Hence, for $\alpha=2\pi^2\sigma^2$,
\[
    4\sqrt{2\pi}\pi^2\sigma^5\sum_{x\in\ZZ}e^{-2\pi^2\sigma^2x^2}x^2
        \leqslant
        16\sqrt{2\pi}\pi^2\sigma^5 e^{-2\pi^2\sigma^2}.
\]
From this, \eqref{eq:formula_H_intermediate} and \eqref{eq:sum_of_f_sigma}
we conclude that,
\begin{align*}
    H(D_{\ZZ^n,\sigma})
        &\geqslant
            \frac{n}{2\log(2)}
            +n\log_2\rho_\sigma(\ZZ)
            -\frac{n\cdot 16\sqrt{2\pi}\pi^2\sigma^5
            e^{-2\pi^2\sigma^2}}{2\sigma^2\log(2)\rho_\sigma(\ZZ)}\\
        &=
            \frac{n}{2\log(2)}
            +n\log_2\rho_\sigma(\ZZ)
            -n\frac{8\sqrt{2\pi}\pi^2\sigma^3
            e^{-2\pi^2\sigma^2}}{\log(2)\rho_\sigma(\ZZ)}\\
        &\geqslant
            \frac{n}{2\log(2)}
            +n\log_2\rho_\sigma(\ZZ)
            -n\frac{8\sqrt{2\pi}\pi^2\sigma^3
            e^{-2\pi^2\sigma^2}}{\log(2)\sigma\sqrt{2\pi}\coth(\pi^2\sigma^2)}
            &&\text{by \eqref{eq:upper_bound_rho_Z^n}}\\
        &=
            \frac{n}{2\log(2)}
            +n\log_2\rho_\sigma(\ZZ)
            -n\frac{8\pi^2\sigma^2
            e^{-2\pi^2\sigma^2}}{\log(2)\coth(\pi^2\sigma^2)}.
\end{align*}
Therefore,
\begin{equation}\label{eq:lower_bound_2pH}
    2^{H(D_{\ZZ^n,\sigma})}
        \geqslant e^{n/2}\rho_\sigma(\ZZ)^n e^{-n\frac{8\pi^2\sigma^2
            e^{-2\pi^2\sigma^2}}{\coth(\pi^2\sigma^2)}}.
\end{equation}

\subsection{Relation between $H(D_{\ZZ^n,\sigma})$, $G(D_{\ZZ^n,\sigma})$ and
$G^{qc}(D_{\ZZ^n,\sigma})$}

Recall that by \eqref{eq:formula_G_X_upper},
\[
    G(D_{\ZZ^n,\sigma})\leqslant
        \frac{2^{n}\rho_{\sigma}(\ZZ)^n}{1-e^{-1/2\sigma^2}}.
\]
Therefore, by \eqref{eq:lower_bound_2pH},
\[
    \frac{2^{H(D_{\ZZ^n,\sigma})}}{G(D_{\ZZ^n,\sigma})}
        \geqslant(1-e^{-1/2\sigma^2})\left(\frac{\sqrt{e}}{2a(\sigma)}\right)^n,
    \qquad
    a(\sigma)=e^{8\pi^2\sigma^2
            e^{-2\pi^2\sigma^2}\tanh(\pi^2\sigma^2)}.
\]
All asymptotics in what follows are as $\sigma\to\infty$. We introduce $\alpha=2\pi^2\sigma^2$
to simplify calculations.
Recall that $\tanh(\pi^2\sigma^2)=1-2e^{-\alpha}+o(e^{-\alpha})$. Hence,
\[
    8\pi^2\sigma^2
            e^{-2\pi^2\sigma^2}\tanh(\pi^2\sigma^2)
    =4\alpha e^{-\alpha}(1-2e^{-\alpha}+o(e^{-\alpha})).
\]
Note that this quantity goes to $0$ as $\alpha\to\infty$.
It follows that
\begin{align*}
    a(\sigma)
        &=e^{4\alpha e^{-\alpha}(1-2e^{-\alpha}+o(e^{-\alpha}))}\\
        &=1+4\alpha e^{-\alpha}(1-2e^{-\alpha}+o(e^{-\alpha}))+o(\alpha e^{-\alpha})\\
        &=1+4\alpha e^{-\alpha}+o(\alpha e^{-\alpha}).
\end{align*}
%\todo[inline]{A better computation shows that $a(\sigma)=1+(4\alpha+3)e^{-\alpha}+O(e^{-2\alpha})$
%but I don't know if this useful.}
A similar analysis holds for $G^{qc}(D_{\ZZ^n,\sigma})$.
Recall that for $\sigma\geqslant \sqrt[4]{\frac{2}{27\pi^2}}$,
\[
    G^{qc}(D_{\ZZ^n,\sigma})
        \leqslant \frac{7}{6}\cdot\left(\frac{3}{2}\right)^{3n/4}
            \frac{\sqrt{\rho_{\sigma}(\ZZ^n)}}{(1-e^{-1/3\sigma^2})^{3/2}}.
\]
Therefore, for $\sigma\geqslant \sqrt[4]{\frac{2}{27\pi^2}}$, by \eqref{eq:lower_bound_2pH},
\begin{align*}
    \frac{2^{H(D_{\ZZ^n,\sigma})/2}}{G^{qc}(D_{\ZZ^n,\sigma})}
        \geqslant\frac{6}{7}(1-e^{-1/3\sigma^2})^{3/2}
        {\left(\frac{8e}{27a{(\sigma)}^2}\right)}^{n/4}
\end{align*}
where $a(\sigma)$ was defined above.

\subsection{Bounds on $G(D_{\ZZ_q^n,\sigma})$ and $G^{qc}(D_{\ZZ_q^n,\sigma})$}
We now consider the case of the modular discrete Gaussian. We do not know how to order the elements of $\ZZ_q^n$ by decreasing probability so
we will instead consider one possible order and bound the complexity of this order.
This will prove an upper bound on $G(D_{\ZZ_q^n,\sigma})$. Our proof
strategy is to relate the guessing complexity of $D_{\ZZ_q^n,\sigma}$ to
that of $D_{\ZZ^n,\sigma}$ and use the results proven in the previous
subsections.

Let $\tau:\ZZ_q^n\to\NN$ be an ordering of $\ZZ_q^n$ such that for all $\vec{x},\vec{y}\in\ZZ_q^n$,
if $\|\tilde{\vec{x}}\|<\|\tilde{\vec{y}}\|$ then $\tau(\vec{x})<\tau(\vec{y})$. In other words, we order points
of $\ZZ_q^n$ according to the norm of their ``lift'' in
$\set{\tfrac{q-1}{2},\ldots,\tfrac{q-1}{2}}$. Intuitively, when $\sigma$ is much smaller
than $q$, this will be the optimal order but we were not able to show this result.
We, however, do not require an optimal order to obtain an upper bound.
We now have that
\begin{align*}
    G(D_{\ZZ_q^n,\sigma})
        &\leqslant\sum_{x\in\ZZ_q^n}D_{\ZZ_q^n,\sigma}(x)\cdot \tau(x)\\
        &=\frac{1}{\rho_\sigma(\ZZ^n)}\cdot \sum_{\vec{x}\in\ZZ_q^n}\tau(\vec{x})\cdot \sum_{\vec{y}\in \ZZ^n}\rho_\sigma(\vec{x}+q\cdot\vec{y})\\
        &=\frac{1}{\rho_\sigma(\ZZ^n)}\sum_{\vec{x}\in\ZZ_q^n}
            \sum_{\vec{y}\in \ZZ^n}\rho_\sigma(\vec{x}+q\cdot\vec{y})\tau(\widetilde{\vec{x}+q\cdot\vec{y}})
            &&\text{since }\widetilde{\vec{x}+q\cdot\vec{y}}=\tilde{\vec{x}}\\
        &=\frac{1}{\rho_\sigma(\ZZ^n)}\cdot \sum_{\vec{x}\in\ZZ^n}\rho_\sigma(x)\cdot \tau(\widetilde{\vec{x}}).
\end{align*}
Now fix $\vec{x}\in\ZZ^n$. We now observe that by definition of the order $\tau$,
$\tau(\tilde{\vec{x}})<\tau(\tilde{\vec{y}})$ for any $\vec{y}\in\ZZ_q^n$ such that $\|\tilde{\vec{y}}\|>\|\tilde{\vec{x}}\|$.
In particular, choose $\vec{y}$ such that $\|\tilde{\vec{y}}\|=\|\tilde{\vec{x}}\|$ and $\tau(\vec{y})$
is the largest possible among all such $\vec{y}$. Then
\begin{align*}
    \tau(\tilde{\vec{y}})
        &=|\set{\vec{z}\in\ZZ_q^n:\|\tilde{\vec{z}}\|\leqslant \|\tilde{\vec{x}}\|}|\\
        &\leqslant |\set{\vec{z}\in\ZZ^n:\|\vec{z}\|\leqslant \|\tilde{\vec{x}}\|}|\\
        &=N(\|\tilde{\vec{x}}\|^2)
\end{align*}
where we recall that we defined $N(\ell)=\set{\vec{x}\in\ZZ^n:\norm{\vec{x}}\leqslant\sqrt{\ell}}$
and $N(-1)=0$ at the beginning of the proof. Therefore,
\begin{align*}
    G(D_{\ZZ_q^n,\sigma})
        &\leqslant \frac{1}{\rho_\sigma(\ZZ^n)}\sum_{\vec{x}\in\ZZ^n}\rho_\sigma(\vec{x})N(\|\tilde{\vec{x}}\|^2)\\
        &=\frac{1}{\rho_\sigma(\ZZ^n)}
            \sum_{\ell=0}^\infty\sum_{\vec{x}\in\ZZ^n:\|\vec{x}\|^2=\ell}
                \rho_\sigma(\vec{x})N(\|\tilde{\vec{x}}\|^2)\\
        &=\frac{1}{\rho_\sigma(\ZZ^n)}
            \sum_{\ell=0}^\infty\rho_\sigma(\sqrt{\ell})\cdot N(\ell)
                \sum_{\vec{x}\in\ZZ^n:\|\vec{x}\|^2=\ell}1\\
        &=\frac{1}{\rho_\sigma(\ZZ^n)}
            \sum_{\ell=0}^\infty\rho_\sigma(\sqrt{\ell})\cdot N(\ell) \cdot
                (N(\ell)-N(\ell-1))\\
        &\leqslant 2G(D_{\ZZ^n,\sigma})
            &&\text{by \eqref{eq:formula_G_X}.}
\end{align*}
We now consider the case of the quantum guessing complexity.
Virtually the same argument yields that
\begin{align*}
    G^{qc}(D_{\ZZ_q^n,\sigma})
        &\leqslant \frac{1}{\rho_\sigma(\ZZ^n)}\sum_{\vec{x}\in\ZZ^n}\rho_\sigma(\vec{x})
            \sqrt{N(\|\tilde{\vec{x}}\|^2)}\\
        &=\frac{1}{\rho_\sigma(\ZZ^n)}
            \sum_{\ell=0}^\infty\rho_\sigma(\sqrt{\ell})\cdot \sqrt{N(\ell)}
                \sum_{\vec{x}\in\ZZ^n:\|\vec{x}\|^2=\ell}1\\
        &=\frac{1}{\rho_\sigma(\ZZ^n)}
            \sum_{\ell=1}^\infty\rho_\sigma(\sqrt{\ell})\cdot \sqrt{N(\ell)}\cdot
                (N(\ell)-N(\ell-1)).
\end{align*}
Now recall by \eqref{eq:formula_Gqc_X} that
\[
    G^{qc}(D_{\ZZ^n,\sigma})
        =\sum_{\ell=0}^\infty
            \frac{\rho_\sigma(\sqrt{\ell})}{\rho_\sigma(\ZZ^n)}
                \sum_{i=N(\ell-1)+1}^{N(\ell)}\sqrt{i}.
\]
Furthermore, since $\sqrt{\cdot}$ is an increasing function, it is not hard to see that
for all $a,b\in\NN$,
\[
    \sum_{i=a+1}^b\sqrt{i}\geqslant\int_{a}^b\sqrt{x}\,\mathrm{d}x=\frac{2}{3}(b^{3/2}-a^{3/2}).
\]
Therefore, for any $\ell\in\NN$,
\[
    \sum_{i=N(\ell-1)+1}^{N(\ell)}\sqrt{i}
        \geqslant \frac{2}{3}\left(N(\ell)^{3/2}-N(\ell-1)^{3/2}\right).
\]
But check that $N(\ell-1)\leqslant N(\ell)$ so that
\[
    \sqrt{N(\ell)}(N(\ell)-N(\ell-1))
        \leqslant {N(\ell)}^{3/2}-{N(\ell-1)}^{3/2}
        \leqslant \frac{3}{2}\sum_{i=N(\ell-1)+1}^{N(\ell)}\sqrt{i}.
\]
It then easily follows that
\[
    G^{qc}(D_{\ZZ_q^n,\sigma})
        \leqslant \frac{3}{2}\frac{1}{\rho_\sigma(\ZZ^n)}
            \sum_{\ell=1}^\infty\rho_\sigma(\sqrt{\ell})
            \sum_{i=N(\ell-1)+1}^{N(\ell)}\sqrt{i}
        = \frac{3}{2}G^{qc}(D_{\ZZ^n,\sigma}).
\]
This finishes the proof. \qed{}

\section{Proofs for \cref{sec:quantum_dual_attack}}\label{sec:proofs-for-algorithm}

\subsection{Proof of \cref{th:analysis_quantum_alg}}

%\begin{proof}
Below, we will establish the following claims:
\begin{enumerate}
\item[(1)]
  For all $\tilde{\vec{s}}_\enum$,
  the oracle $\hat{\mathcal{O}}$ inside $\mathcal{O}(\tilde{s}_\enum)$
  is such that, for all $\tilde{\vec{s}}_\fft$, with probability at least
  $9/10$, if
  $F_L(\tilde{\vec{s}}_\enum,\tilde{\vec{s}}_\fft)>(1+2\eta)\cdot C$ then
  $\hat{\mathcal{O}}(\tilde{\vec{s}}_\fft)=1$ and
  if $F_L(\tilde{\vec{s}}_\enum,\tilde{s}_\fft)\leqslant C$
  then
  $\hat{\mathcal{O}}(\tilde{\vec{s}}_\fft)=0$.
\item[(2)]
  For all $\tilde{\vec{s}}_\enum$, with probability at least $9/10$,
  if there exists $\tilde{\vec{s}}_\fft$ such that $F_L(\tilde{\vec{s}}_\enum,\tilde{\vec{s}}_\fft)>(1+2\eta)\cdot C$
  then $\mathcal{O}(\tilde{\vec{s}}_\enum)=1$.
\item[(3)]
  For all $\tilde{\vec{s}}_\enum$, with probability at least
  % $1-p^{k_\fft}\delta$,
  $9/10$,
  if $F_L(\tilde{\vec{s}}_\enum,\tilde{\vec{s}}_\fft)\leqslant C$ for all $\tilde{\vec{s}}_\fft$
  then $\mathcal{O}(\tilde{\vec{s}}_\enum)=0$.
\item[(4)] With probability at least $9/10$, if
  the algorithm returns $\tilde{\vec{s}}_\enum\neq\bot$
  then there exists $\tilde{\vec{s}}_\fft$ such that
  $F_L(\tilde{\vec{s}}_\enum,\tilde{\vec{s}}_\fft)>C$.
\end{enumerate}
We start by establishing the result as following from the claims and then establish these claims below. Let $x$ be the output of
the algorithm. By claim (4), if $x\neq\bot$ then, with probability at least $9/10$,
there exist $\tilde{\vec{s}}_\fft$ such that $F_L(\tilde{\vec{s}}_\enum,\tilde{\vec{s}}_\fft)>C$.
Therefore, $x\in S_{C}^{L}$. Hence, this proves that $x\in S_{C}^{L}\cup\set{\bot}$
with probability at least $9/10$.
Now assume that $S_{(1+2\eta)C}^{L}\neq\varnothing$ and let
$\tilde{\vec{s}}_\enum\in S_{(1+2\eta)C}^{L}$.
Then by claim (2), with probability at least $9/10$, $\mathcal{O}(\tilde{\vec{s}}_\enum)=1$
so the algorithm will not return $\bot$, i.e.~$x\neq\bot$.

\noindent \textbf{Proof of claim (1)}. Fix $\tilde{\vec{s}}_\enum$ and check that $\hat{\mathcal{O}}(\tilde{\vec{s}}_\fft)$ returns $1$ if and only if
\[
  \mathcal{A}^{\mathcal{O}_W'}((\tilde{\vec{s}}_\enum,\tilde{\vec{s}}_\fft,1))>(1+\eta)\cdot\tfrac{C}{D}.
\]
Now \(\mathcal{O}_W\) is defined in such a way that
\[
  \mathcal{O}_W'(j)=\left(\tfrac{p}{q}\cdot\vec{y}_{j,\enum},
    \left\lfloor \tfrac{p}{q}\cdot \vec{y}_{j,\fft}\right\rceil,\theta-\tfrac{p}{q}\cdot \vec{x}_j^T\cdot \vec{b} \right)
\]
where \(\vec{y}_{j,\enum}\) and \(\vec{y}_{j,\fft}\) are defined as expected. Therefore, by \Cref{thm:Quantum_estimation_f_W}, with probability at least $1-\delta$,
\[
  \left|\mathcal{A}^{\mathcal{O}_W'}((\tilde{\vec{s}}_{\enum}, \tilde{\vec{s}}_\fft,1))-f_W((\tilde{\vec{s}}_\enum,\tilde{\vec{s}}_\fft,1))\right|
  \leqslant\varepsilon.
\]
But one checks that
{\small\begin{align*}
         \left\langle \mathcal{O}_W(j),(\tilde{\vec{s}}_\enum, \tilde{\vec{s}}_\fft,1)\right\rangle
         &=\left\langle \left(\frac{p}{q}\cdot\vec{y}_{j,\enum},
           \round{\frac{p}{q}\cdot \vec{y}_{j,\fft}},\theta-\tfrac{p}{q}\cdot\vec{x}_j^T\cdot\vec{b}\right),
           (\tilde{\vec{s}}_\enum, \tilde{\vec{s}}_\fft,1)\right\rangle\\
         &= \frac{p}{q}\cdot\vec{y}_{j,\enum}^T\cdot \tilde{\vec{s}}_\enum
           +\round{\frac{p}{q}\cdot\vec{y}_{j,\fft}}^T\cdot \tilde{\vec{s}}_\fft
           + \theta-\frac{p}{q}\cdot\vec{x}_j^T\cdot \vec{b}.
\end{align*}}
Therefore, \(f_W((\tilde{\vec{s}}_\enum,\tilde{\vec{s}}_\fft,1))\)
{\small\begin{align*}
              &=\frac{1}{D}\,\sum_{j}\cos\left(\frac{2\pi}{p}\left\langle \mathcal{O}_W(j),(\tilde{\vec{s}}_\enum, \tilde{\vec{s}}_\fft,1)\right\rangle\right)\\
              &=\frac{1}{D}\sum_{j}\cos\left(\frac{2\pi}{p}\left(
                \frac{p}{q}\cdot\vec{y}_{j,\enum}^T\cdot \tilde{\vec{s}}_\enum
                +\round{\frac{p}{q}\cdot\vec{y}_{j,\fft}}^T\cdot\tilde{\vec{s}}_\fft
                + \theta-\frac{p}{q}\cdot\vec{x}_j^T\cdot \vec{b}\right)+\frac{2\pi}{p}\cdot\theta\right)\\
            &=\frac{1}{D}\,
                \Re{\left(
                    \sum_{j}\exp\left(
                        \frac{2i\pi}{p}\left(
                        \frac{p}{q}\cdot \vec{y}_{j,\enum}^T\cdot \tilde{\vec{s}}_\enum
                        + \round{\frac{p}{q}\cdot \vec{y}_{j,\fft}}^T\cdot\tilde{\vec{s}}_\fft
                        -\frac{p}{q}\cdot \vec{x}_j^T\cdot \vec{b}\right)+\frac{2i\pi}{p}\cdot\theta
                    \right)
                \right)}\\
            &=\frac{1}{D}\,
                \Re{\left(e^{\frac{2i\pi}{p}\theta}
                    \sum_{j}\exp\left(
                        \frac{2i\pi}{p}\left(
                        \frac{p}{q}\cdot\vec{y}_{j,\enum}^T\cdot\tilde{\vec{s}}_\enum
                        +\round{\frac{p}{q}\cdot\vec{y}_{j,\fft}}^T\cdot\tilde{\vec{s}}_\fft
                        -\frac{p}{q}\cdot \vec{x}_j^T\cdot\vec{b}\right)
                    \right)
                \right)}\\
            &=\frac{1}{D}\, F_L(\tilde{\vec{s}}_\enum,\tilde{\vec{s}}_\fft)
\end{align*}}
since $\theta$ was computed so that $\psi(\tilde{\vec{s}}_\fft)=e^{-\tfrac{2i\pi}{p}\theta}$.
It follows that, with probability at least $1-\delta$,
\[
  \left|\mathcal{A}^{\mathcal{O}_W'}((\tilde{\vec{s}}_{\enum}, \tilde{\vec{s}}_\fft,1))-\frac{1}{D}F_L(\tilde{\vec{s}}_\enum,\tilde{\vec{s}}_\fft)\right|
  \leqslant \varepsilon=\frac{C}{D}\cdot\eta.
\]
Assume that this inequality holds.
\begin{itemize}
\item If $F_L(\tilde{\vec{s}}_\enum,\tilde{\vec{s}}_\fft)>(1+2\eta)\cdot C$ then
  $\mathcal{A}^{\mathcal{O}_W'}((\tilde{\vec{s}}_\enum,\tilde{\vec{s}}_\fft,1))>(1+2\eta)\cdot\frac{C}{D}-\frac{C}{D}\cdot\eta
  =(1+\eta)\cdot \frac{C}{D}$
  so $\hat{\mathcal{O}}(\tilde{\vec{s}}_\fft)=1$.
\item If $F_L(\tilde{\vec{s}}_\enum,\tilde{\vec{s}}_\fft)\leqslant C$ then
  $\mathcal{A}^{\mathcal{O}_W'}((\tilde{\vec{s}}_\enum,\tilde{\vec{s}}_\fft,1))\leqslant
  \frac{C}{D}+\frac{C}{D}\eta=(1+\eta)\frac{C}{D}$
  so $\hat{\mathcal{O}}(\tilde{\vec{s}}_\fft)=0$.
\end{itemize}

\noindent \textbf{Proof of claim (2)}. Fix $\tilde{\vec{s}}_\enum$. If there exists $\tilde{\vec{s}}_\fft$ such that $F_L(\tilde{\vec{s}}_\enum,\tilde{\vec{s}}_\fft)>(1+2\eta)\cdot C$ then by claim (1), with probability at least
$1-\delta$, $\hat{\mathcal{O}}(\tilde{\vec{s}}_\fft)=1$. It follows by \Cref{thm:hoyer03}
that the search will, with probability at least $9/10$, return $i\neq\bot$
and therefore $\mathcal{O}(\tilde{\vec{s}}_\enum)$ will return 1.

\noindent \textbf{Proof of claim (3)}. For $\mathcal{O}(\tilde{\vec{s}}_\enum)$ to return $0$,
it is sufficient to have $\hat{\mathcal{O}}(\tilde{\vec{s}}_\fft)=0$ for all $\tilde{\vec{s}}_\fft$.
By claim (1), $\hat{\mathcal{O}}(\tilde{\vec{s}}_\fft)=0$ with probability at least $1-\delta$
when $F_L(\tilde{\vec{s}}_\enum,\tilde{\vec{s}}_\fft)\leqslant C$. Hence, by \Cref{thm:hoyer03},
the search algorithm will return $\bot$ with probability $9/10$ and
$\mathcal{O}(\tilde{\vec{s}}_\enum)=0$. Note here that there is no need for a union bound
because of \Cref{thm:hoyer03}.
% the probability that
% $\hat{\mathcal{O}}(\tilde{\vec{s}}_\fft)=0$ for all $\tilde{\vec{s}}_\fft$ is at least
% $(1-\delta)^{p^{k_\fft}}\geq 1-p^{k_\fft}\delta$ by definition of $\delta$.

\noindent \textbf{Proof of claim (4)}. For the algorithm to return $\tilde{\vec{s}}_\enum$, with probability
$9/10$, we must have $\mathcal{O}(\tilde{\vec{s}}_\enum)=1$. By claim (3), with probability at
least
% $1-p^{k_\fft}\delta$,
$9/10$,
this can only happen if
$F_L(\tilde{\vec{s}}_\enum,\tilde{\vec{s}}_\fft)>C$ for some $\tilde{\vec{s}}_\fft$.
Therefore the probability that the algorithm returns $\tilde{\vec{s}}_\enum$ such that
$F_L(\tilde{\vec{s}}_\enum,\tilde{\vec{s}}_\fft)\leqslant C$ for all $\tilde{\vec{s}}_\fft$
is bounded by $1/10$, by \Cref{lem:quantum_find_first}.
% $1-(1-p^{k_\fft}\delta)^{q^{k_\enum}}
% \leqslant q^{k_\enum}p^{k_\fft}\delta
% \leqslant\frac{2}{3}$
% by the choice of $\delta$.
This finishes the proof. \qed{}
% \end{proof}

\subsection{Proof of \cref{lem:correctness_quantum}}

%\begin{proof}
Recall that for any list $L$ and any $x>0$, we defined

\[
  S_x^L=\set{\tilde{\vec{s}}_\enum:\exists\ \tilde{\vec{s}}_\fft,
    F_L(\tilde{\vec{s}}_\enum,\tilde{\vec{s}}_\fft)>x}.
\]
In the proof of~\cite[Theorem~5.2]{Matzov22}, it is shown that for any threshold\footnote{The proof assumes a particular value of $C$ but the first three lines of the derivation in~\cite[Theorem~5.2]{Matzov22} hold for any value of $C$, which we call $X$ here.} $X$,
\[
  \Prob{L}{F_L(\vec{s}_\enum,\vec{s}_\fft)>X}
  \geq \Phi\left(\phi_{\mathrm{fp}}(\mu)
    +\phi_{\mathrm{fn}}(\mu) -\frac{X}{\sqrt{D_{\text{arg}}\cdot D}}\right).
\]
and that for any $\tilde{\vec{s}}_\enum\neq \vec{s}_\enum$, any $\tilde{\vec{s}}_\fft$
and any threshold $Y$,
\[
  \Prob{L}{F_L(\tilde{\vec{s}}_\enum,\tilde{\vec{s}}_\fft)>Y}
  \leqslant 1-\Phi\left(\frac{Y}{\sqrt{D_{\text{arg}}\cdot D}}\right).
\]
We are going to apply those inequalities to $X=(1+2\eta)\cdot C$ and $Y=C$.
The second inequality, by the choice of $C$, gives that
\[
  \Prob{L}{F_L(\tilde{\vec{s}}_\enum,\tilde{\vec{s}}_\fft)>C}
  \leqslant 1-\Phi\left(\frac{C}{\sqrt{D_{\text{arg}}\cdot D}}\right)
  =\frac{\mu}{2N_\enum(\vec{s}_\enum)\cdot p^{k_\fft}}.
\]
The number of guesses of $\tilde{\vec{s}}_\enum$ before reaching $\vec{s}_\enum$ is $N_\enum(\vec{s}_\enum)$
in the classical case. However note that \Cref{lem:quantum_find_first} may call
the oracle on more entries than $N_\enum(\vec{s}_\enum)$. Specifically,
\Cref{lem:quantum_find_first} guarantees that the oracle will only call the oracle
on the first $2N_\enum(\vec{s}_\enum)$ entries (with constant probability).
Therefore, by a union bound,
\begin{equation}\label{eq:bad_case}
  \Prob{L}{F_L(\tilde{\vec{s}}_\enum,\tilde{\vec{s}}_\fft)\leqslant C\text{ for the first }2N_\enum(\vec{s}_\enum)\text{ values of }\tilde{\vec{s}}_\enum}
  \geq1-\mu.
\end{equation}
On the other hand,
\begin{align*}
  \Prob{L}{F_L(\vec{s}_\enum,\vec{s}_\fft)>(1+2\eta)C}
  &\geq \Phi\left(\phi_{\text{fp}}(\mu)+\phi_{\text{fn}}(\mu)
    -(1+2\eta)\frac{C}{\sqrt{D_{\text{arg}}\cdot D}}\right)\\
  &=\Phi\left(\phi_{\text{fp}}(\mu)+\phi_{\text{fn}}(\mu)
    -(1+2\eta)\phi_{\text{fp}}(\mu)\right)\\
  &=\Phi\left(\phi_{\text{fn}}(\mu)
    -2\eta\phi_{\text{fp}}(\mu)\right).
\end{align*}
It is easy to check by taking the derivative that $\Phi$, the cdf of the normal distribution, satisfies the following
inequality for all $y\geq 0$:
\[
  \Phi(y)\geq 1-\frac{e^{-y^2/2}}{\sqrt{\pi}}.
\]
Furthermore, $\Phi$ is an increasing function so $\Phi^{-1}$ is also increasing.
Hence,
\[
  \Phi^{-1}(1-x)\leqslant \sqrt{-2\ln(\pi x)}
\]
for all $x\leqslant\frac{1}{\sqrt{\pi}}$. Now recall that
\[
  \phi_{\text{fp}}(\mu)=\Phi^{-1}\left(1-\frac{\mu}{2N_\enum(\vec{s}_\enum)p^{k_\fft}}\right),
  \qquad
  \phi_{\text{fn}}(\mu)=\Phi^{-1}\left(1-\frac{\mu}{2}\right).
\]
Therefore,
\[
  \phi_{\text{fp}}(\mu)
  \leqslant \sqrt{-2\ln\frac{\pi\mu}{2N_\enum(\vec{s}_\enum)p^{k_\fft}}}.
  % \leqslant \sqrt{-\ln\frac{\pi\mu}{2}}+
  % \sqrt{\ln(N_\enum(\vec{s}_\enum)p^{k_\fft})}.
\]
From this we get that $\phi_{\text{fp}}(\mu)$ is a polynomial factor in all the relevant parameters. Now observe that by the integral definition of $\Phi()$,
\begin{align*}
  \Phi\left(\phi_{\text{fn}}(\mu)-2\eta\phi_{\text{fp}}(\mu)\right)
  &=\Phi\left(\phi_{\text{fn}}(\mu)\right)
    -\frac{1}{\sqrt{2\pi}}
    \int_{\phi_{\text{fn}}(\mu)-2\eta\phi_{\text{fp}}(\mu)}^{\phi_{\text{fn}}(\mu)}
    e^{-t^2/2}\mathrm{d}t\\
  &\geq
    1-\frac{\mu}{2}-\frac{2\eta}{\sqrt{2\pi}}\phi_{\text{fp}}(\mu)\\
  &\geq
    1-\frac{3\mu}{4}
\end{align*}
when
\[
  \eta\leqslant\frac{\sqrt{2\pi}\mu}{8\phi_{\text{fp}}(\mu)}.
\]
Therefore, by \Cref{th:analysis_quantum_alg}, with probability at least
$1-\frac{3\mu}{4}$, we have that $S^L_{(1+2\eta)C}\neq\varnothing$ so the algorithm
returns an element from $S^L_C$.
Furthermore, by \Cref{eq:bad_case}, with probability at least $1-\mu$,
this element must be
$\vec{s}_\enum$ because all the other elements satisfy
$F_L(\tilde{\vec{s}}_\enum,\tilde{\vec{s}}_\fft)\leqslant C$.
Therefore, by a union bound, the probability that the algorithm returns $\vec{s}_\enum$
is at least $1-\tfrac{3\mu}{4}-\mu\geq1-2\mu\geq1-\nu$.
This concludes the proof. \qed{}
%\end{proof}

\subsection{Proof of \cref{lem:correctness_quantum}}

%\begin{proof}
Recall that for any list $L$ and any $x>0$, we defined

\[
  S_x^L=\set{\tilde{\vec{s}}_\enum:\exists\ \tilde{\vec{s}}_\fft,
    F_L(\tilde{\vec{s}}_\enum,\tilde{\vec{s}}_\fft)>x}.
\]
In the proof of~\cite[Theorem~5.2]{Matzov22}, it is shown that for any threshold\footnote{The proof assumes a particular value of $C$ but the first three lines of the derivation in~\cite[Theorem~5.2]{Matzov22} hold for any value of $C$, which we call $X$ here.} $X$,
\[
  \Prob{L}{F_L(\vec{s}_\enum,\vec{s}_\fft)>X}
  \geq \Phi\left(\phi_{\mathrm{fp}}(\mu)
    +\phi_{\mathrm{fn}}(\mu) -\frac{X}{\sqrt{D_{\text{arg}}\cdot D}}\right).
\]
and that for any $\tilde{\vec{s}}_\enum\neq \vec{s}_\enum$, any $\tilde{\vec{s}}_\fft$
and any threshold $Y$,
\[
  \Prob{L}{F_L(\tilde{\vec{s}}_\enum,\tilde{\vec{s}}_\fft)>Y}
  \leqslant 1-\Phi\left(\frac{Y}{\sqrt{D_{\text{arg}}\cdot D}}\right).
\]
We are going to apply those inequalities to $X=(1+2\eta)\cdot C$ and $Y=C$.
The second inequality, by the choice of $C$, gives that
\[
  \Prob{L}{F_L(\tilde{\vec{s}}_\enum,\tilde{\vec{s}}_\fft)>C}
  \leqslant 1-\Phi\left(\frac{C}{\sqrt{D_{\text{arg}}\cdot D}}\right)
  =\frac{\mu}{2N_\enum(\vec{s}_\enum)\cdot p^{k_\fft}}.
\]
The number of guesses of $\tilde{\vec{s}}_\enum$ before reaching $\vec{s}_\enum$ is $N_\enum(\vec{s}_\enum)$
in the classical case. However note that \Cref{lem:quantum_find_first} may call
the oracle on more entries than $N_\enum(\vec{s}_\enum)$. Specifically,
\Cref{lem:quantum_find_first} guarantees that the oracle will only call the oracle
on the first $2N_\enum(\vec{s}_\enum)$ entries (with constant probability).
Therefore, by a union bound,
\begin{equation}\label{eq:bad_case}
  \Prob{L}{F_L(\tilde{\vec{s}}_\enum,\tilde{\vec{s}}_\fft)\leqslant C\text{ for the first }2N_\enum(\vec{s}_\enum)\text{ values of }\tilde{\vec{s}}_\enum}
  \geq1-\mu.
\end{equation}
On the other hand,
\begin{align*}
  \Prob{L}{F_L(\vec{s}_\enum,\vec{s}_\fft)>(1+2\eta)C}
  &\geq \Phi\left(\phi_{\text{fp}}(\mu)+\phi_{\text{fn}}(\mu)
    -(1+2\eta)\frac{C}{\sqrt{D_{\text{arg}}\cdot D}}\right)\\
  &=\Phi\left(\phi_{\text{fp}}(\mu)+\phi_{\text{fn}}(\mu)
    -(1+2\eta)\phi_{\text{fp}}(\mu)\right)\\
  &=\Phi\left(\phi_{\text{fn}}(\mu)
    -2\eta\phi_{\text{fp}}(\mu)\right).
\end{align*}
It is easy to check by taking the derivative that $\Phi$, the cdf of the normal distribution, satisfies the following
inequality for all $y\geq 0$:
\[
  \Phi(y)\geq 1-\frac{e^{-y^2/2}}{\sqrt{\pi}}.
\]
Furthermore, $\Phi$ is an increasing function so $\Phi^{-1}$ is also increasing.
Hence,
\[
  \Phi^{-1}(1-x)\leqslant \sqrt{-2\ln(\pi x)}
\]
for all $x\leqslant\frac{1}{\sqrt{\pi}}$. Now recall that
\[
  \phi_{\text{fp}}(\mu)=\Phi^{-1}\left(1-\frac{\mu}{2N_\enum(\vec{s}_\enum)p^{k_\fft}}\right),
  \qquad
  \phi_{\text{fn}}(\mu)=\Phi^{-1}\left(1-\frac{\mu}{2}\right).
\]
Therefore,
\[
  \phi_{\text{fp}}(\mu)
  \leqslant \sqrt{-2\ln\frac{\pi\mu}{2N_\enum(\vec{s}_\enum)p^{k_\fft}}}.
  % \leqslant \sqrt{-\ln\frac{\pi\mu}{2}}+
  % \sqrt{\ln(N_\enum(\vec{s}_\enum)p^{k_\fft})}.
\]
From this we get that $\phi_{\text{fp}}(\mu)$ is a polynomial factor in all the relevant parameters. Now observe that by the integral definition of $\Phi()$,
\begin{align*}
  \Phi\left(\phi_{\text{fn}}(\mu)-2\eta\phi_{\text{fp}}(\mu)\right)
  &=\Phi\left(\phi_{\text{fn}}(\mu)\right)
    -\frac{1}{\sqrt{2\pi}}
    \int_{\phi_{\text{fn}}(\mu)-2\eta\phi_{\text{fp}}(\mu)}^{\phi_{\text{fn}}(\mu)}
    e^{-t^2/2}\mathrm{d}t\\
  &\geq
    1-\frac{\mu}{2}-\frac{2\eta}{\sqrt{2\pi}}\phi_{\text{fp}}(\mu)\\
  &\geq
    1-\frac{3\mu}{4}
\end{align*}
when
\[
  \eta\leqslant\frac{\sqrt{2\pi}\mu}{8\phi_{\text{fp}}(\mu)}.
\]
Therefore, by \Cref{th:analysis_quantum_alg}, with probability at least
$1-\frac{3\mu}{4}$, we have that $S^L_{(1+2\eta)C}\neq\varnothing$ so the algorithm
returns an element from $S^L_C$.
Furthermore, by \Cref{eq:bad_case}, with probability at least $1-\mu$,
this element must be
$\vec{s}_\enum$ because all the other elements satisfy
$F_L(\tilde{\vec{s}}_\enum,\tilde{\vec{s}}_\fft)\leqslant C$.
Therefore, by a union bound, the probability that the algorithm returns $\vec{s}_\enum$
is at least $1-\tfrac{3\mu}{4}-\mu\geq1-2\mu\geq1-\nu$.
This concludes the proof. \qed{}
%\end{proof}

\section{Quantum algorithm w/o QRACM for the FFT threshold problem}\label{sec:open_question_calc}

Here, for simplicity, we assume that we want
to decide if
$|{f_L}(\vec{x})|>\delta^+$, under the promise
that $|{f_L}(\vec{x})|\notin[\delta^-,\delta^+]$.
In the original problem, we were using the real part.
We focus on the \PromiseSparseFFTThreshold{} problem. Assume that we can create the superposition
\begin{equation}\label{eq:superposition_for_fft}
    \ket{\psi}=\frac{1}{\sqrt{Z}}\sum_jw_j\ket{\vec{u}_j}.
\end{equation}
We then perform a QFT on $\ket{\psi}$ to obtain
\[
    \ket{\widehat{\psi}}=\frac{1}{\sqrt{Z\cdot |G|}}\sum_{\vec{x}\in G}{f_L}(\vec{x})\ket{\vec{x}}.
\]
If we measure $\ket{\widehat{\psi}}$, we obtain $\vec{x}$ with probability
\[
    p(\vec{x})=\frac{|\widehat{f_L}(\vec{x})|^2}{|G|\cdot Z}.
\]
% Note that this indeed is a probability since by Parseval's identity (for the DFT),
% \[
%     \sum_{\vec{u}\in G}p(\vec{u})
%     =\frac{1}{Z\cdot |G|}\sum_{\vec{x}\in G}|{f_L}(\vec{x})|^2
%     =\frac{1}{Z}\sum_{\vec{x}\in G}|\vec{u}_j|^2=1.
% \]
Consider the random variable $X$ defined by $\Prob{}{X=\vec{x}}=p(\vec{x})$.
Let us reformulate the problem:
\begin{itemize}
    \item We have access to a sampler for $X$.
    \item We do not know the values of $p(\vec{x})$.
    \item $p(\vec{x})\notin[\eta^-,\eta^+]$ for some given $\eta^-,\eta^+$.
    \item We want to decide if there exists $\vec{x}\in G$ such that $p(\vec{x})>\eta^+$.
\end{itemize}
We can relate the new parameters to the old one as follows:
\begin{align*}
    |{f_L}(\vec{x})|\notin[\delta^-,\delta^+]
    \;
    &\Rightarrow
    \;
    \frac{|{f_L}(\vec{x})|^2}{|G|\cdot Z}
        \notin\left[
            \frac{(\delta^-)^2}{Z\cdot|G|},
            \frac{(\delta^+)^2}{Z\cdot|G|}
            \right]
    \\
    &\Rightarrow
    \;
    \eta^{\pm}=\frac{(\delta^\pm)^2}{Z\cdot|G|}.
\end{align*}
We now consider the following algorithm:
\begin{itemize}
    \item sample $\vec{x}_0,\ldots,\vec{x}_{N-1}$ from $X$,
    \item compute $\tilde{p}(\vec{x})=\tfrac{1}{N}\left|\set{j:x_j=\vec{x}}\right|$ for every $\vec{x}$,
    \item check if $\tilde{p}(\vec{x})>(\eta^-+\eta^+)/2$ for some $\vec{x}$.
\end{itemize}
It is clear that with careful optimisations, this algorithm runs in time $O(N)$, up to some small
polynomial factors. We now need to analyse the success probability of this algorithm.
By the Dvoretzky–Kiefer–Wolfowitz inequality \cite{DKW_ineq},
there exists a constant $A$ such that for all $\varepsilon>0$,
\[
    \Prob{}{\sup_{\vec{x}\in G}|p(\vec{x})-\tilde{p}(\vec{x})|>\varepsilon}
        \leqslant A\cdot e^{-2N\varepsilon^2}.
\]
Here the probability is over the choice of the $\vec{x}_0,\ldots,\vec{x}_{N-1}$.
A result by Massart \cite{DKW_Massart} shows that $A=2$. For the algorithm to work, we need to
estimate $p(\vec{x})$ within $\pm(\eta^+-\eta^-)/2$ so we let $\varepsilon=(\eta^+-\eta^-)/2$.
If want our algorithm to succeed with probability at least $1-\nu$, we need to choose $N$ such
that
\[
    A\cdot e^{-2N\varepsilon^2}=\nu
    \qquad
    \Leftrightarrow
    \qquad
    N=\frac{\ln(A)-\ln(\nu)}{2\varepsilon^2}.
\]
\bigskip

 \section{Source code}\label{sec:source-code}

 Our code relies on the modified LWE Estimator from~\cite{JMC:AlbPlaSco15} available at \url{https://github.com/malb/lattice-estimator/}. We also \textattachfile{code/estimates.py}{attached our code} as an attachment to this PDF\@. Not all PDF viewers support this feature. If the reader's PDF reader does not then e.g.~\texttt{pdfdetach} can be used to extract the source code without having to copy and paste it by hand. To run our code, run \lstinline{git clone https://github.com/malb/lattice-estimator/} in the directory where \lstinline{estimates.py} is located.

 \lstinputlisting[language=python]{code/estimates.py}

\end{document}

%%% Local Variables:
%%% mode: latex
%%% TeX-master: t
%%% TeX-engine: xetex
%%% End: